\documentclass[letterpaper]{article} % DO NOT CHANGE THIS
\usepackage[preprint]{aaai2027}
\usepackage[hyphens]{url}  % DO NOT CHANGE THIS
\usepackage{graphicx} % DO NOT CHANGE THIS
\usepackage{natbib}  % DO NOT CHANGE THIS AND DO NOT ADD ANY OPTIONS TO IT
\usepackage{caption} % DO NOT CHANGE THIS AND DO NOT ADD ANY OPTIONS TO IT
\usepackage{amsmath}
\usepackage{amssymb}
\usepackage{algorithm}
\usepackage{algorithmic}
\usepackage{comment}
\usepackage{cleveref}

\newtheorem{theorem}{Theorem}[section]
\newtheorem{definition}[theorem]{Definition}
\newtheorem{lemma}[theorem]{Lemma}
\newtheorem{corollary}[theorem]{Corollary}
\newtheorem{proposition}[theorem]{Proposition}
\newtheorem{assumption}[theorem]{Assumption}
\newtheorem{remark}[theorem]{Remark}
\newenvironment{proof}[1][Proof]{\par\noindent\textit{#1.}\ }{\hfill$\square$\par}

\newcommand{\cC}{\mathcal{C}}
\newcommand{\cF}{\mathcal{F}}
\newcommand{\cM}{\mathcal{M}}
\newcommand{\cO}{\mathcal{O}}
\newcommand{\cR}{\mathcal{R}}

\newcommand{\cX}{\mathcal{X}}
\newcommand{\bbZ}{\mathbb{Z}}
\newcommand{\bpi}{\boldsymbol{\pi}}
\newcommand{\bs}{\boldsymbol{s}}
\newcommand{\bv}{\boldsymbol{v}}
\newcommand{\score}{\mathsf{score}}
\newcommand{\scoreplu}{\mathsf{score}_{\mathrm{plu}}}
\newcommand{\tscoreplu}{\widetilde{\scoreplu}}
\newcommand{\activecandidates}{\mathcal{A}}
\newcommand{\mov}{\operatorname{MoV}}
\newcommand{\move}{\operatorname{Move}}

\newcommand{\bribery}{\ensuremath{\text{\scshape Bribery}}}
\newcommand{\alg}{\mathcal{M}}
\newcommand{\eps}{\varepsilon}
\newcommand{\OPT}{\operatorname{OPT}}
\newcommand{\ox}{\bar{x}}
\newcommand{\oell}{\bar{\ell}}
\newcommand{\coloneq}{\mathrel{\mathop:}=}

\newcommand{\Paren}[1]{\left(#1\right)}

\newcommand{\tred}[1] {\textcolor{red}{#1}}
\newcommand{\tblue}[1] {\textcolor{blue}{#1}}
\newcommand{\torange}[1] {\textcolor{orange}{#1}}

\newcommand{\scmaximin}{\mathsf{score}_{\mathrm{maximin}}}
\newcommand{\tscmaximin}{\widetilde{\mathsf{score}}_{\mathrm{maximin}}}
\newcommand{\argmin}{\mathrm{argmin}}

\iftrue
\newcommand{\pasin}[1]{\tblue{[Pasin: #1]}}
\newcommand{\vorapong}[1]{\torange{[Vorapong: #1]}}
\newcommand{\phanu}[1]{\tred{[Phanu: #1]}}
\else
\newcommand{\pasin}[1]{}
\newcommand{\vorapong}[1]{}
\newcommand{\phanu}[1]{}
\fi

\title{Decisive Margins in Differentially Private Voting}
\author {
    Quentin Hillebrand\textsuperscript{\rm 1},
    Pasin Manurangsi\textsuperscript{\rm 2},
    Vorapong Suppakitpaisarn\textsuperscript{\rm 3},
    Phanu Vajanopath\textsuperscript{\rm 4}
}
\affiliations {
    \textsuperscript{\rm 1}University of Copenhagen, Denmark\\
    \textsuperscript{\rm 2}Google Research, Thailand\\
    \textsuperscript{\rm 3}The University of Tokyo, Japan\\
    \textsuperscript{\rm 4}University of Wrocław, Poland
}

\begin{document}

\maketitle

\begin{abstract}
Differential privacy protects individual voting records by injecting randomness
into the published outcome, but this noise can lead to erroneous results when an
election is close.  We study how precise central differential privacy and local
differential privacy can be for common voting rules, including Plurality,
Condorcet, Maximin, Plurality with Runoff, and Single Transferable Vote (STV).  Our measure
of precision is the margin of victory needed for a private mechanism to return
the same winner as the non-private rule with high probability.  We give private
algorithms for publishing the winner and prove upper bounds on the required
margin for these algorithms.  We also prove lower bounds showing that nontrivial
margins are necessary; many of these bounds match the corresponding upper bounds
up to logarithmic factors.  For STV, an information-theoretic upper bound matches
the lower bound, but we prove that this guarantee cannot be achieved in
polynomial time unless NP $\subseteq$ BPP.  This gives a rare example of a computationally tractable task that becomes intractable
when one simultaneously requires differential privacy and utility.
\end{abstract}

\section{Introduction}
\label{sec:introduction}

Differential privacy has become a standard formal framework for protecting
individual-level data in statistical analysis.  It requires that the published
result of an algorithm reveal only limited information about any one individual's
record, even to an adversary with arbitrary side information.  This guarantee is
especially important when the data being aggregated is sensitive, because it
allows useful population-level information to be released while limiting the
privacy risk to each participant~\cite{DworkMNS06,DworkRoth14}.

Local differential privacy (LDP) strengthens this protection by removing the
need for a trusted central curator.  In the local model, which goes back to randomized
response and was later formalized in the differential-privacy literature, each
individual randomizes her own record before sending any message to the
aggregator~\cite{Warner65,kasiviswanathan2011can,ErlingssonPK14}.  This makes
LDP particularly attractive in settings where the data collector should not
see raw user data, at the cost of injecting more noise than in the central
model.

Voting is a natural setting where these privacy guarantees are important.  A
single ranking can reveal political views, strategic preferences, or
institutional priorities, so publishing election information without privacy can
expose sensitive individual preferences.  Prior work has therefore studied
privacy for voting from several perspectives, including LDP vote
aggregation, differentially private rank aggregation, the privacy of commonly
used voting rules, and differentially private voting
rules~\cite{WangDWDLLNH19,AlabiGKM22,HillebrandMSV26,LiuLXZ20,LiLXCW23}.  For voting,
however, privacy noise has a particularly sharp effect: a small amount of noise
can change the declared winner when the election is close.  Thus, even if the
non-private voting rule is easy to evaluate, a private mechanism may return an
erroneous winner unless the election is sufficiently decisive.

\paragraph{Our contributions.} This paper asks how decisive an election must be before private winner
publication is reliable.  We study this question for central differential
privacy and local differential privacy, focusing on common voting rules such as
Plurality, Condorcet, Maximin, Plurality with Runoff, and Single Transferable
vote (STV).
Our measure of precision is the \emph{margin of victory} (MoV)~\cite{Xia12,BHATTACHARYYA2021103476}, the minimum
number of voters whose preferences must be changed to alter the result.  

In other words, we study the following question:
\vspace{0.1cm}
\begin{quote}
{\em What is the smallest $\mov$ required for an election, so that the correct outcome can be differentially privately published with high probability?}
\end{quote}
\vspace{0.2cm}

\begin{table*}[ht]
\centering
\footnotesize
\setlength{\tabcolsep}{2pt}
\renewcommand{\arraystretch}{1.08}
\begin{tabular}{@{}p{0.27\textwidth}p{0.11\textwidth}p{0.11\textwidth}p{0.11\textwidth}p{0.22\textwidth}p{0.11\textwidth}@{}}
\hline
\textbf{Setting} & \textbf{Plurality} & \textbf{Maximin} & \textbf{Runoff} & \textbf{STV} & \textbf{Condorcet} \\
\hline
Central DP upper bounds & $O(\log m)$ & $O(\log m)$ & $O(\log m)$ & $O(\log m)$\textsuperscript{\dag} & $O(\log m)$ \\
Central DP lower bounds & $\Omega(\log m)$ & $\Omega(\log m)$ & $\Omega(\log m)$ & $\Omega(\log m)$ & $\Omega(\log m)$ \\
Non-interactive LDP upper bound & $\tilde{O}(\sqrt{n})$ & $\tilde{O}(m\sqrt{n})$ & $\tilde{O}(m\sqrt{n})$ & $\tilde{O}(2^{m/2}\sqrt{n})$ & $\tilde{O}(m\sqrt{n})$ \\
Non-interactive LDP lower bound & $\tilde{\Omega}(\sqrt{n})$ & $\tilde{\Omega}(m\sqrt{n})$ & $\tilde{\Omega}(m\sqrt{n})$ & $\tilde{\Omega}(\min\{n/m,2^{\Omega(\sqrt{m})}  \sqrt{n}\})$ & $\tilde{\Omega}(m\sqrt{n})$ \\
Interactive LDP upper bound & $\tilde{O}(\sqrt{n})$ & $\tilde{O}(m\sqrt{n})$ & $\tilde{O}(\sqrt{n})$ & $\tilde{O}(m^2\sqrt{n})$ & $\tilde{O}(\sqrt{mn})$ \\
Interactive LDP lower bound & $\tilde{\Omega}(\sqrt{n})$ & $\tilde{\Omega}(\sqrt{n})$ & $\tilde{\Omega}(\sqrt{n})$ & $\tilde{\Omega}(\sqrt{n})$ & $\tilde{\Omega}(\sqrt{n})$ \\
\hline
\end{tabular}
\caption{Summary of margin-of-victory thresholds for private winner publication. Here $n$ is the number of voters and $m$ is the number of candidates.
Dependencies on the privacy budgets and failure probabilities are omitted for brevity.
%Privacy budgets and upper-bound failure probabilities are treated as constants. 
The STV central-DP upper bound marked by \textsuperscript{\dag} is information-theoretic; under NP $\nsubseteq$ BPP, it cannot be achieved in polynomial time.}
\label{tab:results-summary}
\end{table*}

Table~\ref{tab:results-summary} summarizes the main results.
Our first contribution is a set of private algorithms for publishing election
winners together with upper bounds on the margin they require.  In the central
model, the exponential mechanism gives a general logarithmic-MoV guarantee.  For
rules where the relevant margin quantities can be computed efficiently, this
also yields efficient algorithms; for Maximin, we use a polynomial-time
approximation for the exact MoV.  In the local model, where each voter privatizes
their own message before aggregation, we give non-interactive and interactive
protocols based on noisy plurality scores, pairwise statistics, and STV
elimination statistics.

Our second contribution is a set of lower bounds showing that nontrivial margins
are necessary.  In the central model, packing arguments give matching
logarithmic lower bounds for broad classes of rules.  In the local model,
testing and marginal-release reductions imply that the usefulness threshold must
grow with $\sqrt{n}$, up to logarithmic factors, for several rules.  Many of
these lower bounds match the corresponding upper bounds up to logarithmic or
rule-dependent factors.

For LDP, a highlight of our results is ``the cost of non-interactivity'', a provable privacy--interactivity
trade-off for Condorcet, Plurality with Runoff, and STV.  For these rules,
non-interactive algorithms require a larger margin of victory than interactive
algorithms.  The separation is especially striking for STV: interaction
improves the dependence on the number of candidates superpolynomially.  
In the
nonsaturated regime, the non-interactive lower bound is
$2^{\Omega(\sqrt{m})}\sqrt{n}$, whereas the interactive upper bound is
$\widetilde{O}(m^2\sqrt{n})$.  This is a significant departure
from the non-private setting, where a voter can report her entire ranking in a
single round.

For central DP, the most remarkable case among our results is once again STV.  Information-theoretically, the generic central
DP upper bound matches the lower bound.  Computationally, however,
this precision is out of reach: assuming NP $\nsubseteq$ BPP, no polynomial-time
central-DP algorithm can achieve the same high-precision guarantee for STV (\Cref{thm:stv-no-efficient-dp}). This gives a rare example of a computationally tractable task that becomes intractable when one simultaneously requires differential
privacy and high-precision recovery of the non-private result. Previously, such results are only known in specific settings such as synthetic data generation and high-probability learning tasks~\cite{dwork2009complexity,UllmanV20,GeorgievH22,GhaziGKKKM26}. We hope that our proof--which relates NP-hardness of approximation to the intractability of DP algorithms--can contribute to further study on this broader line of research.

%This result also suggests a broader research line on computational
%intractability induced by differential privacy.  Related phenomena are known for
%private data release and synthetic data generation~\cite{dwork2009complexity,UllmanV20},
%but the source of hardness here is different: publishing the STV winner becomes
%intractable because the sensitivity-like margin quantity governing
%high-precision private recovery is hard to approximate.  We suspect that similar
%arguments can turn inapproximability of rule-specific stability or sensitivity
%measures into new DP intractability results.
\vspace{0.1cm}
\noindent
\textbf{Organization.} We next provide preliminaries. Then, we present our results for central DP in \Cref{sec:central-dp}. Our LDP results are separated into two sections, \Cref{sec:noninteractive-ldp} for non-interactive LDP and \Cref{sec:interactive-ldp} for interactive LDP. Due to space constraints, we provide proof sketches in the main body and full proofs are deferred to the appendix.

\section{Preliminaries}

\subsection{Voting}

%Let $\cC$ be the candidate set, $m=|\cC|$, and write
Let $\cC=\{c_1,\ldots,c_m\}$ be the candidate set, and $\Pi_{\cC}$ be the set of all %strict 
rankings of the candidates.  A \emph{preference profile} is a tuple
$\bpi=(\pi_1,\ldots,\pi_n)\in\Pi_{\cC}^n$, where $\pi_i$ is voter $i$'s ranking.
A %deterministic 
\emph{voting rule} is a function
$r:\Pi_{\cC}^*\to \cC\cup\{\bot\}$ that returns either a winning candidate or
$\bot$ when no winner exists.

Two profiles $\bpi=(\pi_1,\ldots,\pi_n)$ and
$\bpi'=(\pi'_1,\ldots,\pi'_n)$ are \emph{$k$-neighbors} if they differ in at most
$k$ voters' rankings: there exists $S\subseteq[n]$ with $|S|\le k$ such that
$\pi_i=\pi'_i$ for every $i\notin S$.  We call $1$-neighbors simply
\emph{neighboring profiles}.

\subsubsection{Voting rules.}

\label{subsubsec:votingRules}

We consider the following voting rules. %throughout the paper.
\begin{itemize}
\item \textbf{Plurality: }
For a profile $\bpi=(\pi_1,\ldots,\pi_n)$ and each candidate $c\in\cC$, let
$\mathsf{score}_{\mathrm{plu}}(c)$
be the number of voters who rank $c$ first.  Plurality outputs a candidate
$c^\star \in \arg\max_{c\in\cC}\mathsf{score}_{\mathrm{plu}}(c)$.

\item \textbf{Scoring-based rules: } A scoring rule is parameterized by a vector
$\bs=(s_1,\ldots,s_m)$ with $s_1\ge\cdots\ge s_m\ge0$.  A candidate ranked in
position $j$ by a voter receives score $s_j$, and the candidate with the largest
total score wins.  Formally,
$\score_{\bs}(c; \bpi)
    := \sum_{i=1}^n \sum_{j=1}^m s_j \cdot \mathbf{1}[\pi_i(j)=c]$.

Plurality is the scoring rule with $\bs=(1,0,\ldots,0)$.  Other examples include
Borda, with $\bs=(m-1,m-2,\ldots,0)$, and Veto, with
$\bs=(1,1,\ldots,1,0)$.

\item \textbf{Maximin: }
For candidates $a,b\in\cC$, let $N_{\bpi}(a\succ b)$ be the number of voters who
rank $a$ above $b$, and define the pairwise margin
$D_{\bpi}(a,b)=N_{\bpi}(a\succ b)-N_{\bpi}(b\succ a)$.  The maximin score of
$a$ is
    $\mathsf{score}_{\mathrm{maximin}}(a)=\min_{b\in\cC\setminus\{a\}}D_{\bpi}(a,b)$.
Maximin outputs a candidate with maximum maximin score.

\item \textbf{Condorcet: }
The Condorcet rule outputs a candidate $c$ if $D_{\bpi}(c,a)>0$ for every
$a\in\cC\setminus\{c\}$; if no such candidate exists, it outputs $\bot$.

\item \textbf{Plurality with Runoff: }
Plurality with Runoff proceeds in two rounds.
In the first round, we compute each candidate's plurality score
$\mathsf{score}_{\mathrm{plu}}(c)$
and select the two candidates with the highest plurality scores; %(breaking ties by a fixed tie-breaking rule); 
denote them by $c_1,c_2$.
In the second round, we conduct a head-to-head majority vote between $c_1$ and $c_2$:
we output $c_1$ if $N_{\bpi}(c_1\succ c_2) > N_{\bpi}(c_2\succ c_1)$, and otherwise output $c_2$.

\item \textbf{Single Transferable Vote (STV): } To define STV, it is convenient to introduce a notation: for every set $S \subseteq \cC$, let $\pi|_S \in \Pi_{S}$ denote the preference $\pi$ when restricted to $S$.
The STV rule eliminates one candidate per round.  Initially,
$\activecandidates=\cC$.  In each round, we count first-place votes in the
restricted profile
$(\pi_1|_{\activecandidates},
\ldots,
\pi_n|_{\activecandidates})$ and eliminate a candidate with the fewest such
votes.  After
$m-1$ eliminations, the remaining candidate wins.
\end{itemize}

In all rules above, we assume that ties are broken based on some prespecified order on the candidates. For our work, a specific order is unimportant, as long as it is consistent.

\subsubsection{Margin of Victory.}

\label{subsubsect:MoV}

Margin of victory (e.g. \cite{Xia12}), a standard measure for decisiveness (or stability) of an election, will play a central role in our work. Specifically, the margin of victory is the number of voters required to change the outcome of an election, as defined more formally below.

%We use the standard definition of margin of victory from \cite{Xia12},

\begin{definition}[Margin of Victory (MoV)]
Given a preference profile $\bpi$ and a voting rule $r$, the margin of
victory, denoted by $\mov(\bpi,r)$, is the smallest integer $k$ for which
there exists a $k$-neighbor $\bpi'$ of $\bpi$ such that
$r(\bpi') \ne r(\bpi)$.
\end{definition}

%Theorems~5 and~7 of 
\citet{Xia12} show that $\mov(\bpi,r)$ can be computed in
polynomial time when $r$ is Plurality, Condorcet, a scoring rule, or Plurality with Runoff.
In contrast, it is known that computing $\mov$ is NP-hard for Maximin and STV~\cite{Xia12,BO91}.
%In contrast, computing the margin of victory for Maximin is NP-complete
%\cite[Theorem~3]{Xia12}. A similar NP-hardness is known for STV~\cite{BO91}.  
%Nevertheless, \cite[Theorem~12]{Xia12} gives a
%constant-factor certificate: if $w$ and $w'$ have the largest and second-largest
%maximin scores, respectively, and $r$ is Maximin, then, writing
%$\Delta_{\mathrm{mm}}
%:= \mathsf{score}_{\mathrm{maximin}}(w)
%   - \mathsf{score}_{\mathrm{maximin}}(w')$.
%It is shown that 
%    \begin{equation} \frac{\mov(\bpi,r)}{2}
%    \leq \frac{\Delta_{\mathrm{mm}}}{4}
%    \leq \mov(\bpi,r). \label{eqn:2approx}\end{equation}
%For STV, it is NP-complete to distinguish $\mov(\bpi,r)=1$ from
%$\mov(\bpi,r)=2$~\cite{BO91}.  Thus, computing $\mov(\bpi,r)$ exactly for STV is
%also intractable.

As explained below, differential privacy (DP) requires the output to be random. We desire algorithms that output the correct winner with high probability as long as $\mov$ is large. This is captured in the usefulness definition below.
%We define the usefulness in the following definition.

\begin{definition}[$(\tau,\gamma)$-usefulness]
We say that an algorithm $\alg$ is $(\tau, \gamma)$-useful for voting rule
$r$ if, for all $\bpi$ such that $\mov(\bpi, r) \geq \tau$, we have
$\Pr[\alg(\bpi) = r(\bpi)] \geq 1 - \gamma$.

We also refer to $\tau$ as the ($\gamma$-)decisive margin for $\cM$.
\end{definition}

\subsection{Differential Privacy}

We next recall the definition of (central) differential privacy, adapted to our voting setting.

\begin{definition}[DP \cite{DworkMNS06,DworkKMMN06}] \label{def:central-dp}
An algorithm $\alg$ is 
%$(\eps,  \delta)$-DP 
$\eps$-DP
if, for all neighboring profiles
$\bpi,\bpi'$ and all output events $\cO$,
$\Pr[\alg(\bpi) \in \cO] \leq e^{\eps} \cdot \Pr[\alg(\bpi') \in \cO]$. %$ + \delta$.
%When $\delta=0$, we write $\eps$-DP.
\end{definition}

One of the most widely-used tools in DP is the exponential mechanism~\cite{McSherryT07}, which allows us to optimize any score function. The mechanism description and its utility are summarized below.

%We define the exponential mechanism of \cite{McSherryT07} below.

\begin{definition}[Exponential mechanism]\label{def:expo-mech}
Let $\cO$ be a finite set, and let
$q:\Pi_{\cC}^n\!\times\! \cO \to \mathbb{R}$ be a score function.
Define the sensitivity of $q$ as
$\Delta(q) \ :=\ \max_{\bpi\sim\bpi',\,a\in \cO}
\bigl|q(\bpi,a)-q(\bpi',a)\bigr|$ where $\bpi\sim\bpi'$ denotes neighboring profiles.
For $\eps>0$, the exponential mechanism $\alg_{\exp}$ outputs an element
$a\in \cO$ with probability %proportional to
$\Pr[\alg_{\exp}(\bpi)=a]\ \propto\ \exp\!\left(\frac{\eps\cdot q(\bpi,a)}{2\,\Delta(q)}\right)$.
Then $\alg_{\exp}$ is $\eps$-DP.
\end{definition}

%The utility of this mechanism is summarized below.

%We note that the exponential mechanism applies only when the output set $\cO$ is finite, and it requires evaluating the score function $q(\bpi,a)$ for all $a \in \cO$.

\begin{theorem}
%[{\cite[Theorem~3.11]{DworkRoth14}}]
%[Exponential mechanism utility]
\label{thm:expmech-utility}
%For the exponential mechanism,  gives the following guarantee.
%Let $\cO$ be a finite range of outputs. 
Define
$\OPT(\bpi)\ :=\ \max_{a\in\cO} q(\bpi,a)$.
Then, for any $\beta\in(0,1)$, with probability at least $1-\beta$, $q(\bpi,\alg_{\exp}(\bpi)) \ge \OPT(\bpi) - \frac{2\Delta(q)}{\eps}\ln\!\left(\frac{|\cO|}{\beta}\right)$.
\end{theorem}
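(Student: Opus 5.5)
Fix a profile $\bpi$ and write $\OPT=\OPT(\bpi)$ and $t=\frac{2\Delta(q)}{\eps}\ln\!\left(\frac{|\cO|}{\beta}\right)$. The plan is to bound the probability of the bad set $B=\{a\in\cO : q(\bpi,a)<\OPT-t\}$ directly from the closed form of the output distribution in \Cref{def:expo-mech}. We may assume $\Delta(q)>0$, since otherwise the mechanism is undefined. Each output has probability
\[
\Pr[\alg_{\exp}(\bpi)=a]=\frac{\exp\!\left(\frac{\eps q(\bpi,a)}{2\Delta(q)}\right)}{Z},
\qquad Z=\sum_{b\in\cO}\exp\!\left(\frac{\eps q(\bpi,b)}{2\Delta(q)}\right).
\]

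First, lower-bound the normalizer. Let $a^\star$ attain $\OPT$; it exists because $\cO$ is finite. Dropping every term except the one for $a^\star$ gives $Z\ge \exp\!\left(\frac{\eps\OPT}{2\Delta(q)}\right)$.

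Second, upper-bound the numerator of each bad output. For $a\in B$ the exponent is strictly below $\frac{\eps(\OPT-t)}{2\Delta(q)}$. Therefore
\[
\Pr[\alg_{\exp}(\bpi)=a]<\exp\!\left(-\frac{\eps t}{2\Delta(q)}\right)=\frac{\beta}{|\cO|}.
\]

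Third, apply a union bound over $B$. Since $|B|\le|\cO|$, we get $\Pr[\alg_{\exp}(\bpi)\in B]\le|\cO|\cdot\frac{\beta}{|\cO|}=\beta$. The complementary event $q(\bpi,\alg_{\exp}(\bpi))\ge\OPT-t$ therefore holds with probability at least $1-\beta$, which is the claim.

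There is no real obstacle here. The only step requiring care is the choice of $t$: it must be set so that the factor $\exp\!\left(-\frac{\eps t}{2\Delta(q)}\right)$ cancels exactly against $|\cO|/\beta$. The bound could be sharpened by replacing $|\cO|$ with $|\cO|/|\{a:q(\bpi,a)=\OPT\}|$, but the stated form is all that is needed.
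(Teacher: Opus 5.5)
Your proof is correct, and it is the standard argument for this result: you lower-bound the normalizer by the $\OPT$ term, note that each output with score below $\OPT - t$ has probability less than $\beta/|\cO|$, and union-bound over at most $|\cO|$ such outputs. The paper does not reprove this statement; it quotes it from Dwork and Roth (Theorem~3.11), whose proof runs the same way, and your caveat that $\Delta(q)>0$ is needed for the mechanism to be well defined is appropriate.
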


\subsection{Local Differential Privacy}

We use the standard local model of differential privacy
\cite{kasiviswanathan2011can}.  In the non-interactive model, each voter applies a private
randomizer on their own data and send the results to the aggregator.

\begin{definition}[Non-interactive $\eps$-LDP] \label{def:non-interactive-local-dp}
A local randomizer $\mathcal{R}$ satisfies $\eps$-LDP if, for all rankings
$\pi,\pi'$ and all output events $\cO$,
$\Pr[\mathcal{R}(\pi) \in \cO]
\leq e^{\eps}\Pr[\mathcal{R}(\pi') \in \cO]$.
A non-interactive $\eps$-LDP algorithm sends each voter a randomizer fixed
before any messages are observed and receives one message from each voter.  In other words, it
can be written $\alg(\bpi)=\mathcal{A}(\mathcal{R}_1(\pi_1),\ldots,
    \mathcal{R}_n(\pi_n))$,
where each $\mathcal{R}_i$ is an $\eps$-LDP randomizer and $\mathcal{A}$ is
the aggregator.
\end{definition}

On the other hand, the interactive setting allows the voters and the aggregator to perform arbitrary communication protocol. The privacy constraint here is that the transcript of the protocol (which includes all messages from all the parties) has to be differentially private.

\begin{definition}[Interactive $\eps$-LDP] \label{def:interactive-local-dp}
An interactive local protocol accesses the profile only by querying voters.  In
each round, as a function of the transcript so far, it chooses a set of voters and
randomizers to apply to those voters' rankings.  The transcript is the
ordered record of the public randomness, queried voters, selected channels, and
exchanged messages.
%; private randomness appears only through the messages it generates.  
Let $\mathsf{Tr}(\bpi)$ denote the full transcript on profile
$\bpi$.  The protocol
is $\eps$-LDP if, for every voter $i$, every pair of profiles
$\bpi,\bpi'$ differing only in voter $i$'s ranking, and every transcript event
$\cO$, $\Pr[\mathsf{Tr}(\bpi)\in\cO]
    \le e^\eps\Pr[\mathsf{Tr}(\bpi')\in\cO]$.
\end{definition}

%\pasin{The above definition doesn't make sense for interactive algorithms. Need to be fixed.}
%\vorapong{Modified. Please let me know if this version works.} \pasin{I don't think so: The way the local randomizer is defined, it only takes in $\pi$ as the input and works in one step, this is only for non-interactive LDP. For interactive LDP, the randomizers can be more complicated (e.g. can take in the transcript). Maybe it's best to just define the two models separately, and just use the ``protocol'' definition for interactive.} \vorapong{Here is my third trial. I am wondering if this separated definition will make it slightly hard for Section 5 to link with this new definition, but I will try to do that later.}

For the vector releases in Sections~\ref{sec:noninteractive-ldp} and
\ref{sec:interactive-ldp}, we use the
LDP mean-estimation randomizer of~\citet{DuchiJW18}.
%We record the property of their construction that we need.  
We write
$\kappa_\varepsilon=(e^\varepsilon+1)/(e^\varepsilon-1)$, which is $\Theta\Paren{\max\left\{\frac{1}{\eps}, 1\right\}}$ for all $\eps > 0$.
%$O(1/\varepsilon)$ for $0<\varepsilon\le1$, and $O(1)$ otherwise.

\begin{lemma}[\citealt{DuchiJW18}]
\label{lem:djw-randomizer}
There is a constant $C_{\mathrm{DJW}}$ with the following property.
For every $x\in\mathbb{R}^d$ satisfying $\|x\|_2\le r$, the DJW randomizer
$\mathcal{R}_{\mathrm{DJW}}(x;r,\varepsilon)$ is $\varepsilon$-LDP, and
is unbiased, i.e.
    $\mathbb{E}[\mathcal{R}_{\mathrm{DJW}}(x;r,\varepsilon)]=x$.
Moreover, for every unit vector $u$, the centered projection
    $\left\langle
    \mathcal{R}_{\mathrm{DJW}}(x;r,\varepsilon)-x,u
    \right\rangle$
is sub-Gaussian with parameter at most
$C_{\mathrm{DJW}}\kappa_\varepsilon r$.
\end{lemma}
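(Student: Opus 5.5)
We would prove the lemma by recalling the explicit sphere-sampling construction of \citet{DuchiJW18} and verifying the three properties directly. By scaling, it suffices to treat $r=1$. Apply the mechanism to $x/r$ and multiply the output by $r$. Privacy is unaffected by this rescaling, and both unbiasedness and the sub-Gaussian parameter scale linearly in $r$.

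\textbf{The construction} for $\|v\|_2\le 1$ has two stages. First, randomly round $v$ to a unit vector $\tilde v$: set $\tilde v=v/\|v\|_2$ with probability $\tfrac12+\tfrac12\|v\|_2$, and $\tilde v=-v/\|v\|_2$ otherwise. If $v=0$, take $\tilde v$ uniform on the sphere. This gives $\mathbb{E}[\tilde v]=v$. Second, draw $Z$ uniformly from the unit sphere $S^{d-1}$, restricted to the hemisphere $\{z:\langle z,\tilde v\rangle>0\}$ with probability $e^\eps/(e^\eps+1)$ and to the opposite hemisphere otherwise. Output $B\cdot Z$, where $B$ is chosen so that the output is unbiased.

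\textbf{Privacy.} Conditional on $\tilde v$, the density of $Z$ with respect to the uniform measure on the sphere takes only the two values $2e^\eps/(e^\eps+1)$ and $2/(e^\eps+1)$. Their ratio is $e^\eps$. The output density for any input is a mixture over $\tilde v$ of such densities, so it also lies pointwise in the interval between these two values. Hence the density ratio between any two inputs is at most $e^\eps$, which gives $\eps$-LDP.

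\textbf{Unbiasedness.} By rotational symmetry, $\mathbb{E}[Z\mid\tilde v]=\frac{e^\eps-1}{e^\eps+1}\,c_d\,\tilde v$, where $c_d=\mathbb{E}|\langle U,e_1\rangle|$ for $U$ uniform on the sphere. The standard Beta/Gamma computation gives $c_d=\Theta(1/\sqrt d)$. Setting $B=\kappa_\eps/c_d=\Theta(\kappa_\eps\sqrt d)$ then yields $\mathbb{E}[BZ]=\mathbb{E}[\tilde v]=v$.

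\textbf{Sub-Gaussian projections.} Fix a unit vector $u$. For $U$ uniform on $S^{d-1}$, the marginal $\langle U,u\rangle$ has tail $\Pr[|\langle U,u\rangle|>t]\le 2e^{-dt^2/2}$, either from spherical concentration or from the explicit Beta density. Conditioned on $\tilde v$, $Z$ is $U$ conditioned on a hemisphere, which is an event of probability $1/2$. Conditioning on it at most doubles tail probabilities, and mixing over hemispheres and over $\tilde v$ preserves the bound. So $\Pr[|\langle BZ,u\rangle|>s]\le 4\exp(-s^2/(2B^2/d))$, which means $\langle BZ,u\rangle$ is sub-Gaussian with parameter $O(B/\sqrt d)=O(\kappa_\eps)$. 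Subtracting the mean $\langle v,u\rangle$, which has absolute value at most $1\le\kappa_\eps$, changes the sub-Gaussian parameter by at most a constant factor. This gives the constant $C_{\mathrm{DJW}}$.

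\textbf{Main obstacle.} The delicate step is the constant bookkeeping: obtaining $c_d=\Theta(1/\sqrt d)$, so that $B/\sqrt d=\Theta(\kappa_\eps)$ holds uniformly in $d$, and confirming that restricting to a hemisphere costs only a constant factor in the tails. We would handle both with the explicit density $\propto(1-t^2)^{(d-3)/2}$ of $\langle U,e_1\rangle$.
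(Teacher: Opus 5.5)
Your proposal is correct and matches the paper's approach. The paper does not reprove this lemma; it imports it from \citet{DuchiJW18}, and what you describe is exactly that sampling mechanism and its analysis: unbiased rounding to the sphere, a hemisphere chosen with odds $e^\varepsilon$, rescaling by $B=\kappa_\varepsilon/c_d$, and spherical-cap concentration, where conditioning on a probability-$1/2$ hemisphere costs only a factor $2$ in the tails. The bookkeeping you flag as the main obstacle, the uniform bound $c_d\sqrt{d}\ge\sqrt{2/\pi}$, follows from log-convexity of $\Gamma$, so the parameter $O(\kappa_\varepsilon r)$ is indeed independent of $d$.
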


%The construction first maps $x$ unbiasedly to the radius-$r$ sphere and then
%reports a suitably scaled random point from one of the two hemispheres, choosing
%between the hemispheres with odds $e^\varepsilon$.  The likelihood ratio is
%therefore at most $e^\varepsilon$, while spherical concentration gives the
%sub-Gaussian guarantee in Lemma~\ref{lem:djw-randomizer}.

\section{Central Differential Privacy}
\label{sec:central-dp}

In this section, we describe our algorithms and lower bounds for the central DP model (Definition~\ref{def:central-dp}).

\subsection{Algorithms}

\paragraph{Generic algorithm.} We start by describing a mechanism that works
for any voting rule~$r$.

Our mechanism applies the exponential mechanism
(Definition~\ref{def:expo-mech}) with
$\cO=\cC\cup\{\bot\}$ and score function
\begin{equation}
q(\bpi,a)\;:=\;
\begin{cases}
0, & \text{if } a = r(\bpi),\\
-\mov(\bpi,r), & \text{otherwise.}
\end{cases} \label{eqn:score}
\end{equation}
For neighboring profiles $\bpi$ and $\bpi'$, we have
$|\mov(\bpi,r)-\mov(\bpi',r)|\le1$.  Moreover, $r(\bpi)\ne r(\bpi')$ can occur
only when $\mov(\bpi,r)\le1$.  Therefore, $\Delta(q)\leq 1$.
Thus, a direct application of Theorem~\ref{thm:expmech-utility} yields the following.

\begin{theorem} \label{thm:central-em}
For any $\eps > 0$, there exists an $\eps$-DP algorithm that is
$\left(\frac{2 \log((m+1)/\gamma)}{\eps} + 1, \gamma\right)$-useful for all
$\gamma \in (0, 1/2)$. \label{thm:cdp-upper}
\end{theorem}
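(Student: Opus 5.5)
The plan is to run the exponential mechanism of Definition~\ref{def:expo-mech} with output set $\cO=\cC\cup\{\bot\}$, so $|\cO|=m+1$, and score function $q$ from \eqref{eqn:score}. Privacy then follows from Definition~\ref{def:expo-mech} once we confirm $\Delta(q)\le 1$. Usefulness will follow from Theorem~\ref{thm:expmech-utility}, because every output other than $r(\bpi)$ scores $-\mov(\bpi,r)$, which is very negative when the margin is large.

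\textbf{Sensitivity.} Fix neighboring $\bpi,\bpi'$ and an output $a$.

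First, $|\mov(\bpi,r)-\mov(\bpi',r)|\le 1$. Suppose $r(\bpi)=r(\bpi')$. Any $k$-neighbor of $\bpi'$ that changes the outcome is a $(k+1)$-neighbor of $\bpi$ that changes the outcome, so $\mov(\bpi,r)\le\mov(\bpi',r)+1$; the other inequality is symmetric. If instead $r(\bpi)\ne r(\bpi')$, then both margins equal $1$.

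Second, consider the case analysis on $a$.
\begin{itemize}
\item If $r(\bpi)=r(\bpi')$, then $a$ falls in the same branch of \eqref{eqn:score} for both profiles. The difference in $q$ is therefore either $0$ or $|\mov(\bpi,r)-\mov(\bpi',r)|\le1$.
\item If $r(\bpi)\ne r(\bpi')$, both margins equal $1$. The score values lie in $\{0,-1\}$, so the difference is at most $1$.
\end{itemize}
Hence $\Delta(q)\le 1$. We may use $1$ in place of $\Delta(q)$ in the exponent: with $\Delta=1$, a change in $q$ of at most $1$ alters each weight by a factor at most $e^{\eps/2}$ and the normalizer by the same factor. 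This yields $\eps$-DP.

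\textbf{Utility.} Let $\bpi$ satisfy $\mov(\bpi,r)\ge \tau := \frac{2\ln((m+1)/\gamma)}{\eps}+1$. Then $\OPT(\bpi)=0$, attained uniquely at $r(\bpi)$. By Theorem~\ref{thm:expmech-utility} with $\beta=\gamma$ and $\Delta=1$, with probability at least $1-\gamma$ the output $a$ satisfies
\[
q(\bpi,a)\ge -\frac{2}{\eps}\ln\frac{m+1}{\gamma} > -\mov(\bpi,r).
\]
The strict inequality holds because $\tau$ exceeds the additive error by $1$. This rules out every $a\ne r(\bpi)$, so $\Pr[\alg(\bpi)=r(\bpi)]\ge1-\gamma$. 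The restriction $\gamma<1/2$ is not needed; it only keeps the statement in the meaningful regime.

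The only delicate step is the sensitivity bound in the case $r(\bpi)\ne r(\bpi')$, where the winner itself changes. It is resolved by observing that both margins are then exactly $1$. Computational efficiency is not claimed, since evaluating $q$ requires computing $\mov$.
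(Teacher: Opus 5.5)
Your proposal is correct and follows the paper's proof essentially step for step. You run the exponential mechanism over $\cC\cup\{\bot\}$ with score $0$ for $r(\bpi)$ and $-\mov(\bpi,r)$ otherwise, and you bound the sensitivity by $1$ because $\mov$ is $1$-Lipschitz and the winner can differ between neighbors only when both margins equal $1$. You then apply Theorem~\ref{thm:expmech-utility}, where the extra $+1$ in the threshold strictly separates every wrong outcome. Your additional details, proving the Lipschitz property and using $1$ in place of $\Delta(q)$, are valid and fill in steps the paper asserts without proof.
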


The theorem is only information-theoretic: it is efficient only when the MoV score can
be evaluated efficiently.  As discussed in Section~\ref{subsubsect:MoV}, this is
the case for Plurality, Condorcet, scoring rules, and Plurality with Runoff.

\paragraph{An efficient algorithm for Maximin.}

For Maximin, exact MoV computation is NP-complete.  Nevertheless, the
approximation of~\cite[Theorem~12]{Xia12} gives a useful surrogate.  If $w$
and $w'$ have the largest and second-largest maximin scores, respectively, and
$r$ is Maximin, write
$\Delta_{\mathrm{mm}}
:= \mathsf{score}_{\mathrm{maximin}}(w)
   - \mathsf{score}_{\mathrm{maximin}}(w')$.
Accounting for our fixed tie-breaking rule adds at most one vote to the usual
two-approximation guarantee, so
    \begin{equation} \frac{\mov(\bpi,r)-1}{2}
    \leq \frac{\Delta_{\mathrm{mm}}}{4}
    \leq \mov(\bpi,r). \label{eqn:2approx}\end{equation}
%Instead, we use the
%constant-factor surrogate from~\eqref{eqn:2approx}.  Let $w$ and $w'$ be
%candidates with the largest and second-largest maximin scores, respectively, and
%write $\Delta_{\mathrm{mm}}$ for their score gap.  
Importantly, this surrogate $\Delta_{\mathrm{mm}}$ can be efficiently computed. Thus, we may apply the exponential mechanism to the surrogate score, defined below, instead.
\begin{equation}
q'(\bpi,a)\;=\;
\begin{cases}
0, & \text{if } a = r(\bpi),\\
-\Delta_{\mathrm{mm}}/4, & \text{otherwise.}
\end{cases} \label{eqn:score-maximin}
\end{equation}

Changing one vote changes every maximin score by at most $2$, and hence changes
$\Delta_{\mathrm{mm}}$ by at most $4$ while the winner remains fixed.  If the
winner changes, both the old and new top-two gaps are at most $4$, since the old
and new winners' scores can move by at most $2$ each.  It follows in both cases that
$\Delta(q')\leq1$.  Thus, as above, the exponential mechanism yields an
efficient algorithm for Maximin:

\begin{theorem}
For any $\eps > 0$, there exists a polynomial-time $\eps$-DP algorithm for
Maximin that is
$\left(\frac{4 \log(m/\gamma)}{\eps} + 2, \gamma\right)$-useful for all
$\gamma \in (0, 1/2)$. \label{thm:cdp-upper-maximin}
\end{theorem}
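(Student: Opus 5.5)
We will run the exponential mechanism of Definition~\ref{def:expo-mech} with output set $\cO=\cC$ and the surrogate score $q'$ from \eqref{eqn:score-maximin}, using sensitivity bound $1$. Privacy follows from Definition~\ref{def:expo-mech} once we know $\Delta(q')\le 1$. Efficiency holds because all pairwise margins $D_{\bpi}(a,b)$ can be tabulated in $O(nm^2)$ time. From them we obtain all maximin scores, the winner $r(\bpi)$ under the fixed tie-breaking, the runner-up, and $\Delta_{\mathrm{mm}}$. Sampling from $|\cO|=m$ outcomes is then immediate. Maximin always outputs a candidate, so $\bot$ is not needed and $|\cO|=m$.

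The first step is the sensitivity bound, which we treat in two cases. If $r(\bpi)=r(\bpi')$, then $q'(\cdot,a)$ is $0$ for the common winner. For every other $a$, the difference in $q'$ is $|\Delta_{\mathrm{mm}}(\bpi)-\Delta_{\mathrm{mm}}(\bpi')|/4$. Changing one vote changes each $D(a,b)$ by at most $2$, hence each maximin score by at most $2$. The top score and the second-largest score are order statistics of these, so each moves by at most $2$, and the gap moves by at most $4$. If the winner changes, let $w=r(\bpi)$ and $w'=r(\bpi')$. The gap is nonnegative on both sides. In $\bpi'$, $w'$ weakly beats $w$, so $\mathrm{sc}_{\bpi}(w)-\mathrm{sc}_{\bpi}(w')\le 4$ and therefore $\Delta_{\mathrm{mm}}(\bpi)\le 4$. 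Symmetrically, $\Delta_{\mathrm{mm}}(\bpi')\le 4$. Every entry of $q'$ therefore lies in $[-1,0]$ on both profiles, so every difference is at most $1$.

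The tie-breaking detail is the step most likely to need care. It must be consistent with the gap notion, namely $\Delta_{\mathrm{mm}}=0$ when tied. It is also why \eqref{eqn:2approx} carries the ``$-1$''.

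The second step is utility. By Theorem~\ref{thm:expmech-utility} with $\OPT=0$, with probability at least $1-\gamma$ the output $a$ satisfies $q'(\bpi,a)\ge -\frac{2}{\eps}\ln(m/\gamma)$. It therefore suffices that every wrong answer has $-\Delta_{\mathrm{mm}}/4<-\frac{2}{\eps}\ln(m/\gamma)$, i.e.\ $\Delta_{\mathrm{mm}}/4>\frac{2}{\eps}\ln(m/\gamma)$. By the left inequality in \eqref{eqn:2approx}, $\Delta_{\mathrm{mm}}/4\ge(\mov(\bpi,r)-1)/2$. Hence $\mov(\bpi,r)\ge \frac{4\ln(m/\gamma)}{\eps}+2$ gives $\Delta_{\mathrm{mm}}/4\ge \frac{2\ln(m/\gamma)}{\eps}+\frac12$, which is strictly larger than the threshold. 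The event then forces $a=r(\bpi)$, giving $(\frac{4\log(m/\gamma)}{\eps}+2,\gamma)$-usefulness. The slack $+\tfrac12$ handles the boundary equality in Theorem~\ref{thm:expmech-utility}; if that theorem is read with a strict inequality, the $+2$ still suffices.
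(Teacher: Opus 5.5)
Your proposal is correct and follows essentially the same route as the paper. Both run the exponential mechanism on the surrogate $q'$ over $m$ outcomes, bound $\Delta(q')\le 1$ by the same two-case argument (winner fixed versus winner changed), and use the left inequality of \eqref{eqn:2approx} to push every wrong candidate's score to at most $-L-\tfrac12$, where $L=\frac{2}{\eps}\log(m/\gamma)$.
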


\paragraph{Computational barrier for STV.}
For STV, the MoV is NP-hard to compute. Thus, we cannot directly instantiate \Cref{thm:central-em} to get an efficient algorithm. One might hope--via some surrogate score similar to Maximin or some other means--that we can devise an efficient algorithm for STV. Unfortunately, we show that this is impossible under a standard complexity assumption, as stated more precisely below.

%For STV, the generic exponential-mechanism algorithm need not be efficient
%because the MoV score is hard to approximate.  The next theorem strengthens this
%observation: under a standard complexity assumption, no polynomial-time central
%DP algorithm can simultaneously have logarithmic-MoV usefulness and very high
%confidence on high-MoV instances.  The required guarantees are weak enough that
%the inefficient exponential mechanism satisfies them.

%On the other hand, we show that this is impossible for STV, as formalized below.

\begin{theorem} \label{thm:stv-no-efficient-dp}
Suppose that NP $\nsubseteq$ BPP.  For any positive constants
$C_1,C_2,C_3,\eps$ and $\gamma\in(0,1)$ with $C_3>4C_1$, there is no $\eps$-DP
$(nm)^{O(1)}$-time algorithm that is simultaneously
$\left(C_1\log(m)/\eps,\gamma\right)$-useful and
$\left(m^{C_2},1/m^{C_3}\right)$-useful.
\end{theorem}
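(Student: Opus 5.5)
The plan is to reduce an NP-hard promise problem to distinguishing, by repeated sampling, the output distribution of the hypothetical algorithm $\alg$. I would combine the two usefulness guarantees with group privacy. Set $\tau:=C_1\log(m)/\eps$. Suppose $\alg$ is $\eps$-DP, $(\tau,\gamma)$-useful and $(m^{C_2},m^{-C_3})$-useful, and consider an STV profile $\bpi$, whose winner $r(\bpi)$ is computable in polynomial time.

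\emph{No case.} If $\mov(\bpi,r)\ge m^{C_2}$, then $\Pr[\alg(\bpi)\ne r(\bpi)]\le m^{-C_3}$.

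\emph{Yes case.} Suppose instead there is a profile $\bpi'$ with the following properties: it differs from $\bpi$ in at most $t\le 4\tau$ voters, $r(\bpi')\ne r(\bpi)$, and $\mov(\bpi',r)\ge\tau$. Then $\Pr[\alg(\bpi')=r(\bpi')]\ge 1-\gamma$. Group privacy (iterating Definition~\ref{def:central-dp} $t$ times) gives
\[
\Pr[\alg(\bpi)\ne r(\bpi)]\ \ge\ e^{-\eps t}(1-\gamma)\ \ge\ (1-\gamma)\,m^{-4C_1}.
\]
Since $C_3>4C_1$, this is larger than $m^{-C_3}$ by a polynomial factor $m^{C_3-4C_1}(1-\gamma)$.

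\emph{Distinguisher.} Run $\alg$ independently $\mathrm{poly}(m)$ times, say $\Theta(m^{C_3}\log m)$ times, and count how often the output differs from $r(\bpi)$. A Chernoff bound separates the two cases with probability at least $2/3$ in polynomial total time. This places the promise problem ``yes case vs.\ no case'' in BPP. The hypotheses $C_3>4C_1$ and constant $\gamma<1$ are used only here.

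\emph{Reduction.} It then suffices to give a polynomial-time reduction from 3SAT, following the construction of \citet{BO91}, to STV profiles with the following gap. For satisfiable formulas the profile is in the yes case above. For unsatisfiable formulas $\mov(\bpi,r)\ge m^{C_2}$. To build it, I would take the Bartholdi--Orlin instance, in which a single strategic ranking can change the winner iff the formula is satisfiable. I would then replicate every sincere voter $B=\Theta(m^{C_2})$ times, so that every ``structural'' elimination gap becomes at least $B$. Then I would insert a small controlled offset of order $\tau$ at the critical comparisons where the manipulator's ballot acts. Under these offsets, a block of about $2\tau$ identical manipulator ballots encoding a satisfying assignment changes the winner. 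The resulting profile $\bpi'$ keeps every gap at least $\tau$, which gives $\mov(\bpi',r)\ge\tau$. Padding candidates are added so that $\tau=C_1\log m/\eps$ is the right scale and $m$ stays polynomial.

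\emph{Main obstacle.} The hardest step is the no case. The critical comparisons have gaps only of order $\tau$, so I must show that when the formula is unsatisfiable, no coalition of fewer than $m^{C_2}$ voters can change the outcome. The argument would be that flipping any single critical comparison without a globally consistent assignment leads to an elimination cascade that returns the original winner. Escaping this requires overturning some gap of size $B$. Proving this robustness, including that changed voters can also alter later-round transfers, and checking the tie-breaking, is where I expect most of the work. The first thing I would try is to make each variable gadget symmetric, so that either choice at a critical comparison preserves the winner unless every clause gadget is satisfied.
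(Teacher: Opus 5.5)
\paragraph{Verdict.} There is a genuine gap: the step that carries the theorem, NP-hardness of the gap problem, is not proved.

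Your privacy-to-complexity step is correct and is essentially the paper's. Group privacy over at most $4\tau$ changed voters turns $(\tau,\gamma)$-usefulness at $\bpi'$ into a deviation probability of at least $(1-\gamma)m^{-4C_1}$ at $\bpi$. In the NO case, $(m^{C_2},m^{-C_3})$-usefulness caps that probability at $m^{-C_3}$. The paper simply runs $\alg$ $\lfloor m^{C_3}/3\rfloor$ times and answers YES on any deviation; your counting test works too.

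\paragraph{The gap.} Everything after ``it then suffices'' is unproved. The theorem is exactly as strong as the NP-hardness of that promise problem, which in the paper is \Cref{thm:stv}.

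Your NO case is left open, and the intuition you offer for it points the wrong way. In your own design, the block of about $2\tau$ identical ballots that encodes a satisfying assignment in the YES case could equally encode any other assignment. So in the NO case a coalition of size $O(\tau)$ can steer every critical comparison however it likes, and robustness cannot come from ``global consistency'' of the flips. It has to come from the decisive comparison being protected by a $B$-sized gap for \emph{every} unsatisfying assignment. Your closing remark about symmetric variable gadgets restates this goal but gives no mechanism that produces such a gap.

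Your worry about changed voters altering later transfers has a one-line answer. Changing $\kappa$ rankings moves every restricted score $\scoreplu(\bpi|_S,c)$, for every active set $S$, by at most $\kappa$. So any round in which the candidates allowed to be eliminated trail all others by more than $2\kappa$ stays forced, whatever the transfers. By the same count, a gap of $g$ survives only fewer than $g/2$ changes. Hence ``every gap at least $\tau$'' gives $\mov(\bpi',r)\gtrsim\tau/2$, not $\tau$.

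\paragraph{The missing idea.} This is \Cref{prop:elimination-order} together with the soundness argument after it: make both the assignment and the clause check visible in the elimination order.

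\emph{Forced skeleton.} All structural gaps are multiples of a block size $t$. For each variable $x_i$ there are two \emph{exactly tied} selection candidates $z_{x_i},z_{\ox_i}$, and whichever is eliminated first forces the literal candidate $y_\ell$ of that same literal to go next. Hence under any $\kappa<t/2$ changes the skeleton of the elimination order is fixed, and every perturbed profile induces a complete assignment.

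\emph{Clause check.} The decisive round is a one-vote near tie between $w$ and $a$. Each clause contributes $t$ voters ranking its three literal candidates and then $w$. These votes reach $w$ exactly when the induced assignment falsifies the clause.

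\emph{NO case.} Every assignment falsifies some clause, so $w$ beats $a$ by at least $t+1$ in that round and $\mov(\bpi,r)\ge\lceil t/2\rceil$. Taking $t$ a large polynomial gives $\lceil t/2\rceil\ge n^{1-\zeta}\ge m^{C_2}$ when $\zeta=1/(C_2+1)$ and $m\le n^{\zeta}$.

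\emph{YES case.} Replace $k$ of $w$'s first-place supporters by the ballot $z_{\oell_1}\succ\cdots\succ z_{\oell_N}\succ a\succ w$, where the $\ell_i$ are the literals falsified by a satisfying assignment. This breaks every variable tie by $k$ and then puts $a$ ahead of $w$ by $2k-1$. So $r(\bpi')=a$ and $\mov(\bpi',r)\ge\lceil k/2\rceil$ for every $k\le n^{1-\zeta}$.

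No $\tau$-dependent offsets or padding candidates are needed; you just take $k=\lceil 4C_1\log(m)/\eps\rceil$.
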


As stated in the introduction, this is one of the few examples of computational problems for which DP makes it intractable. % Furthermore, the connection between hardness of approximation results and DP  
Below we sketch the proof of this theorem. 

\paragraph{Hardness of approximation of MoV.} As a first step, we show that it is NP-hard to approximate $\mov$ of STV to within a factor of $n^{1 - \zeta}$ for any constant $\zeta > 0$. This factor is (essentially) optimal since a trivial $n$-approximation exists. Moreover, our result is a direct improvement over that of \citet{BO91} who proved an NP-hardness of approximation with a factor of 2. %We provide a formal statement and further discussion of our result in \Cref{sec:non-private-STV}.

\paragraph{Computational hardness for DP algorithm.} Our inapproximability result seems ostensibly unrelated to privacy. Nevertheless, we use this to prove \Cref{thm:stv-no-efficient-dp} as follows.

Assume for the sake of contradiction that an algorithm $\alg$ in \Cref{thm:stv-no-efficient-dp} exists. We use it to estimate the MoV of STV as follows by running $\alg(\bpi,r)$
many times. If every run returns the original winner, then we know that the MoV should be large. Meanwhile, if at least one run returns a different winner, then we know that the MoV should be small. To make this proof work, we need a rather precise set of parameters, which we specify in the appendix.

% The
%full proof appears in Appendix~\ref{app:missing-proof-section-3}.

%Theorem~\ref{thm:stv-no-efficient-dp} gives a natural task that is easy without
%privacy (output $r(\bpi)$) and possible with inefficient privacy
%(Theorem~\ref{thm:cdp-upper}), but unlikely to admit both efficiency and privacy
%with the desired usefulness.  Related efficiency/privacy separations are known
%for private synthetic data generation~\cite{dwork2009complexity,UllmanV20}.

\subsection{Lower Bounds}
We now prove matching central-DP lower bounds using a packing argument
\cite{HardtT10}. Such an argument requires many nearby profiles with different
winners, as formalized below.

\begin{assumption} \label{assumption:packing}
There exists a constant $\kappa>1$ such that, for every $t\in\mathbb{N}$, there
are profiles $\bpi_1,
\ldots,
\bpi_m$ satisfying:
\begin{enumerate}
\item $r(\bpi_i) = c_i$ for all $i \in [m]$, \label{prop:diff-winners}
\item $\mov(\bpi_i, r) \geq t$, \label{prop:label}
\item $\bpi_i, \bpi_{i'}$ are $(\kappa t)$-neighbors for all $i, i' \in [m]$.
\end{enumerate}
\end{assumption}

Under the above assumption, the standard packing lower bound proof \cite{HardtT10} yields the following:

\begin{lemma} \label{lem:packing-generic}
Let $\eps > 0$ and $\gamma \in (0, 0.2)$.
Suppose that $r$ satisfies Assumption~\ref{assumption:packing}. Then, any $\eps$-DP algorithm $\alg$ that is $(\tau, \gamma)$-useful for rule $r$ must satisfy
$\tau \geq \left\lfloor \frac{\ln(0.4m/\gamma)}{\kappa \eps} \right\rfloor$.
\end{lemma}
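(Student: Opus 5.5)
We prove Lemma~\ref{lem:packing-generic} by the standard packing argument. Suppose $\alg$ is $\eps$-DP and $(\tau,\gamma)$-useful, and assume for contradiction that $\tau < \left\lfloor \frac{\ln(0.4m/\gamma)}{\kappa\eps}\right\rfloor$. Set $t := \tau$; if $\tau\le 0$ the usefulness condition applies to every profile and we may take $t=1$ without harm. Apply Assumption~\ref{assumption:packing} with this $t$ to obtain profiles $\bpi_1,\ldots,\bpi_m$ with $r(\bpi_i)=c_i$, $\mov(\bpi_i,r)\ge t=\tau$, and pairwise $(\kappa t)$-neighbors. By usefulness, $\Pr[\alg(\bpi_i)=c_i]\ge 1-\gamma$ for every $i$.

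Next we transfer these probabilities to a single reference profile, say $\bpi_1$. Since $\bpi_i$ and $\bpi_1$ differ in at most $\kappa t$ voters, there is a chain of at most $\lfloor\kappa t\rfloor$ neighboring profiles between them, obtained by changing one voter at a time. Applying Definition~\ref{def:central-dp} along this chain (group privacy) gives
\[
\Pr[\alg(\bpi_1)=c_i]\ \ge\ e^{-\kappa t\eps}\Pr[\alg(\bpi_i)=c_i]\ \ge\ e^{-\kappa t \eps}(1-\gamma).
\]
The events $\{\alg(\bpi_1)=c_i\}$ are disjoint over $i\ne 1$. Summing over $i\neq1$ yields $(m-1)e^{-\kappa t\eps}(1-\gamma)\le \Pr[\alg(\bpi_1)\ne c_1]\le\gamma$. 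Hence $e^{\kappa t\eps}\ge (m-1)(1-\gamma)/\gamma$.

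Finally, we turn this into the stated bound. Using $\gamma<0.2$ gives $1-\gamma\ge 0.8$, and $m-1\ge m/2$ for $m\ge2$. Together these give $(m-1)(1-\gamma)/\gamma\ge 0.4m/\gamma$, so $t\ge \ln(0.4m/\gamma)/(\kappa\eps)$. This contradicts $t=\tau<\lfloor \ln(0.4m/\gamma)/(\kappa\eps)\rfloor$. Equivalently, $\tau\ge \lceil\ln(0.4m/\gamma)/(\kappa\eps)\rceil \ge \lfloor\cdot\rfloor$, and the floor only absorbs the integrality of $\tau$.

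The main point requiring care is bookkeeping rather than depth. Assumption~\ref{assumption:packing} requires $t\in\mathbb{N}$, so we need $\tau$ to be a positive integer or else round it up. Rounding up is safe because a larger MoV only strengthens the hypothesis on the profiles, which is exactly why the floor appears in the statement. We also need $\kappa t$ to be a valid group size: if $\kappa t$ is not an integer, profiles that are $(\kappa t)$-neighbors differ in at most $\lfloor \kappa t\rfloor$ voters, which only helps the bound.
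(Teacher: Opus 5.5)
Your proof is correct and is essentially the paper's packing argument: your sum over the disjoint events $\{\alg(\bpi_1)=c_i\}$ is the same averaging step as the paper's pigeonhole choice of an $i$ with $\Pr[\alg(\bpi_1)=c_i]\le\gamma/(m-1)$. The paper instead instantiates the assumption at the integer $t=T=\lfloor \ln(0.4m/\gamma)/(\kappa\eps)\rfloor>\tau$, which avoids your rounding discussion. One small point: when you round a non-integer $\tau$ up, or use $t=1$ for $\tau=0$, the contradiction needs the strict inequality $(m-1)(1-\gamma)>0.4m$. This holds because $\gamma<0.2$ gives $1-\gamma>0.8$, whereas your non-strict version only suffices when $\tau$ itself is a positive integer.
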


Note here that, when $\kappa$ is a constant, this asymptotically matches our upper bound (\Cref{thm:central-em}). For the remainder of this section, we show that every rule $r$ considered in our work satisfies Assumption~\ref{assumption:packing} for a constant $\kappa$.

%The proof is deferred to Appendix~\ref{app:missing-proof-section-3}.

\subsubsection{Rules satisfying the majority criterion.}

A voting rule satisfies the \emph{majority criterion} if any candidate ranked
first by more than half of the voters must win. By simply taking each $\pi^i$ to be a profile with $2t$ voters all of whom rank
$c_i$ first, we can immediately show:

\begin{lemma} \label{lem:packing-majority}
Any voting rule that satisfies Majority Criterion also satisfies Assumption~\ref{assumption:packing} with $\kappa = 2$.
\end{lemma}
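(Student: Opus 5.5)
For each $t\in\mathbb{N}$ and $i\in[m]$, I will let $\bpi_i$ be the profile with $n=2t$ voters, all of whom submit the same ranking with $c_i$ first and the other candidates after it in a fixed order. I then check the three conditions of Assumption~\ref{assumption:packing} with $\kappa=2$.

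\emph{Property~\ref{prop:diff-winners}.} In $\bpi_i$ all $2t$ voters rank $c_i$ first, which is more than half of them. By the majority criterion, $r(\bpi_i)=c_i$.

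\emph{Property~\ref{prop:label}.} Let $\bpi'$ be any $(t-1)$-neighbor of $\bpi_i$. At least $2t-(t-1)=t+1>t$ of its voters still rank $c_i$ first, which is a strict majority of the $2t$ voters. The majority criterion again gives $r(\bpi')=c_i$. So no change of at most $t-1$ voters alters the outcome, and $\mov(\bpi_i,r)\ge t$. This step is the only one needing care: the voter count is $2t$ rather than $2t-1$ precisely so that a strict majority survives any $t-1$ changes. I should also note that the majority criterion forces the output to be $c_i$, so the output can be neither another candidate nor $\bot$.

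\emph{Third property.} Any two profiles $\bpi_i$ and $\bpi_{i'}$ each have exactly $2t$ voters. They can differ in at most all $2t$ of those voters' rankings, so they are $2t=\kappa t$-neighbors with $\kappa=2$.

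Since the construction works for every $t\in\mathbb{N}$ with the constant $\kappa=2>1$, Assumption~\ref{assumption:packing} holds.
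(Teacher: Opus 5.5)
Your proposal is correct and follows the same argument as the paper: $2t$ voters all ranking $c_i$ first, the majority criterion for the winner, $t+1>t$ untouched voters after any $t-1$ changes so $\mov(\bpi_i,r)\ge t$, and the trivial $2t$-neighbor bound. You simply spell out the $(t-1)$-neighbor count that the paper states in one clause.
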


Plurality, Condorcet, Maximin, Plurality with Runoff, and STV all satisfy the majority
criterion, so Lemmas~\ref{lem:packing-generic} and~\ref{lem:packing-majority}
imply the following asymptotically tight lower bound.

\begin{corollary}
\label{cor:packing-majority-lower}
For any $\eps > 0$ and $\gamma \in (0, 0.2)$, any $\eps$-DP algorithm $\alg$ that is $(\tau, \gamma)$-useful for Plurality, Condorcet, Maximin, Plurality with Runoff, or Single Transferable Vote must satisfy $\tau = \Omega(\ln(m/\gamma) / \eps)$.
\end{corollary}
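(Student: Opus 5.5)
The statement to prove is Corollary~\ref{cor:packing-majority-lower}. I would obtain it by combining Lemma~\ref{lem:packing-majority} with Lemma~\ref{lem:packing-generic}. The only work is checking that each listed rule satisfies the majority criterion, and then simplifying the resulting bound.

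\textbf{Step 1: each rule satisfies the majority criterion.} Suppose candidate $c$ is ranked first by more than $n/2$ voters.
\begin{itemize}
\item \emph{Plurality.} $\scoreplu(c)>n/2$, while every other candidate has plurality score below $n/2$. So $c$ is the unique maximizer, regardless of tie-breaking.
\item \emph{Condorcet and Maximin.} For every $a\ne c$, $N_{\bpi}(c\succ a)>n/2$, so $D_{\bpi}(c,a)>0$. Hence $c$ is the Condorcet winner. For Maximin, $\scmaximin(c)>0$. Every other candidate $a$ satisfies $D_{\bpi}(a,c)<0$, so $\scmaximin(a)<0$, and $c$ wins outright.
\item \emph{Plurality with Runoff.} $c$ has the strictly largest plurality score, so it reaches the second round. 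It then beats its opponent head-to-head by majority.
\item \emph{STV.} Restricting rankings to any active set $\activecandidates\ni c$ can only keep or increase the first-place votes of $c$. So $c$ always has more than $n/2$ first-place votes. Whenever at least two candidates remain, some other candidate has strictly fewer first-place votes, so $c$ is never eliminated and therefore wins.
\end{itemize}

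\textbf{Step 2: packing profiles.} With the majority criterion established, Lemma~\ref{lem:packing-majority} gives Assumption~\ref{assumption:packing} with $\kappa=2$. For completeness, the construction is as follows. Fix $t$ and let $\bpi_i$ consist of $2t$ voters, all ranking $c_i$ first. To change the winner, $c_i$ must lose its majority; this requires altering at least $t$ votes, so $\mov(\bpi_i,r)\ge t$. Any two such profiles differ in at most $2t$ voters.

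\textbf{Step 3: simplify the bound.} Lemma~\ref{lem:packing-generic} yields
\[
\tau\ \ge\ \left\lfloor \frac{\ln(0.4m/\gamma)}{2\eps}\right\rfloor .
\]
For $\gamma<0.2$ we have $0.4m/\gamma>2m$, so the logarithm is bounded below by a constant fraction of $\ln(m/\gamma)$. The floor costs at most an additive $1$. This gives $\tau=\Omega(\ln(m/\gamma)/\eps)$ whenever the right-hand side is at least a constant.

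In the regime where $\ln(m/\gamma)/\eps$ is small, I would rely on the trivial bound $\tau\ge 1$. A $(0,\gamma)$-useful mechanism must be correct on every profile; two neighbors with different winners together with $\eps$-DP make this impossible for small $\eps$. This lets the $\Omega$ absorb the floor. The main, though minor, obstacle is this bookkeeping at small values; Step~1 for STV is the only rule-specific argument requiring care.
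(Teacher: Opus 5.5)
Your Steps~1--3 follow the paper's route. The paper's proof consists only of the observation that all five rules satisfy the majority criterion, followed by combining Lemma~\ref{lem:packing-majority} (with $\kappa=2$) and Lemma~\ref{lem:packing-generic}. Your rule-by-rule check of the majority criterion, including the STV monotonicity observation, and your simplification $\ln(0.4m/\gamma)\ge c\ln(m/\gamma)$ are correct. They fill in details the paper leaves implicit.

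Your final paragraph, however, does not work.

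\begin{itemize}
\item \emph{Wrong regime.} Since $\ln(m/\gamma)>\ln 10$, the regime where $\ln(m/\gamma)/\eps$ is small is the regime of \emph{large} $\eps$. An impossibility argument ``for small $\eps$'' does not address it.
\item \emph{Too weak.} Your two-neighbor argument rules out correctness on all profiles only when $e^{\eps}\gamma<1-\gamma$, i.e.\ $\eps<\ln((1-\gamma)/\gamma)$. This threshold does not grow with $m$, whereas the range left uncovered by Step~3 extends to $\eps\approx\ln(m/\gamma)$.
\item \emph{The target is false there.} For $\eps\ge\ln(m(1-\gamma)/\gamma)$, consider the mechanism that outputs $r(\bpi)$ with probability $1-\gamma$ and each of the other $m$ outcomes in $\cC\cup\{\bot\}$ with probability $\gamma/m$. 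It is $\eps$-DP and correct with probability $1-\gamma$ on every profile, so it is $(0,\gamma)$-useful. Hence no bound $\tau\ge C\ln(m/\gamma)/\eps$ with an absolute $C>0$ can hold in that regime.
\end{itemize}

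So the corollary can only mean the floored bound of Lemma~\ref{lem:packing-generic}, i.e.\ an $\Omega(\cdot)$ valid when $\eps\le c\ln(m/\gamma)$. That is how the paper's one-line derivation implicitly reads it, and you should state this restriction rather than try to absorb the floor.

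If you want the bound up to the true threshold, replace the two-neighbor argument with the packing argument on the $m$ single-voter profiles, where the lone voter ranks $c_i$ first. These profiles are pairwise $1$-neighbors, each has $\mov\ge1$, and the winner is $c_i$ by the majority criterion. This rules out correctness on all profiles whenever $\eps<\ln((m-1)(1-\gamma)/\gamma)$. Together with your Step~3, it gives $\tau=\Omega(\ln(m/\gamma)/\eps)$ for every such $\eps$.
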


\subsubsection{Scoring-based rules.}

Some scoring rules, such as Veto
%\footnote{Veto is the voting rule with scoring vector $(1, \dots, 1, 0)$.}
, do not satisfy the majority criterion, so the
previous construction does not apply.  We instead build a %symmetric 
profile with
many voters and then carefully perturb it to favor any prescribed candidate. By such a construction, we can show:

\begin{lemma} \label{lem:packing-scoring-based}
Any non-trivial\footnote{We say that a scoring rule is non-trivial if the scoring vector $\bs = (s_1, \dots, s_m)$ satisfies $s_1 \ne s_m$.} scoring rule satisfies Assumption~\ref{assumption:packing} with $\kappa = 4$.
\end{lemma}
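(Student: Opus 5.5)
We will build one large symmetric profile in which all candidates tie, and then, for each $i$, modify a block of $2t$ voters so that $c_i$ becomes the winner by a score gap large enough to force $\mov\ge t$. Write $D := s_1 - s_m > 0$; this is positive by non-triviality. Fix a reference ranking $\sigma = (c_1,\ldots,c_m)$ and let $\sigma^{(j)}$ be its $m$ cyclic shifts. Let the base profile $P$ consist of each cyclic shift repeated $L := 2t$ times, so $n = 2tm$. In $P$ every candidate occupies every position exactly $L$ times, so all scores equal $L\sum_j s_j$.

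\textbf{Construction of $\bpi_i$.} In $P$ there are exactly $L = 2t$ voters (all with the same shift) who rank $c_i$ last. In $\bpi_i$ we replace each of these voters' rankings by the ranking obtained by moving $c_i$ from position $m$ to position $1$ and shifting every other candidate down one position. In each modified vote, $c_i$ gains $s_1 - s_m = D$. A candidate previously at position $p<m$ moves to $p+1$ and loses $s_p - s_{p+1} \ge 0$. Hence, relative to the tie in $P$, we have $\score_{\bs}(c_i;\bpi_i) - \score_{\bs}(c_k;\bpi_i) \ge 2tD$ for every $k\ne i$, and so $r(\bpi_i)=c_i$ regardless of tie-breaking. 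This gives property~\ref{prop:diff-winners}.

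\textbf{Margin of victory.} Changing a single voter's ranking changes any candidate's score by at most $D$. It therefore changes the difference between the scores of two candidates by at most $2D$. After changing any $t-1$ voters, every difference $\score(c_i)-\score(c_k)$ is still at least $2tD - 2(t-1)D = 2D > 0$, so $c_i$ remains the strict winner. Thus $\mov(\bpi_i,r)\ge t$, which is property~\ref{prop:label}.

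\textbf{Neighborhood.} The profiles $\bpi_i$ and $\bpi_{i'}$ agree with $P$ outside the two blocks of modified voters, namely the $2t$ voters ranking $c_i$ last in $P$ and the $2t$ voters ranking $c_{i'}$ last in $P$. These blocks are disjoint for $i\ne i'$, since they come from different cyclic shifts. Hence $\bpi_i$ and $\bpi_{i'}$ differ in at most $4t$ voters, i.e., they are $(4t)$-neighbors. This gives Assumption~\ref{assumption:packing} with $\kappa=4$.

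The only delicate step is ensuring that the gap in favor of $c_i$ is $\Theta(tD)$ against \emph{every} other candidate simultaneously while altering only $O(t)$ voters. A naive ``put $c_i$ first'' block fails for rules like Veto, where the gap would be only about $tD/(m-1)$. The rotation trick resolves this, because it turns $c_i$'s worst position into its best while never increasing any other candidate's score. The remaining checks are the routine bound of $2D$ on the per-voter change of a score difference, and the tie-breaking, which is handled by the strict gap of $2D$.
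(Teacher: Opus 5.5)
Your proposal is correct and follows the paper's proof essentially step for step. Both arguments use the same base profile of $2t$ copies of the $m$ cyclic shifts. They both move $c_i$ from last to first in the $2t$ rankings where it is last, get the same score gap of at least $2t(s_1-s_m)$ with a per-voter change of at most $2(s_1-s_m)$, giving $\mov(\bpi_i,r)\ge t$, and reach the same $4t$-neighbor count.
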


Combining Lemmas~\ref{lem:packing-generic},~\ref{lem:packing-scoring-based}
yields the tight lower bound.

\begin{corollary}
\label{cor:packing-scoring-lower}
For any $\eps > 0$ and $\gamma \in (0, 0.2)$, any $\eps$-DP algorithm $\alg$ that is $(\tau, \gamma)$-useful for any non-trivial scoring rule must satisfy $\tau \geq \Omega(\ln(m/\gamma) / \eps)$.
\end{corollary}

%\pasin{I feel like the stuff below can be moved to the appendix since it doesn't fit with the rest?}
%\vorapong{I supposed you mean the following subsubsection. I kept it because it sounds interesting, but I will not mind moving it to Appendix.} \pasin{I'm actually fine with keeping this, but the text needs to put this into context of the high-level picture. Currently, this sounds like a standalone thing and it's pretty confusing what this subsection is trying to accomplish. (There is something interesting going on here. It just isn't connected well to the other results.)}
%vorapong{I tried to add a paragraph to link with the previous result. However, after I have added the paragraph, I agree that this discussion might be moved to appendix if we run out of space.}

\subsection{Rules Beyond Our Framework}

While our framework above provides tight bounds of $\Theta\Paren{\frac{\log(m/\gamma)}{\eps}}$ for all rules of interest in this paper, we remark that there are in fact rules whose optimal decisive margin is not $\Theta\Paren{\frac{\log(m/\gamma)}{\eps}}$. One such rule is the following
%
%\subsubsection{A rule outside the packing framework.}
%
%The preceding results establish the lower bound of $\Omega(\log m/\eps)$ using the packing property stated in Assumption~\ref{assumption:packing}. Although this property is satisfied by the standard voting rules considered in this paper, it does not necessarily hold for every voting rule. The following example demonstrates that, in the absence of such a packing, the $\Omega(\log m/\eps)$ lower bound may fail. Consider a 
``near unanimity'' rule
parameterized by $n_0,\tau_0$ with $n_0>2\tau_0$:
\begin{itemize}
\item When $n \leq n_0$, we always declare a tie, i.e., $r(\bpi) = \bot$.
\item When $n > n_0$, the winner is the candidate that is most preferred by at least $n - \tau_0$ voters. If no such candidate exists, declare a tie.
\end{itemize}
%For any $d$ satisfying $d \le n-2\tau_0-t$, one can verify the following.
%Let $\bpi_i$ be any profile with $r(\bpi_i)=c_i$ and $\mov(\bpi_i,r)\ge t$.
%Then for every profile $\bpi'$ that is a $d$-neighbor of $\bpi_i$, we must have $r(\bpi') \in \{c_i,\bot\}$.

%When $n_0$ is large and $\tau_0,t$ is small, any two profiles $\bpi_i,\bpi_{i'}$
%with $r(\bpi_i)=c_i$ and $r(\bpi_{i'})=c_{i'}$ (where $i\neq i'$) are separated by a distance
%far exceeding $\mov(\bpi_i,r)$. In particular, there is no constant $\kappa$ for which
%$\bpi_i$ and $\bpi_{i'}$ can be $(\kappa t)$-neighbors for such $t$.
%Therefore, this voting rule does not %satisfy %Assumption~\ref{assumption:packing}.

%Indeed, this 
This
rule admits an $\eps$-DP algorithm with
$(o_{\gamma,\eps}(\log m),\gamma)$-usefulness.  The algorithm privately counts
first-place votes using discrete Laplace noise and then applies the rule.  For
sufficiently large $n_0$ and $\tau_0\ge\Theta(\log(m)/\eps)$, the relevant
scores are far from the decision boundary except in a constant-width band,
yielding $(O_{\gamma,\eps}(1),\gamma)$-usefulness.
\section{Non-Interactive Local Differential Privacy}
\label{sec:noninteractive-ldp}

We next consider non-interactive LDP (Definition~\ref{def:non-interactive-local-dp}). We remark that all of our (interactive or non-interactive) LDP algorithms are efficient, and we will not state this explicitly in the theorem statements.

\subsection{Plurality}

We start by the Plurality rule, which has the simplest algorithm and lower bound proof.

\label{subsubsec:alg-plurality-voting}

\paragraph{Upper bound.} 
The Plurality algorithm is simply the LDP histogram algorithm\footnote{We state our algorithm using the randomizer of \cite{DuchiJW18} to unify the notations across all the results. However, other LDP histogram algorithms, e.g. RAPPOR \cite{ErlingssonPK14}, also work here.}. More specifically,
%, given in Algorithm~\ref{alg:publish-plurality-scores} in Appendix~\ref{app:plurality-details},
each voter privatizes the one-hot vector of their
favorite candidate using the DJW randomizer.  That is, if voter $i$ ranks $w_i$ first, let
$f(\pi_i)=v_i\in\{0,1\}^m$ where $v_{i,w_i}=1$ and $v_{i,c}=0$ for every
$c\ne w_i$.  Since $\|v_i\|_2=1$, the voter can release
$\tilde{v}_i=\mathcal{R}_{\mathrm{DJW}}(v_i;1,\varepsilon)$ while satisfying $\eps$-LDP.
The aggregator sums the noisy vectors and outputs
$\tilde{w} \in \arg\max_{c_j \in \cC} \sum_{i \in [n]} \tilde{v}_{i,j}$. The following theorem follows from the sub-Gaussian guarantee in Lemma~\ref{lem:djw-randomizer}.

%The proof of the following theorem is given in Appendix~\ref{app:plurality-details} and follows from the sub-Gaussian guarantee in Lemma~\ref{lem:djw-randomizer}.

\begin{theorem}
    \label{thm:ldp-plurality-upper}
    There is a non-interactive  $\eps$-LDP algorithm for Plurality which is
        $\left(
        %2+C_{\mathrm{DJW}}\kappa_\varepsilon
        %\sqrt{2n\log(2m/\gamma)},
        O\left(\max\left\{\frac{1}{\eps}, 1\right\} \cdot \sqrt{n \log(m/\gamma)}\right),
        \gamma
        \right)$-useful
    for $\eps>0$ and $\gamma\in(0,1/2)$.
\end{theorem}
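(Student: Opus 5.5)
The algorithm is the one just described: each voter releases $\tilde v_i=\mathcal{R}_{\mathrm{DJW}}(v_i;1,\eps)$, and the aggregator outputs the arg-max of $\tilde s:=\sum_i \tilde v_i$. Privacy is immediate from Lemma~\ref{lem:djw-randomizer}, since $\|v_i\|_2=1$ and each voter applies a single $\eps$-LDP randomizer fixed in advance. For utility, set $s:=\sum_i v_i$; its $c$-th coordinate is exactly $\scoreplu(c)$. Write $E:=\tilde s - s=\sum_i(\tilde v_i-v_i)$.

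\textbf{Concentration.} Fix a pair of distinct candidates $c,c'$ and take the unit vector $u=(e_c-e_{c'})/\sqrt2$. By Lemma~\ref{lem:djw-randomizer}, each $\langle \tilde v_i-v_i,u\rangle$ is centered and sub-Gaussian with parameter at most $C_{\mathrm{DJW}}\kappa_\eps$. The voters randomize independently, so $\langle E,u\rangle$ is sub-Gaussian with parameter $C_{\mathrm{DJW}}\kappa_\eps\sqrt n$. Hence
\[
\Pr\bigl[\,|E_c-E_{c'}|\ge \sqrt2\, C_{\mathrm{DJW}}\kappa_\eps\sqrt{2n\log(2m/\gamma)}\,\bigr]\le \gamma/m .
\]
Let $w=r(\bpi)$. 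A union bound over the $m-1$ pairs $(w,c)$ shows that, with probability at least $1-\gamma$,
\[
E_c - E_w < B:=2C_{\mathrm{DJW}}\kappa_\eps\sqrt{n\log(2m/\gamma)} \quad\text{for all } c\neq w .
\]
Since $\kappa_\eps=\Theta(\max\{1/\eps,1\})$, we have $B=O(\max\{1/\eps,1\}\sqrt{n\log(m/\gamma)})$.

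\textbf{Relating MoV to score gaps.} This is the only step specific to voting. Suppose $\mov(\bpi,\mathrm{plurality})\ge\tau$. I claim that $s_w-s_c\ge 2\tau-3$ for every $c\ne w$ (the constant is loose). Suppose instead that $s_w-s_c\le 2(\tau-1)-1$ for some $c$. Then move $\tau-1$ voters who rank $w$ first so that they rank $c$ first; this is possible because $s_w\ge 2\tau-3\ge\tau-1$, or else the gap bound is trivial. After the move, $c$'s score is at least $w$'s score minus one. One more changed vote then makes $c$'s score strictly larger than $w$'s, regardless of tie-breaking. So a $\tau$-neighbor has a different winner, which contradicts $\mov\ge\tau$. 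More simply, one can check directly that $s_w-s_c\le 2k$ implies $\mov\le k+1$.

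\textbf{Conclusion.} Take $\tau = B/2+3$. On the good event, for every $c\ne w$,
\[
\tilde s_w-\tilde s_c=(s_w-s_c)-(E_c-E_w)>(2\tau-3)-B>0 ,
\]
so the arg-max (with any tie-breaking) returns $w$. This gives $(O(\max\{1/\eps,1\}\sqrt{n\log(m/\gamma)}),\gamma)$-usefulness.

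I do not expect a real obstacle. The points to check are the sub-Gaussian tail for a sum of independent sub-Gaussians and the off-by-one bookkeeping in the MoV-to-gap step, which only affects additive constants absorbed in the $O(\cdot)$. One degenerate case remains: if $n$ is small, $\tau$ may exceed $n$, and then the usefulness condition is vacuous, since the MoV is always at most $n$.
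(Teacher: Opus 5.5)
Your proof is correct and is essentially the paper's argument: the same one-hot DJW release with arg-max aggregation, sub-Gaussian concentration with a union bound, and the Plurality certificate (a score gap of at most $2k$ forces $\mov\le k+1$). The only difference is that you project onto $(e_c-e_w)/\sqrt2$ instead of bounding each coordinate separately, which just saves a $\sqrt2$ factor.

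One bookkeeping slip: moving $\tau-1$ of $w$'s first-place voters already gives $c$ a strict lead (the gap becomes at most $-1$), so you obtain a $(\tau-1)$-neighbor with a different winner. A $\tau$-neighbor, as you wrote, would not contradict $\mov\ge\tau$; your ``more simply'' remark states the correct version anyway.
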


\paragraph{Lower bound.} A direct two-point mean-testing reduction gives the
following non-interactive LDP lower bound, which asymptotically matches the upper bound for $\gamma < 1/m^{\Omega(1)}$ and $\eps \leq \Theta(1)$.

\begin{theorem}
    \label{thm:ldp-plurality-lower}
    There are constants $C',c,\varepsilon_0>0$ such that the
    following holds for Plurality.  For $0<\varepsilon\le\varepsilon_0$ and
        $e^{-cn\varepsilon^2}
        \le\gamma<\frac{1}{10}$,
    no non-interactive $\varepsilon$-LDP algorithm is
        $\left(\frac{C'}{\varepsilon}
        \sqrt{n\log(1/\gamma)},
        \gamma
        \right)$-useful.
\end{theorem}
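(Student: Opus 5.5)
We argue by reduction to a two-point hypothesis test between product distributions and then use a strong-data-processing bound for $\varepsilon$-LDP channels. The setup uses two candidates $a=c_1$ and $b=c_2$ (the other candidates are always ranked last and are irrelevant). Fix $\alpha\in(0,1/4)$ to be chosen later. Under distribution $P$, each voter independently ranks $a$ first with probability $1/2+\alpha$ and $b$ first otherwise. Under $Q$ the roles of $a$ and $b$ are swapped. Set $\tau=\frac{C'}{\varepsilon}\sqrt{n\log(1/\gamma)}$ and choose $\alpha=C''\sqrt{\log(1/\gamma)/(n\varepsilon^2)}$ with $C''$ larger than $C'$ by a constant factor, so that $\alpha n\ge 4\tau$. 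The hypothesis $\gamma\ge e^{-cn\varepsilon^2}$ gives $\alpha\le C''\sqrt{c}$, which is below $1/4$ for $c$ small enough. For Plurality, the margin of victory is essentially half the score gap. Hence a sampled profile from $P$ has $r=a$ and $\mov\ge\tau$ unless the empirical count of $a$ falls below $n/2+\alpha n/2$. By Hoeffding, that happens with probability $\le e^{-\alpha^2 n/2}\le \gamma$ once $C''$ is large, and the same holds under $Q$ with $b$.

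Next, suppose $\alg$ is a non-interactive $\varepsilon$-LDP algorithm that is $(\tau,\gamma)$-useful. Its output on a $P$-sample equals $a$ with probability at least $1-2\gamma$, and on a $Q$-sample equals $b$ with probability at least $1-2\gamma$. The test "declare $P$ iff the output is $a$" therefore has total error at most $4\gamma$. Because $\alg$ is a function of the transcript $(\cR_1(\pi_1),\dots,\cR_n(\pi_n))$, and the inputs are i.i.d., the transcript laws are the product measures $\prod_i M_iP_0$ and $\prod_i M_iQ_0$, where $P_0,Q_0$ are the single-voter distributions and $M_i$ is the channel of $\cR_i$. A testing error of $4\gamma$ forces, via Bretagnolle--Huber, $\mathrm{KL}\bigl(\prod_i M_iP_0\,\|\,\prod_i M_iQ_0\bigr)\ge\ln\frac{1}{8\cdot 4\gamma}\cdot$const, which is $\Omega(\log(1/\gamma))$ for $\gamma<1/10$.

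The upper bound on that KL uses the contraction inequality of Duchi, Jordan and Wainwright: for any $\varepsilon$-LDP channel, $\mathrm{KL}(MP_0\|MQ_0)+\mathrm{KL}(MQ_0\|MP_0)\le 4(e^\varepsilon-1)^2\,\mathrm{TV}(P_0,Q_0)^2$. Here $\mathrm{TV}(P_0,Q_0)=2\alpha$. Tensorizing over voters then gives a total KL of $O(n\varepsilon^2\alpha^2)$ for $\varepsilon\le\varepsilon_0$. Combining with the lower bound gives $n\varepsilon^2\alpha^2\ge c_1\log(1/\gamma)$, that is, $\alpha\ge\sqrt{c_1\log(1/\gamma)/(n\varepsilon^2)}$. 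Choosing $C''$, and hence $C'$, small relative to $\sqrt{c_1}$ yields a contradiction.

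The main obstacle is that the constants pull in opposite directions. Concentration with failure probability $\le\gamma$ needs $\alpha^2 n\gtrsim\log(1/\gamma)$, while the KL contradiction needs $\alpha^2n\varepsilon^2\lesssim\log(1/\gamma)$. The factor $\varepsilon^2\le\varepsilon_0^2$ supplies the gap only if $\varepsilon_0$ is small. I would resolve this by taking $\alpha^2 n=K\log(1/\gamma)/\varepsilon^2$ for a fixed moderate $K$. Then concentration holds comfortably, since $1/\varepsilon^2\ge1$, and the KL is at most $O(K\log(1/\gamma))$. This gives a contradiction once $K$ is below the Bretagnolle--Huber constant, after which I would set $C'=\sqrt{K}/4$. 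A further detail is that the exact margin of victory for Plurality under tie-breaking is at least (gap$-1)/2$, which the slack in $\alpha n\ge4\tau$ absorbs.
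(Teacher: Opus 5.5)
Your proposal is correct and follows essentially the same route as the paper's proof: two oppositely biased binary vote distributions with bias $\Theta(\sqrt{\log(1/\gamma)/(n\varepsilon^2)})$, Hoeffding to certify the winner and an $\Omega(\sqrt{n\log(1/\gamma)}/\varepsilon)$ margin except with probability $\gamma$, the Duchi--Jordan--Wainwright contraction tensorized over voters to bound the transcript KL by $O(n\varepsilon^2\alpha^2)$, and Bretagnolle--Huber to rule out a total testing error of $4\gamma$. Two constant slips need fixing: Bretagnolle--Huber gives $\mathrm{KL}\ge\ln\frac{1}{8\gamma}$ (not $\ln\frac{1}{32\gamma}$), which is exactly why $\gamma<1/10$ suffices; and once $K$ is small enough for the KL contradiction, concentration requires $\varepsilon\le\sqrt{K/2}$ rather than merely $1/\varepsilon^2\ge1$, so $\varepsilon_0$ must be fixed after $K$, just as the paper fixes $a$ first and then sets $\varepsilon_0=\min\{1,a/\sqrt{8}\}$.
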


\paragraph{Proof sketch.}
Draw every vote independently from one of two binary distributions whose
means are
$\pm\Theta(\sqrt{\log(1/\gamma)/(n\varepsilon^2)})$.  With probability at
least $1-\gamma$, the resulting Plurality election has the corresponding
winner and margin
$\Omega(\sqrt{n\log(1/\gamma)}/\varepsilon)$.  A mechanism useful below this
threshold would therefore distinguish the two distributions with error
$O(\gamma)$.  The strong data-processing inequality for local
privacy~\cite{DuchiJW18} and the Bretagnolle--Huber testing bound rule this out. \hfill $\square$

\subsection{Condorcet, Maximin, Plurality with Runoff}
\label{subsec:main-nonint-pairwise}

Next, we consider Condorcet, Maximin, Plurality with Runoff. For these rules, the decisive margin increases (compared to Pluraity) increases by a factor of $O(\sqrt{m})$. This increase is due to the fact that each voter now needs to privatize the pairwise election, which increases the $\ell_2$ norm of the contribution vector to $O(\sqrt{m})$, as formalized below.

\subsubsection{Upper bounds.}

Fix an orientation of the $q=\binom{m}{2}$ unordered candidate pairs.  For a
ranking $\pi$, define its pairwise vector $P(\pi)\in\{-1,1\}^q$ by
    $P_{a,b}(\pi)
    =\mathbf{1}[\pi(a)\succ\pi(b)]
    -\mathbf{1}[\pi(b)\succ\pi(a)]$
for each oriented pair $(a,b)$.  Thus $\|P(\pi)\|_2=\sqrt q$, and the sum of
the $(a,b)$ coordinates over all voters is exactly the pairwise margin
$D_{\bpi}(a,b)$.

Each voter releases
    $Z_i=\mathcal{R}_{\mathrm{DJW}}(P(\pi_i);\sqrt q,\varepsilon)$,
using %the randomizer from 
Lemma~\ref{lem:djw-randomizer}.  This is a
$\varepsilon$-LDP release.  The aggregator sets
$\widetilde P=\sum_{i=1}^n Z_i$ and reads $\widetilde D(a,b)$ from the
appropriately signed coordinate of $\widetilde P$.  The estimate is unbiased,
and independence across voters together with Lemma~\ref{lem:djw-randomizer}
gives
    $\Pr\!\left[
    |\widetilde D(a,b)-D_{\bpi}(a,b)|>t
    \right]
    \le 2\exp\!\left(
    -\frac{t^2}{2C_{\mathrm{DJW}}^2\kappa_\varepsilon^2qn}
    \right)$.

In the runoff algorithm we also use the DJW-privatized plurality vector
$\tilde{v}=(\tilde{v}_i)_{i\in[m]}$ from
Section~\ref{subsubsec:alg-plurality-voting}, with privacy budget
$\varepsilon/2$; the DJW pairwise release also uses budget $\varepsilon/2$.
By composition, their joint release is  $\varepsilon$-LDP.

\textbf{Condorcet:} The aggregator outputs a candidate $c$ if
$\widetilde{D}(c,a)>0$ for every $a\in\cC\setminus\{c\}$, and outputs $\bot$ if
no such candidate exists.

\textbf{Maximin:} The aggregator computes each candidate's noisy maximin
score
    $\widetilde{\score}_{\mathrm{maximin}}(a) := \min_{b \in \cC \setminus \{a\}}
    \widetilde{D}(a, b),$
and outputs the candidate with maximum score. %$\hat{c}\in\arg\max_{a\in\cC}\widetilde{\score}_{\mathrm{maximin}}(a)$.

\textbf{Plurality with Runoff:} Based on the vector $\tilde{v}$, the aggregator selects the top two candidates $c, c'$ with the highest noisy plurality scores.
For the second stage of the runoff rule, the aggregator computes the noisy pairwise margin $\widetilde{D}(c, c')$
and outputs $c$ if $\widetilde{D}(c, c') > 0$, and $c'$ otherwise.

%The pseudocode for Plurality with Runoff and the detailed usefulness statements
%and proofs for Condorcet, Maximin, and Plurality with Runoff are deferred to
%Appendix~\ref{app:cmr-details}
%(Algorithm~\ref{alg:noninteractive-plu-runoff} and
%Theorems~\ref{thm:ldp-condorcet-upper}--\ref{thm:ldp-runoff-upper}).
Thus, we obtain the following theorem from the sub-Gaussian guaranty in Lemma \ref{lem:djw-randomizer}. 

\begin{theorem} \label{thm:non-int-ldp-algo-pairwise}
The non-interactive algorithms described above for Condorcet, Maximin, and
Plurality with Runoff are $\varepsilon$-LDP and
$(\tau,\gamma)$-useful for
    $\tau = \tilde{O}_{\varepsilon,\gamma}(m\sqrt{n})$.
\end{theorem}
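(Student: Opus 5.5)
Privacy is immediate: each $Z_i$ is the output of the DJW randomizer applied to $P(\pi_i)$ with $\|P(\pi_i)\|_2=\sqrt q$, hence $\varepsilon$-LDP by Lemma~\ref{lem:djw-randomizer}. For Runoff, the pairwise release and the plurality release each use budget $\varepsilon/2$, so basic composition gives $\varepsilon$-LDP. For usefulness, I would work on a single good event $E$: every pairwise estimate satisfies $|\widetilde D(a,b)-D_{\bpi}(a,b)|\le T$, and, for Runoff, every plurality estimate satisfies $|\tilde v_c-\scoreplu(c)|\le T'$. Applying the displayed tail bound to each of the $q$ coordinates and taking a union bound gives $\Pr[E]\ge 1-\gamma$ with
\[
T=C_{\mathrm{DJW}}\kappa_\varepsilon\sqrt{2qn\log(4q/\gamma)}=O\bigl(\kappa_\varepsilon m\sqrt{n\log(m/\gamma)}\bigr).
\]
The plurality part uses the per-coordinate sub-Gaussian bound of Lemma~\ref{lem:djw-randomizer} with radius $1$, which gives $T'=O(\kappa_\varepsilon\sqrt{n\log(m/\gamma)})$. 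The rest of the argument is deterministic: on $E$, any profile with $\mov\ge\tau$ for $\tau=c_0(T+T')+O(1)$ yields the correct output.

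The deterministic step has the same form for each rule. One vote changes every $D(a,b)$ by at most $2$ and every plurality score by at most $1$. So if some noisy quantity sits on the wrong side of a threshold, I would exhibit a set of $O(T+T')$ voters whose change flips the true outcome. This contradicts $\mov\ge\tau$.

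\textbf{Condorcet.} Suppose $r(\bpi)=c$. If some $D(c,a)\le T$, then changing about $T/2+1$ voters who rank $c\succ a$ so that they rank $a$ first destroys $c$'s Condorcet status, so the output changes. Hence every $D(c,a)>T$, and then $\widetilde D(c,a)>0$ for all $a$, so the algorithm outputs $c$. If $r(\bpi)=\bot$, the large MoV means that for every $c$ there is some $a$ with $D(c,a)<-T$ (otherwise $O(T)$ changes would make $c$ a Condorcet winner), so the output is $\bot$.

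\textbf{Maximin.} Noisy maximin scores are within $T$ of the true scores. By \eqref{eqn:2approx}, $\Delta_{\mathrm{mm}}\ge 2(\mov-1)>2T$ once $\tau>T+1$, so the true winner's noisy score strictly exceeds every other candidate's.

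\textbf{Runoff.} This is the case I expect to be the main obstacle, because the noisy top two must be the true top two, and ties with the third-place candidate matter. I would argue that if the gap between the true second and third plurality scores were at most $2T'$, then moving $O(T')$ first-place votes from the second-ranked candidate to the third would swap them. Two sub-cases need care. If the true winner $w$ is the candidate who drops out, the outcome clearly changes. If the loser $\ell$ drops out, the replacement $x$ must also lose to $w$; otherwise the outcome changes anyway. In that sub-case the swap does not change the winner, so this argument alone does not force a large gap. The fix I would try is to prove directly that on $E$ the algorithm still outputs $w$. The noisy top two must contain $w$, since $w$'s lead over the third-place candidate is at least roughly $\mov$. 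Whichever other candidate $y$ is selected, one needs $D(w,y)>T$. That holds for any $y$ that could plausibly reach the top two: otherwise, $O(T+T')$ vote changes would both lift $y$ into the top two and make $y$ beat $w$ head to head. Each such change moves $D$ by at most $2$ and plurality scores by at most $1$, which gives the contradiction with $\mov\ge\tau$. Finally, I would set $\tau=O(T+T')=\tilde O_{\varepsilon,\gamma}(m\sqrt n)$ and check that ties under the fixed tie-breaking order cost only an additive constant.
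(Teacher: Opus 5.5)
Your overall structure is the paper's: one DJW release of the pairwise vector (plus the plurality release for Runoff), a union bound giving uniform error $T=O(\kappa_\varepsilon m\sqrt{n\log(m/\gamma)})$, and then a deterministic margin-of-victory certificate for each rule. Your Maximin step is exactly the paper's use of \eqref{eqn:2approx}. Your final Runoff plan is the paper's Lemma~\ref{lem:margin-runoff}: show that $w$ must be a noisy finalist, then show $D_{\bpi}(w,y)>T$ for any opponent $y$ that could be selected, by lifting $y$ above the true second finalist and then flipping the head-to-head. The edge case where $w$ has too few first-place votes to transfer needs the extra trick of moving all of them away and putting $w$ last, but that is a detail.

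The genuine gap is the Condorcet case with true outcome $\bot$. Your parenthetical ``otherwise $O(T)$ changes would make $c$ a Condorcet winner'' is the entire content of that case, and it is not obvious. The hypothesis is $D_{\bpi}(a,c)\le T$ for each of the $m-1$ opponents separately. Repairing the opponents one at a time (moving $c$ above $a$ for about $T/2+1$ voters who prefer $a$) can cost $(m-1)(\lfloor T/2\rfloor+1)$ changes. That yields only $\tau=\tilde O(m^2\sqrt n)$, not the claimed $\tilde O(m\sqrt n)$.

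What you need is a single set $X$ of voters such that moving $c$ to the top of their rankings gives $|X\cap V_a|>D_{\bpi}(a,c)/2$ for every $a$ simultaneously. Here $V_a$ is the set of voters ranking $a$ above $c$. This is possible because every opponent that $c$ does not already beat has $|V_a|=(n+D_{\bpi}(a,c))/2\ge n/2$, so a uniformly random voter lies in each relevant $V_a$ with probability at least $1/2$. The paper proves it with a greedy potential argument (second part of Lemma~\ref{lem:margin-condorcet}). The potential $\sum_a(2^{r_a}-1)$ shrinks by a factor $3/4$ per added voter, giving $|X|\le 1+3\log_2 m+3T$. Alternatively, a uniformly random set of $O(T+\log m)$ voters works by a Chernoff bound plus a union bound over the $m-1$ opponents. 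Without some such covering argument, your Condorcet bound loses a factor of $m$.
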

\paragraph{Lower bounds.}
To prove (nearly) matching lower bounds, we reduce from the two-way marginal problem:
\begin{definition}[$k$-way marginal]
\label{def:k-way-marginal}
For a Boolean dataset $X=(x_1,\ldots,x_N)$ where $x_1, \dots, x_N \in \{0, 1\}^d$ and
$A\subseteq[d]$ with $|A|=k$, the $k$-way marginal indexed by $A$ is the
conjunction count $Q_A(X)=\sum_{i=1}^N\prod_{j\in A}x_{i,j}$.
\end{definition}

Our lower bound reduces non-interactive LDP release of
\emph{all} normalized two-way marginals to private winner recovery.  A key point in our reduction for non-interactive algorithms is that we can \emph{reuse privatized values} to answer multiple queries\footnote{This observation has been used before in non-private LDP lower bounds, e.g. in \cite{ghazi2023computing,eden2025triangle}.}. At a high-level, this framework proceeds as follows\footnote{This description is simplified for the main body. E.g., in the actual reduction, we use \emph{two} data-dependent rankings per one $x_i$.}:
\begin{itemize}
\item For each $x_i$, we create one \emph{data-dependent} ranking $\pi_i$ which is then privatized to $y_i = \cR_i(\pi_i)$.
\item For each query $A$, do the following (possibly repeatedly):
\begin{itemize}
\item We create additional \emph{query-dependent} rankings $\pi'_1, \dots, \pi'_L$ and run the randomizer to get $y'_j = \cR_j(\pi_j)$
\item From the combined output $y_1, \dots, y_N, y'_1,\dots,y'_L$, we compute the winner of the election.
\item Use the winner to estimate $Q_A(X)$.
%\item This process is repeated potentially multiple times to estimate $Q_A(X)$.
\end{itemize}
\end{itemize}

As mentioned earlier, we reuse the randomized responses $y_1, \dots, y_N$ multiple times to answer multiple queries.

We defer the exact descriptions of the rankings to the appendix, but we sketch the high-level idea here. Roughly speaking, we think of each
candidate $c_j$ as a coordinate $j$ of the dataset. Our \emph{data-dependent} rankings are constructed 
%For a query
%$A=\{j,k\}$, our construction adds two
%data-dependent voters per user
so that the head-to-head margin between $c_j$ and
$c_k$ encodes $\sum_i x_{i,j}x_{i,k}$. The \emph{query-dependent} voters act as a ``thresholding query'' that checks whether $\sum_i x_{i,j}x_{i,k} \geq \tau$ for some threshold $\tau$. If this is true, then $c_j$ is the winner; otherwise, $c_k$ is the winner. % so
%that $c_j$ is the Condorcet or Maximin winner, or the runoff winner against
%$c_k$, exactly
%when this marginal is above the tested threshold.  
The margin of victory in the
constructed election is proportional to the distance from the threshold.  A
nonadaptive scan of all fixed thresholds therefore recovers all two-way
marginals under LDP.  Combining this reduction with the known two-way marginal lower
bound (in Theorem~\ref{thm:edmonds-local-marginal-lower}) yields the following
winner-recovery lower bound.

\begin{corollary}
\label{cor:ldp-cmr-main-lower}
There are constants $c,\varepsilon_0>0$ such that, for
$0<\varepsilon\le\varepsilon_0$, every non-interactive $\varepsilon$-LDP
algorithm for Condorcet, Maximin, or Plurality with Runoff that is
$(\tau, \gamma)$-useful with $\gamma \le c/(m^2n^2)$ must satisfy
    $\tau=\widetilde{\Omega}\!\left(
    \min\left\{n,\frac{m\sqrt{n}}{\varepsilon}\right\}\right)$.
\end{corollary}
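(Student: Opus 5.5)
We prove Corollary~\ref{cor:ldp-cmr-main-lower} by contradiction, turning a useful winner-recovery mechanism into a non-interactive LDP release of all two-way marginals. We then invoke the known lower bound for that task, Theorem~\ref{thm:edmonds-local-marginal-lower}. That theorem says that, for a dataset of $N$ users in $\{0,1\}^d$, any non-interactive $\varepsilon$-LDP protocol that estimates every two-way marginal $Q_{\{j,k\}}(X)$ to additive error $\alpha$ with constant probability must have $\alpha=\widetilde\Omega(\min\{N, d\sqrt{N}/\varepsilon\})$. The plan is as follows.
\begin{enumerate}
\item Fix the dimensions. Set $d=\Theta(m)$: the candidates are identified with coordinates, possibly plus $O(1)$ auxiliary candidates. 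Set $N=\Theta(n)$, since the two data-dependent voters per user plus the query-dependent voters add up to $n$.
\item Build the data-dependent voters. For user $x_i$, we create two rankings whose pairwise contributions cancel except on pairs $(c_j,c_k)$, where together they add $2x_{i,j}x_{i,k}$ minus a fixed offset. So $D_{\bpi}(c_j,c_k)$ is an affine function of $Q_{\{j,k\}}(X)$. We choose a candidate $c_0$ in a "neutral" position, so that plurality scores and all other pairwise margins are controlled.
\item Build the query-dependent voters. For a query $\{j,k\}$ and threshold $\theta$, we add $L=O(n)$ voters ranking $c_j,c_k$ on top (half each, to secure both runoff slots) and $c_j,c_k$ above every other candidate by a large margin. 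This makes the election a contest between $c_j$ and $c_k$, with $c_j$ winning iff $Q_{\{j,k\}}(X)\ge\theta$. It also gives $\mov$ at least of order $|Q_{\{j,k\}}(X)-\theta|$, because the only cheap way to flip the outcome is to alter the $c_j$ versus $c_k$ margin, and each voter changes that margin by at most $2$.
\item Answer all queries. The randomizer outputs $y_i$ for the data voters are computed once. Each of the $\binom{d}{2}$ queries and each threshold on a grid of spacing $\tau$ gets fresh query-voter outputs. This keeps the protocol non-interactive and $\varepsilon$-LDP for each user.
\item Read off the marginals. Whenever $|Q-\theta|\gtrsim\tau$, the mechanism answers correctly with probability $1-\gamma$. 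A union bound over at most $m^2\cdot n/\tau\le m^2n$ (query, threshold) pairs succeeds with constant probability because $\gamma\le c/(m^2n^2)$.
\item Conclude. The largest grid threshold answered "$c_j$" then locates $Q$ to within $O(\tau)$, so every two-way marginal is estimated to error $O(\tau)$. Theorem~\ref{thm:edmonds-local-marginal-lower} then forces $\tau=\widetilde\Omega(\min\{n,m\sqrt n/\varepsilon\})$.
\end{enumerate}

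The main obstacle is the gadget design in steps~2 and~3, and it must hold for all three rules simultaneously. The difficulty is ensuring that the margin of victory is genuinely $\Omega(|Q-\theta|)$, and not smaller because some third candidate can be made to win cheaply. For Runoff, a cheap change might also alter the top two plurality candidates; for Maximin, the maximin scores of $c_j$ and $c_k$ depend on the other pairwise margins too.

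I would first handle this by making the query voters dominate everything except the $c_j$ versus $c_k$ comparison. Concretely, they give $c_j$ and $c_k$ an $\Omega(n)$ lead over all other candidates, both in plurality score and in pairwise margin. The data voters are then arranged antisymmetrically, so they only perturb $D(c_j,c_k)$. The $\min\{n,\cdot\}$ in the bound reflects exactly that these margins are capped at $O(n)$.
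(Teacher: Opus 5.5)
Your outline matches the paper's proof. The paper likewise uses data voters whose privatized messages are computed once and reused across all queries, and public query and threshold voters that put $c_j,c_k$ far ahead of every other candidate, with extra first-place padding for Runoff. It then runs a nonadaptive threshold scan, takes a union bound that $\gamma\le c/(m^2n^2)$ pays for, and derives a contradiction with Theorem~\ref{thm:edmonds-local-marginal-lower}.

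\textbf{The gap.} The construction at the heart of step~2 is missing; you only assert it. Your description of it (``arranged antisymmetrically, so they only perturb $D(c_j,c_k)$'') is also off. The data voters are fixed before any query is chosen, so they must encode all $\binom{m}{2}$ conjunctions at once, and they move every pairwise margin by up to $2N$. They are harmless only because the public voters keep every outsider at least $2N$ behind $c_j$ and $c_k$, which fewer than $N$ changed rankings cannot overturn.

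The paper's gadget works as follows. Fix an order $\sigma$ of the candidates with reverse $\sigma^R$, let $S_i=\{c_j:x_{i,j}=1\}$, and use
\[
\pi_{2i-1}:\ (\sigma|_{S_i})\succ(\sigma|_{\bar S_i}),
\qquad
\pi_{2i}:\ (\sigma^R|_{\bar S_i})\succ(\sigma|_{S_i}).
\]
For $c_j\succ_\sigma c_k$, the two rankings agree on $c_j\succ c_k$ when both candidates lie in $S_i$, and disagree in the other three cases. Their net contribution to $D(c_j,c_k)$ is therefore exactly $2x_{i,j}x_{i,k}$, with zero offset, for every pair simultaneously.

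Add a threshold block of $N-\theta$ copies of $c_j\succ c_k\succ\cdots$ and $N+\theta-1$ copies of $c_k\succ c_j\succ\cdots$. This gives $D(c_j,c_k)=2Q_{\{j,k\}}(X)-2\theta+1$, which is odd (so no ties) and positive iff $Q_{\{j,k\}}(X)\ge\theta$. The block also keeps the election size independent of $\theta$. You need that as well, since the assumed algorithm is for a fixed number of voters.

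\textbf{Two smaller corrections.} First, each user contributes two voters, each privatized by an $\varepsilon$-LDP randomizer, so the marginal mechanism is $2\varepsilon$-LDP, not $\varepsilon$-LDP as step~4 says. The paper invokes the marginal lower bound at $2\varepsilon$ and takes $\varepsilon_0=1/2$ so that $2\varepsilon\le1$.

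Second, the paper's form of Theorem~\ref{thm:edmonds-local-marginal-lower} bounds expected normalized $\ell_\infty$ error. With that form, constant success probability is not enough: you need failure probability $O(1/N)$. Your union bound does deliver this (failure at most $m^2n\gamma\le c/n$), so only the stated justification needs adjusting.
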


\subsection{STV}
\paragraph{Upper bound.} For STV, the aggregator must know noisy plurality scores after every possible
set of eliminations. Therefore, the $\ell_2$ norm of each contribution vector becomes $O(\sqrt{2^m}) = O(2^{m/2})$, which results in such a blow-up factor in the decisive margin.

More formally, let
    $\mathcal{I} := \{(S,c): S\subseteq \cC,\ |S|\ge2,\ c\in S\}$
and define $T(\pi)\in\{0,1\}^{\mathcal{I}}$ by setting
$T_{S,c}(\pi)=1$ if $c$ is the highest-ranked candidate in $S$ under $\pi$,
and $T_{S,c}(\pi)=0$ otherwise.
For each non-singleton set $S$, exactly one coordinate indexed by $S$ equals
one.  Hence
    $\|T(\pi)\|_2=r_T:=\sqrt{2^m-m-1}$.
Each voter releases
    $Z_i=\mathcal{R}_{\mathrm{DJW}}(T(\pi_i);r_T,\varepsilon)$.
By Lemma~\ref{lem:djw-randomizer}, this release is
$\varepsilon$-LDP, unbiased, and has coordinatewise sub-Gaussian error
with parameter at most $C_{\mathrm{DJW}}\kappa_\varepsilon r_T$.

Our algorithm then simply follows the STV protocol, eliminating candidates one-by-one, where we use the above release to estimate the score of every remaining candidate.

%The pseudocode and detailed STV usefulness statement and proof are deferred to
%Appendix~\ref{app:stv-details} (Algorithm~\ref{alg:noninteractive-stv} and
%Theorem~\ref{thm:ldp-stv-upper}).

\begin{theorem} \label{cor:ldp-stv}
The non-interactive algorithm described above for STV is $\varepsilon$-LDP and $(\tau,\gamma)$-useful for
    $\tau = \tilde{O}\Paren{\max\left\{\frac{1}{\eps}, 1\right\} \cdot 2^{m/2}\sqrt{n \log(1/\gamma)}}$.
\end{theorem}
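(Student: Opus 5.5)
Privacy follows immediately from \Cref{lem:djw-randomizer}. Each voter releases only $Z_i=\mathcal{R}_{\mathrm{DJW}}(T(\pi_i);r_T,\eps)$, with $\|T(\pi_i)\|_2=r_T$. The aggregator is post-processing of these releases, so the whole algorithm is non-interactive $\eps$-LDP.

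For usefulness, I work on a single good event. Write $\widetilde{N}_{S,c}=\sum_i Z_{i,(S,c)}$ and $N_{S,c}=\sum_i T_{S,c}(\pi_i)$; the latter is the plurality score of $c$ in the profile restricted to $S$. The error $\widetilde{N}_{S,c}-N_{S,c}$ is a sum of $n$ independent centered sub-Gaussians, each with parameter $C_{\mathrm{DJW}}\kappa_\eps r_T$ (take $u$ to be a coordinate vector). It is therefore sub-Gaussian with parameter $C_{\mathrm{DJW}}\kappa_\eps r_T\sqrt{n}$. A union bound over $|\mathcal{I}|\le m2^m$ coordinates shows that, with probability at least $1-\gamma$, every coordinate error is at most
\[
\eta := C_{\mathrm{DJW}}\kappa_\eps r_T\sqrt{2n\log(2m2^m/\gamma)} = O\Paren{\max\left\{\tfrac1\eps,1\right\} 2^{m/2}\sqrt{n(m+\log(1/\gamma))}}.
\]
This is $\tilde{O}(\max\{1/\eps,1\}\,2^{m/2}\sqrt{n\log(1/\gamma)})$. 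I set $\tau=2\eta+2$ and then argue deterministically on this event.

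The main step, and the main obstacle, is to relate MoV to the noisy elimination sequence. I argue by contrapositive. Suppose that on the good event the noisy run outputs a winner different from $r(\bpi)$. Then I construct a profile $\bpi'$ within $\eta+1$ voter changes of $\bpi$ on which true STV follows the noisy run's elimination sequence. This gives $\mov(\bpi,\mathrm{STV})\le\eta+1<\tau$, a contradiction.

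The natural approach considers the first round where the noisy run deviates from the true run. There the active set $S$ is the same for both runs. The noisy run eliminates some $c$ while the true run eliminates some $d\ne c$. The good event then gives $N_{S,c}-N_{S,d}\le 2\eta$, and moving about $\eta+1$ voters who rank $c$ first within $S$ so that they rank $d$ first would make $c$ the true loser at $S$. The difficulty is that this modification also changes the later rounds. So the argument really needs to be global: one modification must realize the entire noisy elimination path. I would handle this by bounding MoV directly by the \emph{path} quantity. If true STV has margin $\mov\ge\tau$, then at every round of the true path, the gap between the eliminated candidate and every other survivor must be large enough that no $\tau-1$ changes can flip it. Here I use the standard fact that changing $k$ voters changes each $N_{S,c}$ by at most $k$, so a gap exceeding $2k$ is robust. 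Because of this factor of two between gaps and voter changes, I would actually set $\tau$ large enough that every true-path gap strictly exceeds $2\eta$. By induction over rounds, the noisy run then eliminates exactly the true loser at each active set, since noise of size $\eta$ cannot overturn a gap larger than $2\eta$. Ties are handled by the fixed tie-breaking order, which costs the extra additive constants.

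The remaining technical point is to justify that $\mov\ge\tau$ forces every true-path gap to exceed $2\eta$. Suppose that at some active set $S$ on the true path, the eliminated candidate $\ell$ and another survivor $d$ satisfy $N_{S,d}-N_{S,\ell}\le 2\eta$. I then change about $\eta+1$ voters: voters ranking $d$ first in $S$ are modified to rank $\ell$ first in $S$, and the rest of their ranking relative to earlier-eliminated candidates is preserved. These changes leave all earlier rounds' decisions intact or improvable, and they make $d$ the eliminated candidate at $S$. The outcome must then change, because STV eventually elects a survivor of the true path's final sequence, and $d$ (or, via $\ell$, the eventual winner) is now removed. Making this ``outcome must change'' claim airtight, by choosing which voter to modify so that earlier rounds are unaffected and the final winner is provably different, is where I expect most of the care to go.
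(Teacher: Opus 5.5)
Your privacy argument and the concentration step (coordinatewise sub-Gaussian error, union bound over $|\mathcal{I}|\le m2^m$ coordinates, uniform error $\eta$) are correct and match the paper. The gap is in the structural step. You claim that $\mov(\bpi,r)\ge\tau$ forces every score gap along the \emph{true} elimination path to exceed $2\eta$. This is false, because the margin of victory measures how hard it is to change the winner, not how hard it is to change the elimination order. For instance, take $8k$ voters ranking $w\succ\ell\succ d$, $k$ ranking $\ell\succ w\succ d$, and $k$ ranking $d\succ w\succ\ell$. Candidates $\ell$ and $d$ are tied in the first round. Whichever is eliminated, $w$ then has $9k$ of the $10k$ votes and wins. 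So $\mov(\bpi,r)=\Theta(k)$, while the first-round gap is $0$. Your justification breaks exactly here: swapping voters so that $d$ is eliminated instead of $\ell$ does not make ``the outcome must change'' true. Hence you cannot show that the noisy run reproduces the true path. It may deviate harmlessly, and what you must control is the cost of a deviating path that ends at a wrong winner.

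The missing idea is to bound $\mov(\bpi,r)$ by the number of changes needed to make true STV follow the \emph{realized noisy} elimination sequence $c_1,\ldots,c_{m-1}$. On the good event, each $c_{i+1}$ has true round score at most $s_i^\star+2\eta$. The paper's \Cref{lem:margin-stv} shows that $m^2+\sum_i\bigl(\scoreplu(\bpi|_{\cC_{-i}},c_{i+1})-s_i^\star\bigr)$ changes suffice to make the winner differ from $w$. This is the global construction you flagged as necessary but then replaced with a false shortcut. In round $i$, it edits the rankings that are \emph{last} to contribute to $c_{i+1}$, swapping $c_{i+1}$ with the next candidate below it in $\cC_{-i}$. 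As a result, all later restricted profiles are unchanged, and by monotonicity of the round minima $s_i^\star$, no earlier-round score falls below that round's minimum. A staircase of at most $m(m+1)/2$ extra changes then removes the dependence on tie-breaking. This gives $\mov(\bpi,r)\le m^2+2(m-1)\eta$ (\Cref{cor:margin-stv-error}) and the paper's threshold $\tau=m^2+2(m-1)\eta+1$. The deviation costs add up over rounds, so this route does not yield your $\tau=2\eta+2$. The extra factor $m$ and the additive $m^2$ are harmless inside the $\tilde{O}$.
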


\paragraph{Lower bound.} 
Similar to above, we reduce from $k$-way marginal. We can now utilize a larger range of values of $k$, since the query-dependent rankings can be used to ``select'' the candidates eliminated in the first rounds. Specifically, we instantiate two extreme values of $k$ below: $k = 2$ and 
%We construct, for every
%queried set $A$ of $k$ coordinates, an STV election in which an auxiliary
%candidate $c^*$ wins exactly when the corresponding $k$-way marginal exceeds
%the tested threshold.  The case $k=2$ yields the two-way lower bound, while
$k=\Theta(m)$.
The former yields a better dependency for large $m$, whereas the latter
gives a better %candidate 
bound for small $m$.  %Combining the two
%choices gives the following corollary.

\begin{corollary}
\label{cor:main-ldp-stv-kway-edmonds}
There are constants $c,\varepsilon_0>0$ such that, for
$0<\varepsilon\le\varepsilon_0$, every non-interactive
$\varepsilon$-LDP algorithm for Single Transferable Vote that is
$(\tau, \gamma)$-useful with $\gamma \le c/(m^{m/2}n^2)$ must satisfy $\tau=\widetilde{\Omega}\!\left(
      \max\{L_2,L_{\mathrm{high}}\}\right)$ where
\[
\begin{aligned}
    L_2=\min\left\{n,\frac{m\sqrt n}{\varepsilon}\right\},
    L_{\mathrm{high}} =\min\left\{\frac{n}{m},
      \frac{\sqrt n}{\varepsilon}2^{\Omega(\sqrt m)}\right\}.
\end{aligned}
\]
\end{corollary}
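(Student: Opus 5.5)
We prove the two bounds separately and take the maximum. Both come from the reduction template described above for Condorcet, Maximin and Runoff: each data record $x_i$ becomes a few data-dependent rankings, randomized once and then reused across all queries; query-dependent rankings are added per query; the STV winner is then read off. From this we obtain a non-interactive $\eps$-LDP algorithm releasing all $k$-way marginals to additive error $O(\tau)$. Comparing with the known local marginal lower bound (Theorem~\ref{thm:edmonds-local-marginal-lower}) gives the claim. The failure probability requirement $\gamma\le c/(m^{m/2}n^2)$ is chosen so that a union bound survives over all queries, which number at most $\binom{d}{k}\le m^{m/2}$, and all thresholds, which number at most $n$, together with repeated runs.

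\textbf{Bound $L_2$ ($k=2$).} STV satisfies the majority criterion, but the pairwise construction must be adapted because STV does not look at pairwise margins directly. For a query $A=\{j,k\}$, the query-dependent voters are used to force elimination of every candidate except $c_j$, $c_k$ and a small gadget set in the early rounds. The padding gives those candidates very low first-place counts and gives $c_j,c_k$ plus auxiliary candidates large counts. The data-dependent votes are designed so that, once the field is restricted to $\{c_j,c_k\}$ and the gadget, the first-place transfers make the final comparison equal to $\sum_i x_{i,j}x_{i,k}$ minus a threshold. Every elimination before the decisive one is then separated by margin $\Omega(n)$, and the decisive round has margin proportional to the distance from the threshold. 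This makes the MoV at least $\min\{\Omega(n),\,|Q_A-\text{threshold}|\}$. Scanning thresholds then recovers two-way marginals on $d=\Theta(m)$ attributes with $N=\Theta(n)$ records, which yields $\tilde\Omega(\min\{n, m\sqrt n/\eps\})$.

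\textbf{Bound $L_{\mathrm{high}}$ ($k=\Theta(\sqrt m)$ or $\Theta(m)$).} Each attribute $j$ gets a candidate $c_j$, plus a distinguished candidate $c^*$. A user's data-dependent ranking lists the candidates $c_j$ with $x_{i,j}=0$ first, then $c^*$, then the rest. For a query $A$, the query-dependent voters support every $c_j$ with $j\notin A$ strongly, so the candidates in $A$ are eliminated first. After they are gone, user $i$'s vote transfers to $c^*$ exactly when $x_{i,j}=1$ for all $j\in A$. Hence $c^*$'s count equals $Q_A(X)$ plus padding, and $c^*$ survives iff $Q_A$ exceeds the threshold. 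To keep the elimination order robust, each padding gap is $\Omega(n/m)$. That explains the $n/m$ cap: the MoV is $\min\{\Omega(n/m), |Q_A-\text{threshold}|\}$. The $k$-way marginal lower bound for non-interactive LDP, with $k\approx\sqrt m$ on $d=\Theta(m)$ attributes, is $\tilde\Omega(\sqrt n\binom{d}{k}^{1/2}/\eps)$ or similar, which is $2^{\Omega(\sqrt m)}\sqrt n/\eps$.

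\textbf{Main obstacle.} The hardest step is the gadget design. The elimination order must be determined solely by the query voters, with $\Omega(n/m)$ gaps at every round, even though data votes transfer unpredictably between rounds. At the same time, exactly one round must encode the marginal. I would first try giving the query voters $\Theta(n)$ weight, with staggered counts spaced $n/m$ apart. I would then check that transfers from data voters, at most $N$ in total, cannot reorder any round. If needed, $N$ is scaled down relative to the padding.
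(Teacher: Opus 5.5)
Your overall route is the paper's. It combines a non-adaptive threshold scan that reuses the privatized data-dependent rankings (Lemma~\ref{lem:k-way-reduction}) with Theorem~\ref{thm:edmonds-local-marginal-lower} at $k=2$ and $k=\Theta(m)$. However, the STV gadget, which is the only rule-specific content, fails as you specify it.

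Your data rankings (zero-coordinates first, then $c^*$, then the rest) are exactly the paper's. But you let the query voters support the candidates \emph{outside} $A$, so that those in $A$ are eliminated. Record $i$'s top remaining candidate is then the first $c_j$ with $j\notin A$ and $x_{i,j}=0$, if one exists. So its vote reaches $c^*$ iff $x_{i,j}=1$ for all $j\notin A$. In other words, $c^*$ counts the conjunction over the supported set, not $Q_A(X)$.

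The paper uses the opposite orientation: $4N+\nu$ copies of $c_j\succ c^*\succ c'\succ c''\succ\cdots$ for each $j\in A$. The outsiders can then receive first-place votes only from the $2N$ data rankings, so they are eliminated first, in any order, while every protected candidate keeps at least $2N+\nu$ votes. No staggered schedule is needed, because Lemma~\ref{lem:margin-stv-lower} certifies a whole family of safe elimination sets rather than one fixed order. With this orientation the election has $(6k+18)N-2$ voters. So the same gadget with $k=2$ and $d=m-3$ has $N=\Theta(n)$ and yields $L_2$ directly. Your separate $L_2$ paragraph only says what a gadget should achieve; it does not construct one.

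The second gap is that ``$c^*$ survives iff $Q_A$ exceeds the threshold'' does not determine the winner, and the winner is all the mechanism outputs. The paper adds two competitors $c',c''$, each with $2N+\nu+2\tau-1$ first places, against $c^*$'s $2N+\nu+2\mu$. Whichever side survives absorbs the others' transfers and reaches at least $6N+3\nu+4\tau-2$ votes. Once $\nu=2N$, this beats every queried candidate, each of which has at most $6N+\nu+4(N-\tau)$. The $4(N-\tau)$ filler rankings that put $c_{j_A}$ first keep the voter count independent of $\tau$. The scan needs this because one $n$-voter mechanism is reused for every threshold. The safe-set certificate then gives $\mov(\bpi,r)\ge|\mu-\tau|$.

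Finally, take $k=\lfloor d/2\rfloor$, not $k\approx\sqrt m$. The cited bound is $(d/k)^{\Omega(\sqrt k)}/(\eps\sqrt N)$, not anything of the form $\binom{d}{k}^{1/2}$. For $k=\sqrt m$ it guarantees only $(\sqrt m)^{\Omega(m^{1/4})}$, i.e.\ $2^{c\,m^{1/4}\log m}$ for some constant $c$, which falls short of $2^{\Omega(\sqrt m)}$. For $k=d/2$ it gives $2^{\Omega(\sqrt m)}$. With $N=\Theta(n/m)$, this yields both the $n/m$ cap and a $\sqrt{n/m}$ factor, whose $m^{-1/2}$ is absorbed into the exponent.
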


We remark that, while we require the failure probability $\gamma$ to be quite small, the dependency on $\gamma$ in the upper bound in \Cref{cor:ldp-stv} is only $O(\sqrt{\log \gamma})$. Therefore, this only effects our upper bound by a factor of $\tilde{O}(\sqrt{m})$. On the other hand, the more interesting remaining gap here is the $2^{O(m)}$ term versus the $2^{\Omega(\sqrt{m})}$ term in our upper and lower bounds, respectively. Closing this gap is an interesting open question.
\section{Interactive Local Differential Privacy}
\label{sec:interactive-ldp}

We next consider the interactive LDP model (Definition~\ref{def:interactive-local-dp}).
Unlike a non-interactive protocol, an interactive protocol can decide what to
query next after seeing earlier private reports. This fits naturally with a few voting rules--including Plurality with Runoff, STV, and Condorcet--that already proceed in ``rounds''. Our interactive protocols essentially mirror the description of these rounds, where we estimate only
the statistics needed at the current round. This results in reduced noise (and decisive margins) compared to non-interactive LDP algorithms from the previous section. %, as we discussed in more detail below. % For runoff, it first selects two
%finalists and then estimates only their head-to-head result.  For STV, it
%estimates scores only for the candidate sets encountered during the actual
%elimination process.  For Condorcet voting, it runs a knockout tournament and
%then verifies the surviving candidate.

\subsection{Plurality with Runoff}

\paragraph{Upper bound.}
The protocol first privately estimates all plurality scores using half of the
privacy budget and selects the two noisy finalists.  It then spends the
remaining budget estimating their head-to-head scores and returns the noisy
runoff winner.  Thus, the second query depends only on the privatized first-stage
output.

\begin{theorem}
\label{thm:interactive-runoff-upper}
For every $\eps>0$ and $\gamma\in(0,1/2)$, there is an interactive
$\eps$-LDP algorithm for Plurality with Runoff that is
$(\tau,\gamma)$-useful with
%\[
%    \tau = 3+2C_{\mathrm{DJW}}\kappa_{\eps/2}
%    \sqrt{2n\log\!\left(\frac{4m}{\gamma}\right)}.
%\]
$\tau = O\Paren{\max\left\{\frac{1}{\eps}, 1\right\} \cdot \sqrt{n\log(m/\gamma)}}$
\end{theorem}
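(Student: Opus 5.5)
The plan is to analyze the two-stage protocol directly, using the sub-Gaussian tail from Lemma~\ref{lem:djw-randomizer} and a deterministic ``margin-of-victory implies separation'' argument. \emph{Privacy:} in stage one each voter releases $\mathcal{R}_{\mathrm{DJW}}(v_i;1,\eps/2)$ on their one-hot plurality vector. In stage two, the finalists $(c,c')$ are a function of the stage-one transcript only. Each voter then releases $\mathcal{R}_{\mathrm{DJW}}(\pm1;1,\eps/2)$ of the scalar $P_{c,c'}(\pi_i)\in\{-1,1\}$. For any fixed finalist pair the stage-two channel is $\eps/2$-LDP, so by (adaptive) basic composition the whole transcript is $\eps$-LDP.

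\emph{Concentration:} let $E$ be the event that every noisy plurality score satisfies $|\tscoreplu(a)-\scoreplu(a)|\le t$. Also require that, for every pair $(a,b)$, the stage-two estimate satisfies $|\widetilde D(a,b)-D_{\bpi}(a,b)|\le t$ \emph{if that pair were queried}. Sums of $n$ independent sub-Gaussians with parameter $C_{\mathrm{DJW}}\kappa_{\eps/2}$ give $\Pr[\neg E]\le 2m e^{-t^2/(2C^2\kappa^2 n)}+\Pr[\text{stage-two failure}]$. For stage two we only need to union bound over the realized pair, since fresh noise is used conditionally on the finalists. So $t=C_{\mathrm{DJW}}\kappa_{\eps/2}\sqrt{2n\log(4m/\gamma)}=O(\max\{1/\eps,1\}\sqrt{n\log(m/\gamma)})$ gives total failure at most $\gamma$.

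\emph{Deterministic step (the main obstacle):} show that if $\mov(\bpi,r)\ge \tau:=2t+3$, then on $E$ the noisy protocol outputs $r(\bpi)$. Let $c_1,c_2$ be the true finalists and $w=r(\bpi)$.

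First, the runoff margin must be large. Changing $k$ voters changes $D(c_1,c_2)$ by up to $2k$ without altering the plurality scores of $c_1$ and $c_2$ too much. Choose voters who currently rank neither $c_1$ nor $c_2$ first if possible, so that the finalists are untouched; otherwise a small adjustment is needed. This gives $|D(c_1,c_2)|\ge 2\mov-O(1)>2t$.

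Second, the plurality gap between the second finalist and the best non-finalist $d$ must exceed $2t$. Suppose $\scoreplu(c_2)-\scoreplu(d)\le 2t$ (with tie-breaking accounted for). Then moving about $t+1$ voters' first choice from $c_2$ to $d$ swaps $d$ into the finals. Further, $d$ can be made to beat $w$ head-to-head: if $w=c_2$ this swap removes the winner outright, and if $w=c_1$ we additionally flip roughly $|D(c_1,d)|/2$ votes, which could be large. So the case analysis must be done carefully, and this is where I expect the difficulty.

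The cleaner route is to observe that the noisy outcome only needs the \emph{winner} to reach the finals and beat the other noisy finalist. If a non-finalist $d$ displaces the loser $c_2$ in the noisy stage, the protocol still outputs $w$ provided $\widetilde D(w,d)>0$. Thus it suffices to show the following: every candidate $d$ that is within $2t$ of $w$'s competitor in plurality score (or exceeds it) satisfies $D(w,d)>2t$. Otherwise a $(t+O(1))$-vote change makes $d$ a finalist that beats $w$, contradicting $\mov\ge\tau$. We also need $w$ itself to stay in the noisy top two, which needs $\scoreplu(w)$ to exceed the third-highest score by more than $2t$. This holds by the same swap argument, moving votes from $w$ to the third candidate.

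On $E$, every comparison that the noisy protocol makes involving $w$ is then decided correctly, so it outputs $w$.
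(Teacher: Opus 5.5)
There is a genuine gap, at exactly the step you flagged. Your protocol, the adaptive composition argument, and the union bound over the realized statistics are fine, and they match the paper's route (Algorithm~\ref{alg:interactive-plu-runoff} together with Lemma~\ref{lem:margin-runoff}).

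The problem is refuting $D(w,d)\le 2t$ for a noisy opponent $d\ne c_2$. To do that you must make $d$ a \emph{true} finalist. The only construction you give is shifting about $t+1$ first-place votes from $c_2$ to $d$, and that lifts $d$ above $c_2$ but not above the other candidates. Suppose $\scoreplu(c_2)-\scoreplu(d)$ is close to $2t$ and a third candidate $x$ has $\scoreplu(x)=\scoreplu(c_2)-1$. This is consistent with $E$ and with $d$ being the noisy pick. After the shifts, $d$ is still about $t$ votes below $x$, so the finals become $\{w,x\}$, and flipping $D(w,d)$ afterwards changes nothing. Your ``cleaner route'' only asserts that ``a $(t+O(1))$-vote change makes $d$ a finalist that beats $w$'' without a construction. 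The count is also off: making $d$ beat $w$ head-to-head alone can cost $t+1$ changes.

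The paper's Lemma~\ref{lem:margin-runoff} makes the same slip. It moves $t\le\eta+1$ votes from the true finalist $d$ to $a$ and concludes that $a$ ``is therefore a true finalist''.

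The repair is to take all changes from rankings that put $w$ first, moving $d$ to the top of each.
\begin{itemize}
\item On $E$, the noisy selection of $d$ gives $\scoreplu(x)-\scoreplu(d)\le 2t$ for every $x\notin\{w,d\}$.
\item So $k=\lfloor 2t\rfloor+1$ such changes make $d$ strictly outrank every candidate other than $w$.
\item Each change also lowers $D(w,d)$ by $2$, so in total by $2k>2t\ge D(w,d)$.
\item Hence either $w$ leaves the finals, or the finals are $\{w,d\}$ and $d$ strictly wins. This uses $k\le 2t+1<\tau$ changes.
\end{itemize}
If $\scoreplu(w)<k$, rewrite all $w$-first rankings to put $d$ first and $w$ last. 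Since a true runoff winner has $\scoreplu(w)\ge1$, $d$ now has positive score. A zero-score $w$ can remain a finalist only if $d$ is the sole candidate with positive score, and then $d$ beats $w$ unanimously.

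Your argument for keeping $w$ in the noisy top two has a similar looseness. It must push \emph{both} other candidates above $w$, which costs up to about $2t+2$ changes (the paper's first case) rather than $t+O(1)$. It also needs the same small-$\scoreplu(w)$ edge case.

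Both fixes fit under your $\tau=2t+3=O(\max\{1/\eps,1\}\sqrt{n\log(m/\gamma)})$, so the theorem goes through. The same repair shows that the bound stated in Lemma~\ref{lem:margin-runoff} remains valid: its second case costs at most $\max\{\lfloor2\eta\rfloor,\lfloor\zeta/2\rfloor\}+1$ changes.
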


\paragraph{Proof sketch.}
With probability $1-\gamma$, every score used in the two stages is
within
$\lambda=O(\kappa_{\eps/2}
\sqrt{n\log(m/\gamma)})$
of its true value.  If the protocol errs, changing at most $2\lambda+2$
rankings either removes the true winner from the runoff or makes the noisy
opponent reach the runoff and defeat it.  %The extra unit in the theorem handles
%the integer margin boundary.
\hfill $\square$

\subsection{STV}

\paragraph{Upper bound.}
STV can be naturally implemented:
In each of the $m-1$ elimination rounds, the protocol privately estimates the
plurality scores restricted to the remaining candidates and eliminates the
candidate with the smallest noisy score.
Each round uses privacy budget $\eps/(m-1)$.

The analysis uses the following structural fact.  Consider any proposed STV
elimination sequence that ends at a candidate other than the true winner. Then the margin of
victory is at most the \emph{sum}, over all rounds, of the proposed candidate's
plurality-score excess above the round minimum plus a quadratic term in the number of candidates.

This gives the following result, where an $O(m)$ factor is from the DP composition theorem and the other is due to the above fact--which accumulates the errors across rounds.

\begin{theorem}
\label{thm:interactive-stv-upper}
For every $\eps>0$ and $\gamma\in(0,1/2)$, there is an interactive
$\eps$-LDP algorithm for STV that is $(\tau,\gamma)$-useful with $\tau = O\Paren{m^2 + m \cdot \max\left\{\frac{m}{\eps}, 1\right\} \cdot \sqrt{n\log(m/\gamma)}}$.
%\[
%    \tau=4(m-1)C_{\mathrm{DJW}}\kappa_{\eps/(m-1)}
%    \sqrt{2n\log\!\left(\frac{2m^2}{\gamma}\right)}+1.
%\]
\end{theorem}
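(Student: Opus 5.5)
The protocol is the round-by-round STV simulation described above. In round $t=1,\dots,m-1$, with active set $\activecandidates_t$, each voter $i$ releases $\mathcal{R}_{\mathrm{DJW}}(v_i^{(t)};1,\eps/(m-1))$. Here $v_i^{(t)}$ is the one-hot vector of voter $i$'s top candidate in $\pi_i|_{\activecandidates_t}$. The aggregator sums these reports and eliminates the candidate with the smallest noisy score. Privacy follows from basic composition over the $m-1$ adaptively chosen rounds. Each round's randomizer is $\eps/(m-1)$-LDP conditioned on the transcript so far, so the transcript is $\eps$-LDP in the sense of Definition~\ref{def:interactive-local-dp}.

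For accuracy, I would first bound the noise. By Lemma~\ref{lem:djw-randomizer} and independence across voters, each noisy coordinate in round $t$ deviates from the true restricted plurality score by a sub-Gaussian amount with parameter $C_{\mathrm{DJW}}\kappa_{\eps/(m-1)}\sqrt n$. A union bound over at most $m$ coordinates and $m-1$ rounds shows that, with probability $1-\gamma$, every score used is within
\[
\lambda=O\bigl(\kappa_{\eps/(m-1)}\sqrt{n\log(m/\gamma)}\bigr)=O\bigl(\max\{m/\eps,1\}\sqrt{n\log(m/\gamma)}\bigr)
\]
of its true value. On this event, consider round $t$ of the realized elimination sequence $e_1,\dots,e_{m-1}$. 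The eliminated candidate $e_t$ has true restricted score at most $\min_{c\in\activecandidates_t}\scoreplu(c;\activecandidates_t)+2\lambda$.

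It then suffices to prove the structural fact. Suppose an elimination sequence $e_1,\dots,e_{m-1}$ ends at a candidate $w'\neq r(\bpi)$, and in each round $t$ the excess $\delta_t:=\scoreplu(e_t;\activecandidates_t)-\min_c\scoreplu(c;\activecandidates_t)$ is known. The claim is that $\mov(\bpi,\mathrm{STV})\le\sum_t\delta_t+O(m^2)$. I would prove it by constructing a profile $\bpi'$ on which STV follows exactly this sequence. Process rounds in order, and in round $t$ pick voters whose top active candidate is $e_t$. Rewrite each such ballot to put the current round-minimizer, or $w'$, on top. This transfers up to $\delta_t+1$ votes away from $e_t$, which drops $e_t$ strictly below the minimum and beats the fixed tie-breaking order.

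The key point is to choose the modified ballot so that it never interferes with later rounds. Put the recipient candidate first and then $w'$, so a modified voter's vote in every later round goes either to a surviving candidate that is still "low" or to $w'$. The danger is that raising some other candidate's score makes it no longer eligible for elimination later. I would handle this by always sending transferred votes to $w'$ (the final survivor, which is never eliminated), whenever $w'$ is still active, which it always is. With this choice, lowering $e_t$ only increases $w'$, and every other candidate's restricted score in later rounds can only decrease or stay unchanged. So earlier edits make later eliminations easier, not harder. The resulting cost is at most $\sum_t(\delta_t+1)$, plus an $O(m)$ per-round slack for tie-breaking and for voters already modified whose effect must be accounted for, giving $O(m^2)$ overall.

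Combining the two pieces, on the good event any erroneous output satisfies $\mov\le(m-1)\cdot2\lambda+O(m^2)$. Therefore the protocol is useful with
\[
\tau=O\bigl(m^2+m\max\{m/\eps,1\}\sqrt{n\log(m/\gamma)}\bigr),
\]
as claimed. I expect the main obstacle to be the structural lemma, specifically verifying the monotonicity claim. Rewriting ballots to favor $w'$ must never cause $e_s$ for $s>t$ to gain votes in its round. Votes moved off $e_t$ land on $w'$ and nowhere else, so $e_s$ loses nothing it needs. A moved voter originally ranked $e_t$ on top and might later have transferred to $e_s$. Under $\bpi'$ that voter stays on $w'$, which only decreases $e_s$'s later score, so the monotonicity holds. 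I would check carefully that this argument really yields a bounded per-round additive slack.
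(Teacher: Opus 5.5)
Your privacy argument (adaptive composition of $m-1$ rounds at budget $\eps/(m-1)$) matches the paper. So do the conditional union bound that puts every realized round score within $\lambda$, and the reduction to a structural bound $\mov(\bpi,r)\le\sum_t\delta_t+O(m^2)$ with $\delta_t\le2\lambda$. The gap is in your proof of that structural bound: the rewrite ``put $w'$ on top'' does not have the monotonicity you claim.

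First, an edit made for round $t$ also changes \emph{earlier} rounds, which you never check. Take a ballot whose top in $\activecandidates_t$ is $e_t$. In a round $s<t$, its top is $e_t$ or some $e_u$ with $s\le u<t$. Moving $w'$ to the top takes that vote away from $e_t$ or $e_u$. Unless $u=s$, that candidate must survive round $s$, so it can now be eliminated prematurely.

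Second, for later rounds the claim that earlier edits ``make later eliminations easier'' is false. Only decreases of $e_s$ help in round $s$; decreases of candidates that must survive round $s$ hurt. In round $s>t$ an edited ballot would have transferred to its top in $\activecandidates_s$, which is typically such a survivor. For instance, suppose every ballot ranking $e_1$ first ranks $y$ second, and in round $2$ the score of $y$ exceeds that of the true minimizer $e_2$ by at most $\delta_1$. Redirecting $\delta_1+1$ of these ballots to $w'$ then puts $y$ strictly below $e_2$, and $y$ is eliminated in round $2$.

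The damage to a round is bounded only by the number of edits made for other rounds, which can be of order $\sum_r\delta_r$, not by an $O(m)$ slack. Repairing it needs further edits that damage other rounds, so no bound of the form $\sum_t\delta_t+O(m^2)$ follows.

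Removing $\delta_t+1$ votes from $e_t$ to get strictly below the minimum is also problematic. It is impossible when the minimum is $0$. Even with the best choice of ballots, it can push $e_t$ below the minimum of an earlier round whose minimum has the same value.

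The missing ideas are those in the paper's STV upper certificate (Lemma~\ref{lem:margin-stv}):

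\textbf{Swap instead of promote.} Swap $e_t$ with the next candidate $d$ below it among $\activecandidates_t$, rather than promoting $w'$. Then in every round the ballot's vote is either unchanged or moves from $e_t$ to $d$, which survives through round $t$. All restrictions to later active sets are literally unchanged.

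\textbf{Choose latest contributors.} Modify exactly $\delta_t$ ballots, chosen among those whose first contribution to $e_t$ comes latest. The contributor sets are nested and the round minima $s^\star_j$ are nondecreasing. Hence $e_t$ keeps at least $\min\{u_j,s^\star_t\}\ge s^\star_j$ votes in every earlier round $j$, where $u_j$ is its round-$j$ score.

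\textbf{Break ties by adding, not subtracting.} Lower each $e_t$ only down to the round minimum. Then resolve all ties at once with a staircase: give the $j$-th candidate of the sequence $j$ extra first places, taken from ballots that rank the true winner $w$ first. This costs at most $m(m+1)/2$ changes; if $w$ has fewer first places than that, zero them out instead. The staircase forces each round to eliminate either the scheduled candidate or $w$. That suffices to change the winner, and it is the source of the $m^2$ term.
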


\paragraph{Proof sketch.}
A union bound over all scores released along the realized path gives
simultaneous error 
$\lambda=O\Paren{\kappa_{\eps/(m-1)}
\sqrt{n\log(m/\gamma)}}$.
The eliminated candidate is within $2\lambda$ of the true round minimum.
The tie-breaking choice therefore makes the preceding structural fact applicable, so an
erroneous output can occur only when
$\mov(\bpi,r)\le2(m-1)\lambda + O(m^2)$.
\hfill $\square$

\subsection{Condorcet}

\paragraph{Upper bound.} Our Condorcet protocol organizes the candidates into a knockout tournament.  In each
round, the remaining candidates are paired, and the protocol privately evaluates %a disjoint collection of 
their head-to-head contests. %in
%one batched release.  
After $\lceil\log_2 m\rceil$ rounds, only one candidate remains. The protocol then uses one final round to verify that the surviving candidate defeats every opponent;
otherwise it returns $\bot$.

The analysis relates pairwise margins to the $\mov$.  If $w$ is a
Condorcet winner, the margin of victory is at most one plus $w$'s smallest
head-to-head victory margin.  If the outcome is $\bot$, then, for any candidate
$c$, the margin of victory is at most $1+3\log_2 m$ plus three times the
largest head-to-head margin by which another candidate defeats or ties $c$.

\begin{theorem}
\label{thm:interactive-condorcet-upper}
For every $\eps>0$ and $\gamma\in(0,1/2)$, there is an interactive
$\eps$-LDP algorithm for Condorcet voting that is
$(\tau,\gamma)$-useful with
%\begin{align*}
%    Q&=\lceil\log_2 m\rceil+1, \\
%    \lambda_{\mathrm{C}}
%    &=C_{\mathrm{DJW}}\kappa_{\eps/Q}
%    \sqrt{2mn\log\!\left(\frac{4m^2}{\gamma}\right)}, \\
%    \tau&=2+3\log_2 m+6\lambda_{\mathrm{C}}.
%\end{align*}
%In particular, for $0<\eps\le1$,
%\[
%    \tau=O\!\left(
%    \log m+
%    \frac{\sqrt{mn\log(m/\gamma)}\log m}{\eps}
%    \right).
%\]
$\tau = \tilde{O}\Paren{\max\left\{\frac{1}{\eps}, 1\right\} \cdot \sqrt{mn \log(m/\gamma)}}$
\end{theorem}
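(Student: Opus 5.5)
We would analyze the knockout-plus-verification protocol described above with $Q=\lceil\log_2 m\rceil+1$ rounds, each spending budget $\eps/Q$. Privacy follows from adaptive composition: in every round each voter is queried at most once, and the query's channel depends only on the transcript so far. In round $j$ each voter releases, through $\mathcal{R}_{\mathrm{DJW}}$ with budget $\eps/Q$, the vector of pairwise signs $P_{a,b}(\pi_i)$ restricted to the pairs contested in that round. In knockout rounds the pairs are disjoint, so this vector has at most $m/2$ coordinates. In the verification round it has $m-1$ coordinates (the survivor against everyone). In both cases the $\ell_2$ norm is at most $\sqrt{m}$. By Lemma~\ref{lem:djw-randomizer} and independence across voters, each estimated margin $\widetilde D(a,b)$ is sub-Gaussian around $D_{\bpi}(a,b)$ with parameter $O(\kappa_{\eps/Q}\sqrt{mn})$. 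A union bound over the at most $2m$ margins queried along the realized path therefore gives, with probability $1-\gamma$, simultaneous error $\lambda=O(\kappa_{\eps/Q}\sqrt{mn\log(m/\gamma)})$. Since $\kappa_{\eps/Q}=O(\max\{\log m/\eps,1\})$, this already has the claimed form. The remaining task is to show that on this good event an error forces $\mov(\bpi,r)\le O(\log m+\lambda)$.

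\textbf{Case 1: a Condorcet winner $w$ exists.} We first prove the structural fact $\mov\le 1+\min_{a}D_{\bpi}(w,a)$. Flipping about $D/2+1$ voters who rank $w\succ a$ so that they rank $a\succ w$ makes the margin nonpositive, so $w$ is no longer the Condorcet winner. On the good event, if every true margin $D(w,a)$ exceeds $\lambda$, then $w$ wins each of its tournament matches and reaches the final round, and the verification succeeds; the protocol outputs $w$. Hence an error implies $\min_a D(w,a)\le\lambda$, so $\mov\le 1+\lambda$.

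\textbf{Case 2: the true outcome is $\bot$, but the protocol outputs some $c$.} The verification round then gives $\widetilde D(c,a)>0$ for all $a$, so on the good event $D(c,a)>-\lambda$ for every $a$. We need a structural bound of the form $\mov\le 1+3\log_2 m+3\max_{a}\max\{0,D(a,c)\}$. The plan is to make $c$ a Condorcet winner by modifying voters. For each opponent $a$ that beats or ties $c$, we must move $c$ above $a$ in about $D(a,c)/2+1$ ballots. Done naively, opponent by opponent, this costs a sum over up to $m-1$ opponents rather than a maximum. The fix is to take a single set of voters and move $c$ to the top of each of their ballots. This raises $D(c,a)$ simultaneously for every $a$ ranked above $c$ in those ballots, by $2$ per changed voter and per such $a$.

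The hard step is choosing those voters so that every bad opponent $a$ is overtaken with only $O(\max_a D(a,c)+\log m)$ changes. Here is the first thing we would try. Pick voters greedily: at each step, choose a voter who prefers the largest number of the still-unbeaten opponents to $c$. Each unbeaten $a$ is preferred to $c$ by at least $n/2$ voters, so an averaging argument says some voter ranks at least half of the unbeaten opponents above $c$. Batches of $O(\max D)$ such voters should then halve the set of unbeaten opponents. Done directly, that yields $O(\max D\cdot\log m)$. Getting the stated additive $3\log_2 m$ with a constant factor $3$ on $\max D$ may instead require exploiting the fact that deficits shrink across batches, or picking voters who favor $c$ only slightly.

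If the greedy argument only gives the multiplicative $\log m$ loss, that is still absorbed into the $\tilde O$ of the theorem, so we would accept it. Plugging in $\max_a\max\{0,D(a,c)\}<\lambda$ then gives $\tau=\tilde O(\max\{1/\eps,1\}\sqrt{mn\log(m/\gamma)})$.
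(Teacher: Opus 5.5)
Your proposal is correct and follows the paper's skeleton. You use $Q=\lceil\log_2 m\rceil+1$ batched DJW releases with $\ell_2$-norm at most $\sqrt m$, composed adaptively. You take a union bound, conditional on the earlier transcript, over the $O(m)$ margins on the realized path. The Condorcet-winner case is handled through the certificate $\mov(\bpi,r)\le 1+\min_a D_{\bpi}(w,a)$.

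The genuine difference is in the $\bot$ case, where you and the paper choose the voter set $X$ differently. Both move $c$ to the top of every ballot in $X$. The paper chooses voters greedily with respect to the exponentially weighted potential $f(X)=\sum_{a}(2^{r_a}-1)$, where $r_a$ is the number of further flips opponent $a$ still needs. A uniformly random unmodified voter lowers $f$ by a factor $3/4$ in expectation, and $f(\emptyset)\le m2^{\Delta}$ with $\Delta=\max_{a\ne c}\max\{0,D_{\bpi}(a,c)\}$. This gives the additive certificate $\mov(\bpi,r)\le 1+3\log_2 m+3\Delta$.

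The weighting makes a voter who helps opponents with large residual deficit count for more. It is exactly the device that removes the loss you were trying to avoid. Your unweighted greedy with batch halving gives only $\mov(\bpi,r)=O((\Delta+1)\log m)$. In the final threshold this roughly doubles the $\log m$ factor in front of $\max\{1/\eps,1\}\sqrt{mn\log(m/\gamma)}$, which the $\tilde O$ in the statement absorbs. Your argument is more elementary. The paper's gives a sharper, clean certificate that it reuses verbatim for the non-interactive Condorcet bound.

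One step you leave as ``should'' needs a short justification; here it is. Let $U_0$ be the set of unbeaten opponents at the start of a batch. Each $a\in U_0$ needs at most $\Delta/2+1$ further flips.

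Any still-unbeaten $a$ is preferred to $c$ by at least $n/2$ voters in the current profile. All of these are unmodified voters, so at least half of the remaining unmodified voters rank $a$ above $c$. Averaging then shows that each greedy step removes at least $|U_t|/2$ units from the total residual deficit of $U_0$.

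While more than half of $U_0$ is unbeaten, each step therefore removes more than $|U_0|/4$ units. After $2\Delta+4$ steps at most half of $U_0$ can remain unbeaten. So the batch size must be $O(\Delta+1)$ rather than $O(\max D)$, so that ties ($\Delta=0$) are covered.
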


\paragraph{Proof sketch.}
Each voter's contribution in every batch has $\ell_2$-norm at most $\sqrt m$.  With probability
$1-\gamma$, all estimated pairwise margins are therefore within $\lambda = \tilde{O}\Paren{\max\left\{\frac{1}{\eps}, 1\right\} \cdot \sqrt{mn \log(m/\gamma)}}$
$2\lambda_{\mathrm{C}}$ of their true values.  The preceding structural bounds imply that the protocol only errs if $\mov$ is at most $O(\lambda \cdot \log m)$. \hfill $\square$
%show that, when the true outcome is a candidate, an error requires margin at
%most $1+2\lambda_{\mathrm{C}}$; when the true outcome is $\bot$, an error
%requires margin at most $1+3\log_2 m+6\lambda_{\mathrm{C}}$.  
%The theorem again adds one for the integer boundary.

Note that it is crucial to use an $O(\log m)$-round knockout tournament in our implementation. If we instead naively implement the protocol in $m$ rounds (eliminating one candidate in each round), then we need to apply the DP composition theorem, resulting in a privacy budget of only $\eps/m$ per round and an error of $\tilde{O}_{\eps}(m\sqrt{n})$ instead. This would not even improve upon our non-interactive guarantee (\Cref{thm:non-int-ldp-algo-pairwise})!

\subsection{Lower Bound}

The interactive guarantees cannot improve the dependence on the number of
voters below the usual locally private mean-testing rate, even with only two
candidates.

\begin{theorem}
\label{thm:interactive-plurality-lower}
There are constants $C,\varepsilon_0>0$ such that the following
holds for two-candidate Plurality.  For
$0<\varepsilon\le\varepsilon_0$ and $0<\gamma\le1/6$, no interactive
$\varepsilon$-LDP algorithm is $\left(C\min\left\{n,\frac{\sqrt n}{\varepsilon}\right\},\gamma\right)$-useful.
\end{theorem}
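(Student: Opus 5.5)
We reduce from locally private two-point testing of a Bernoulli mean. Identify each vote in a two-candidate election with a bit $x_i\in\{0,1\}$, where $x_i=1$ means the voter ranks $c_1$ first. For a parameter $\eta\in(0,1/4]$ to be chosen, let $P_+$ be the distribution of profiles in which every vote is i.i.d.\ with $\Pr[x_i=1]=1/2+\eta$, and let $P_-$ be the same with $1/2-\eta$. For two candidates, $\mov$ is, up to an additive constant from tie-breaking, half the absolute difference of the plurality scores: moving $k$ votes shifts the difference by $2k$. Under $P_\pm$ this difference has mean $\pm 2\eta n$ and standard deviation $O(\sqrt n)$.

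\textbf{Choice of $\eta$.} We take
\[
\eta = \min\left\{\tfrac14,\ \frac{c_0}{\varepsilon\sqrt n}\right\}
\]
for a suitable constant $c_0$. Chebyshev's inequality, or Hoeffding's inequality when $\eta$ is small relative to the fluctuations, shows that with probability at least $0.9$ the profile drawn from $P_+$ has winner $c_1$ and $\mov\ge \eta n/2 \ge C\min\{n,\sqrt n/\varepsilon\}$ for a small constant $C$. The symmetric statement holds for $P_-$ with winner $c_2$. One subtlety is the regime $\eta\sqrt n$ of order $1$, where the $O(\sqrt n)$ fluctuation is comparable to the mean shift. I would handle it by choosing $c_0$ large enough that $\eta n \gg \sqrt n$, which is possible because $\varepsilon\le\varepsilon_0$ is small. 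If that fails, I would take $c_0$ a large constant and let $\varepsilon_0$ absorb the difference. Only the constants change in either case.

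\textbf{Turning usefulness into a test.} Now suppose $\alg$ is interactive $\varepsilon$-LDP and $(\tau,\gamma)$-useful with $\tau=C\min\{n,\sqrt n/\varepsilon\}$ and $\gamma\le 1/6$. Run $\alg$ on the sampled profile and output ``$+$'' if and only if it returns $c_1$. On good profiles the test errs with probability at most $\gamma$, so its error under each hypothesis is at most $\gamma+0.1 < 1/3$ after adjusting constants. It follows that the total variation distance between the transcript distributions under $P_+$ and $P_-$ must be at least a constant, say $1/3$.

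\textbf{Contradiction from interactive LDP.} This step is the key one. We invoke the interactive (sequentially interactive and fully interactive) version of the strong data-processing inequality of \citet{DuchiJW18}: for i.i.d.\ data under two product distributions, the KL divergence between transcripts of any $\varepsilon$-LDP protocol is at most
\[
O\bigl((e^\varepsilon-1)^2\bigr)\, n\,\mathrm{TV}(p_+,p_-)^2 = O(\varepsilon^2 n \eta^2) = O(c_0^2)
\]
for $\varepsilon\le\varepsilon_0$. The main obstacle is making sure this bound applies to the fully interactive model of Definition~\ref{def:interactive-local-dp}, where a voter may be queried many times under an overall $\varepsilon$ budget. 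If only the sequentially interactive form is available, I would cite the extension to full interactivity (e.g.\ via Joseph et al.\ or Acharya et al.'s chi-square contraction for interactive protocols) rather than reprove it.

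\textbf{Conclusion.} Pinsker's inequality then bounds the total variation distance by $O(c_0)$. Choosing $c_0$ small makes this distance smaller than $1/3$, which contradicts the previous step. When $\eta=1/4$, that is, when $\sqrt n/\varepsilon\gtrsim n$, the margin threshold becomes $\Theta(n)$ and the same argument goes through, since the KL bound is then $O(\varepsilon^2 n)$, which is small in exactly that regime.
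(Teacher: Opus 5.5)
Your overall route matches the paper's. You use biased-coin hypotheses at total-variation distance $\Theta(1/(\varepsilon\sqrt n))$, use concentration to turn a useful protocol into a test, and finish with a fully interactive LDP testing lower bound. Two points need correcting before the third, which is a step that fails.

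First, the strong data-processing inequality of \citet{DuchiJW18} covers only sequentially interactive channels, whereas the interactive model here may query a voter repeatedly. The paper simply invokes the fully interactive simple-hypothesis-testing bound of \citet{JosephMNR19}: error $1/3$ at total-variation distance $\alpha$ requires $\Omega(1/(\varepsilon^2\alpha^2))$ users. You need nothing more than that. Second, your constants are inconsistent as written: you ask for $c_0$ large in the concentration step and small in the divergence step. The correct order is to fix $c_0$ small and then use $\varepsilon\le\varepsilon_0$. In the unsaturated regime $\eta^2 n=c_0^2/\varepsilon^2\ge c_0^2/\varepsilon_0^2$, so Hoeffding's failure probability $\exp(-\eta^2 n/2)$ is below $0.1$ once $\varepsilon_0$ is a small enough multiple of $c_0$. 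This is exactly the role of $\varepsilon_0$ in the paper, via $\exp(-a^2/(8\varepsilon_0^2))\le 1/12$.

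The step that genuinely fails is the saturated case $\eta=1/4$. It occurs whenever $\varepsilon\sqrt n\le 4c_0$. For small enough $\varepsilon$ this includes every fixed $n$, including tiny $n$ where no concentration is available. For $n=1$, the profile drawn from $P_+$ has the wrong winner with probability $1/4$. For $n=2$ with ties broken against $c_1$, it does so with probability $7/16$. Hence your claim that the profile is good with probability at least $0.9$ is false, and so is the per-hypothesis error bound below $1/3$. The divergence bound, however small, then has no test to contradict.

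The fix is the paper's case split. When $n\varepsilon^2$ is below a small constant $b$, take $P_\pm$ to be point masses on $c_1\succ c_2$ and $c_2\succ c_1$. The profiles are then deterministic and unanimous with $\mov\ge n/2\ge Cn$ for $C\le 1/2$. Usefulness therefore yields a test with error at most $\gamma\le 1/6$, contradicting the testing lower bound at total-variation distance $1$. When $n\varepsilon^2\ge b$, your choice $\eta=c_0/(\varepsilon\sqrt n)\le c_0/\sqrt b$ is at most $1/2$ once $c_0\le\sqrt b/2$, so no cap on $\eta$ is needed.
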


\paragraph{Proof sketch.}
We reduce the fully interactive locally private simple hypothesis testing problem to the winner
publication problem.  When $\varepsilon$ is small, use the two point-mass
distributions, which produce unanimous profiles with margin $\Omega(n)$.
Otherwise, use two binary distributions separated in total variation by
$\Theta(1/(\varepsilon\sqrt n))$; concentration gives margin
$\Omega(\sqrt n/\varepsilon)$.  The fully interactive testing lower bound
of~\citet{JosephMNR19} rules out a constant-error test in both regimes.

With two candidates, Plurality, Condorcet, Maximin, Plurality with Runoff, and
STV all select the majority candidate.  Hence
Theorem~\ref{thm:interactive-plurality-lower} gives the same
$\Omega(\min\{n,\sqrt n/\eps\})$ lower bound for every voting rule studied in
this work.

\section{Conclusion}
\label{sec:conclusion}

We gave upper and lower bounds on the decisive margin of victory for private
voting mechanisms.  In the central model,
all the rules considered have the same logarithmic margin dependence, although
the optimal STV guarantee cannot be achieved in polynomial time unless
NP $\subseteq$ BPP.

In the local model, the required margin grows with the amount of information
each voter must report.  This particularly affects multi-stage rules.
Interaction reduces the cost by publishing statistics only for the realized
stages, whereas a non-interactive protocol must publish all statistics that
might be needed later.  Consequently, non-interactive protocols can require
substantially larger margins, especially for STV.

In terms of open questions, several gaps between upper and lower bounds remain in the local model. Specifically, in the interactive local model, our upper bounds for STV and Condorcet still depend polynomially on $m$. On the other hand, proving a matching lower bound seems beyond currently known techniques, since our output is simply a candidate among $m$ choices. Meanwhile, existing work (e.g. \cite{AcharyaCST23}) requires $2^{m^{\Omega(1)}}$ different possible outcomes to arrive at a lower bound that is polynomial in $m$. Thus, closing this gap might have implications beyond private voting.

\subsection*{Acknowledgment} We thank Jittat Fakcharoenphol for helpful discussion during the early stages of this work.

\newpage

\bibliography{ref}

\newpage

\appendix

\section{Margin-of-Victory Certificates}
\label{app:structural-mov}

This section collects the voting-rule facts used in the later utility proofs.
For each rule, we show that an inaccurate winner can be reported only when the
margin of victory is small relative to the errors in the estimated scores or
pairwise margins.  The later analyses therefore have two steps: they bound the
estimation error of the private algorithm and then apply the corresponding
result from this section.  The same second step applies to both the
non-interactive and interactive algorithms.

Throughout this appendix, when candidate scores are estimated,
$\widetilde{s}(c)$ denotes the estimated score of candidate $c$, and
$\widetilde{s}=(\widetilde{s}(c))_{c\in\cC}$ denotes the vector of these
estimates.  Each result specifies the corresponding true score being
estimated.

\subsection{Plurality}

The margin of victory for positional scoring rules, including Plurality, can
be computed exactly~\citep[Theorem~5]{Xia12}.  The following elementary
certificate is also closely related to the score-gap bound used by
\citet[Lemma~3]{BHATTACHARYYA2021103476}.

\begin{lemma}[Plurality certificate]
\label{lem:margin-plurality}
Suppose $w$ is the Plurality winner on $\bpi$.  For every $c\ne w$,
\[
    \mov(\bpi,r)
    \le
    \left\lfloor
    \frac{\scoreplu(\bpi,w)-\scoreplu(\bpi,c)}{2}
    \right\rfloor+1.
\]
Consequently, if the estimated Plurality scores satisfy
$|\widetilde{s}(c)-\scoreplu(\bpi,c)|\le\eta$ for every candidate $c$, then a
candidate maximizing $\widetilde{s}$ can differ from $w$ only if
\[
    \mov(\bpi,r)\le\eta+1.
\]
\end{lemma}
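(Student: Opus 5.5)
The first inequality follows from an explicit vote-moving construction. Write $g=\scoreplu(\bpi,w)-\scoreplu(\bpi,c)\ge 0$ and set $k=\lfloor g/2\rfloor+1$. Since $w$ wins, we have $\scoreplu(\bpi,w)\ge\scoreplu(\bpi,c)$. We need at least $k$ voters ranking $w$ first. If $\scoreplu(\bpi,w)\ge k$, pick $k$ such voters and change each of their rankings so that $c$ is placed first. In the new profile $\bpi'$, $w$ loses $k$ first-place votes and $c$ gains $k$, so the new gap is $g-2k<0$. Hence $c$ strictly beats $w$, and $w$ is no longer the Plurality winner regardless of tie-breaking. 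Note that $\bpi'$ may have some other winner; this is fine, since we only need $r(\bpi')\ne w$. This shows $\mov(\bpi,r)\le k$.

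The edge case $\scoreplu(\bpi,w)<k$ needs separate treatment. Here $\scoreplu(\bpi,w)\le\lfloor g/2\rfloor\le g/2$, which together with $\scoreplu(\bpi,c)\ge 0$ forces $g\le g/2$. So $g=0$ and $\scoreplu(\bpi,w)=0$, which means $n=0$ because $w$ has the maximum score. This degenerate empty profile is excluded, or else handled by the convention that a single added or changed voter suffices. The only step requiring care is this integrality check that $2k>g$; everything else is routine.

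For the consequence, suppose $\tilde c\in\arg\max\widetilde s$ and $\tilde c\ne w$. Then $\widetilde s(\tilde c)\ge\widetilde s(w)$, and the accuracy assumption gives
\[
\scoreplu(\bpi,w)-\scoreplu(\bpi,\tilde c)\le \widetilde s(w)+\eta-\widetilde s(\tilde c)+\eta\le 2\eta .
\]
Applying the first part with $c=\tilde c$ yields $\mov(\bpi,r)\le\lfloor 2\eta/2\rfloor+1\le\eta+1$, as claimed.
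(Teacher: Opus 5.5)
Your proof is correct and takes essentially the paper's approach: move $\lfloor g/2\rfloor+1$ first-place votes from $w$ to $c$ so that $c$ strictly outscores $w$, then bound the true gap by $2\eta$ for any noisy maximizer $\tilde c\ne w$. Your check that enough voters rank $w$ first, which fails only for the empty profile, is a detail the paper's proof leaves implicit.
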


\begin{proof}
Let
$\Delta=\scoreplu(\bpi,w)-\scoreplu(\bpi,c)$ and
$t=\lfloor\Delta/2\rfloor+1$.  Change $t$ rankings that place $w$ first so
that they place $c$ first.  This decreases the score gap between $w$ and $c$
by $2t>\Delta$, so $c$ strictly outranks $w$ in the resulting profile.

For the second claim, if $c$ maximizes the estimated score, then
\[
    \scoreplu(\bpi,w)-\scoreplu(\bpi,c)\le2\eta.
\]
The first claim now gives $\mov(\bpi,r)\le\eta+1$.
\end{proof}

\subsection{Plurality with Runoff}

Exact computation of the margin of victory for Plurality with Runoff was
studied by \citet[Theorem~7]{Xia12}.  A related certificate in terms of
first-round scores and pairwise margins appears in
\citet[Lemma~8]{BHATTACHARYYA2021103476}; the formulation below is tailored to
separate errors in the two stages of the noisy computation.

\begin{lemma}[Runoff certificate]
\label{lem:margin-runoff}
Suppose a noisy computation for Plurality with Runoff uses first-round estimates
$\widetilde{s}(c)$ satisfying
\[
    |\widetilde{s}(c)-\scoreplu(\bpi,c)|\le\eta
    \qquad\text{for every }c\in\cC,
\]
selects the two candidates with largest estimated scores, and uses an estimate
$\widehat D(a,b)$ of their pairwise margin satisfying
\[
    |\widehat D(a,b)-D_{\bpi}(a,b)|\le\zeta.
\]
If the noisy runoff winner differs from the true runoff winner, then
\[
    \mov(\bpi,r)
    \le
    2+\max\left\{2\eta,\eta+\frac{\zeta}{2}\right\}.
\]
\end{lemma}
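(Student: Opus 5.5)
Let $w$ be the true runoff winner, let $(f_1,f_2)$ be the true finalists, and let $(a,b)$ be the noisy finalists, so the noisy output is some $\hat w\ne w$. I would split into two cases according to whether the noisy computation picks the wrong finalists or correct finalists but the wrong head-to-head winner. In each case I exhibit an explicit set of at most the claimed number of ranking changes that makes some candidate other than $w$ win, which bounds $\mov(\bpi,r)$ directly by the definition.

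\textbf{Case 1: the noisy finalist pair contains $w$ and the noisy winner is $\hat w\ne w$.} Then $\{a,b\}=\{w,\hat w\}$ and $\widehat D(w,\hat w)\le 0$, so $D_{\bpi}(w,\hat w)\le\zeta$.
\begin{itemize}
\item If $\hat w$ is the true runoff opponent of $w$, I take $t=\lfloor\zeta/2\rfloor+1$ voters ranking $w\succ\hat w$ and move $\hat w$ to the top of their rankings. This flips the head-to-head result and only increases $\hat w$'s plurality score, while decreasing only $w$'s. So $\hat w$ stays a finalist, but $w$ might drop out of the top two. Either way $w$ loses, since even a dropped $w$ means the winner changes.
\item If $\hat w$ is not a true finalist, then the true opponent $f$ of $w$ satisfies $\scoreplu(f)-\scoreplu(\hat w)\le 2\eta$. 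I first transfer about $\eta+1$ first-place votes from $f$ to $\hat w$, so $\hat w$ overtakes $f$ and enters the runoff against $w$. This works if those voters rank $\hat w$ first, and I will check that $w$ stays a finalist. The head-to-head gap can shift by at most the number of changed voters, so flipping it needs about $\zeta/2$ further changes. I must bound the total by $2+\max\{2\eta,\eta+\zeta/2\}$, and the natural accounting gives $\eta+\zeta/2+2$. The delicate point is choosing the voters cleverly: voters moving from $f$ to $\hat w$ who already prefer $\hat w\succ w$ do not hurt the head-to-head, and counting floors carefully gives the $+2$.
\end{itemize}

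\textbf{Case 2: $w$ is not a noisy finalist.} Then both $a$ and $b$ have estimated score at least $\widetilde s(w)$, so each true score satisfies $\scoreplu(\cdot)\ge\scoreplu(w)-2\eta$. At least one of them, say $a$, is not $w$'s true opponent, or both are distinct from $w$ anyway. I would push $w$ out of the top two by transferring first-place votes from $w$ to $a$ and to $b$ (or to whichever is needed). Each transfer closes a gap by $2$. However, two candidates must overtake $w$ (or one, if one is already the true other finalist), so the gaps to close sum to at most $4\eta$, at 2 per change. That gives at most $2\eta+2$ changes including tie-breaking slack. Once $w$ is not a finalist, the output differs from $w$.

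I expect the main obstacle to be the subcase where $\hat w$ must enter the runoff and also win the head-to-head. That is where the $\eta+\zeta/2$ term must be achieved without double-counting, using the observation that a single changed ballot (placing $\hat w$ first) simultaneously raises $\hat w$'s plurality score and improves $\hat w$ against $w$. Integer rounding and tie-breaking account for the additive constant $2$.
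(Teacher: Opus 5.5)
Your case split is the paper's: push $w$ out of the top two when it is not a noisy finalist; flip the head-to-head when $\hat w$ is the true opponent; otherwise bring $\hat w$ into the runoff and then flip. But the step you flag as delicate fails as stated. Transferring roughly $\eta+1$ first-place votes from the true opponent $f$ to $\hat w$ makes $\hat w$ overtake $f$, but it need not make $\hat w$ a finalist. For a third candidate $e\notin\{w,\hat w,f\}$, the hypotheses give only $\scoreplu(\bpi,e)\le\widetilde s(e)+\eta\le\widetilde s(\hat w)+\eta\le\scoreplu(\bpi,\hat w)+2\eta$. For instance, take $\scoreplu(\bpi,\hat w)=6$ and $\scoreplu(\bpi,f)=\scoreplu(\bpi,e)=10$, with $f$ winning the tie-break, and $\eta=5/2$. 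The three transfers needed to overtake $f$ leave $\hat w$ at $9<10$, so the finalists become $\{w,e\}$ and reversing $D_{\bpi}(w,\hat w)$ afterwards accomplishes nothing. With many such $e$ you must raise $\hat w$ itself by about $2\eta+1$. If those ballots need not also reverse $w\succ\hat w$, the sequential count is about $2\eta+\zeta/2+2$, which exceeds the claimed bound. The paper's written proof makes the identical move (``$a$ then strictly outranks $d$ and is therefore a true finalist'') and has the same hole.

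The repair is to take every changed ballot from a voter who ranks $w$ first, and move $\hat w$ to the top. Each such change does three things:
\begin{itemize}
\item it raises $\hat w$'s plurality score by one;
\item it lowers only $w$'s score, which is harmless because if $w$ leaves the top two the outcome has changed;
\item it lowers $D_{\bpi}(w,\hat w)$ by $2$.
\end{itemize}
So $x=\lfloor\max\{2\eta,\zeta/2\}\rfloor+1$ changes put $\hat w$ strictly above every $e\notin\{w,\hat w\}$, whose scores are untouched. They also make the head-to-head margin at most $\zeta-2x<0$. Hence either $w$ misses the runoff, or it meets $\hat w$ there and loses. Since $x\le1+\max\{2\eta,\zeta/2\}$, this is within the bound, and it handles both of your Case~1 subcases at once.

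Finally, your Case~2 (and the repair above) silently assumes $w$ has enough first-place ballots to move. As in the paper, you need two facts. First, a true runoff winner has $\scoreplu(\bpi,w)\ge1$. Second, if all of $w$'s first-place ballots are moved to $\hat w$ (in Case~2, to $a$), then $w$ has score $0$. So if $w$ is still a finalist, the recipient is the only candidate with positive score, is ranked first by every voter, and wins the runoff unanimously.
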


\begin{proof}
Let $w$ be the true runoff winner, let $d$ be the other true finalist, and
let $a,b$ be the noisy finalists.

First suppose $w\notin\{a,b\}$.  For each $x\in\{a,b\}$,
\[
    \scoreplu(\bpi,w)-\scoreplu(\bpi,x)\le2\eta.
\]
Write $q=\scoreplu(\bpi,w)$ and set
\[
    t_x=\max\left\{0,
    \left\lfloor
    \frac{\scoreplu(\bpi,w)-\scoreplu(\bpi,x)}{2}
    \right\rfloor+1\right\}.
\]
Then $t_x\le\eta+1$, so $t_a+t_b\le2\eta+2$.  If
$q\ge t_a+t_b$, change disjoint sets of $t_a$ and $t_b$ rankings that place
$w$ first so that they instead place $a$ and $b$ first, respectively.  Each
change decreases $w$'s score by one and increases the chosen candidate's score
by one.  Thus both $a$ and $b$ strictly outrank $w$, so $w$ misses the runoff.

It remains to consider $q<t_a+t_b$.  Since $w$ is the true runoff winner, we
must have $q>0$: if $q=0$, at most one candidate can have positive first-round
score while $w$ is a finalist, and that candidate is ranked first by every
voter and therefore defeats $w$ head to head.  Change all $q$ rankings that
place $w$ first so that they place $a$ first and $w$ last.  Candidate $w$ now
has score zero and $a$ has positive score.  If $w$ misses the runoff, the
outcome has changed.  Otherwise, no candidate other than $a$ has positive
score, since any such candidate would join $a$ in the runoff ahead of $w$.
Hence every voter ranks $a$ first, and $a$ defeats $w$ unanimously.  This
construction uses
\[
    q<t_a+t_b\le2\eta+2
\]
changes.

Now suppose $w$ is a noisy finalist but loses the noisy runoff to a candidate
$a$.  If $a\ne d$, then
\[
    \scoreplu(\bpi,d)-\scoreplu(\bpi,a)\le2\eta.
\]
Set
\[
    t=\max\left\{0,
    \left\lfloor
    \frac{\scoreplu(\bpi,d)-\scoreplu(\bpi,a)}{2}
    \right\rfloor+1\right\}.
\]
Then $t\le\eta+1$.  If $\scoreplu(\bpi,d)>0$, there are at least $t$
rankings with $d$ first.  Change $t$ of them to place $a$ first and place $a$
above $w$.  Candidate $a$ then strictly outranks $d$ and is therefore a true
finalist.  The only exceptional case is
$\scoreplu(\bpi,d)=\scoreplu(\bpi,a)=0$, when $t=1$; changing one ranking
with $w$ first in the same way makes $a$ a finalist.  Such a ranking exists
because, as argued above, a true runoff winner has positive first-round score.
If $a=d$, take $t=0$, since $a$ is already a true finalist.

If these changes remove $w$ from the runoff, the outcome has already changed.
Otherwise, the true finalists are now $w$ and $a$.  Because $a$ wins their
noisy comparison and its estimated pairwise margin has error at most $\zeta$,
\[
    D_{\bpi}(w,a)\le\zeta.
\]
The first-round changes all place $a$ above $w$, so they cannot increase this
margin.  Let $D'(w,a)$ be the margin after those changes and set
\[
    u=\max\left\{0,
    \left\lfloor\frac{D'(w,a)}{2}\right\rfloor+1\right\}.
\]
Then $u\le\zeta/2+1$.  Change $u$ additional rankings that place $w$ above
$a$ by moving $a$ immediately above $w$.  This cannot decrease $a$'s
first-round score and makes $D'(w,a)-2u<0$, so $a$ wins the runoff.  The total
number of changes in this case is at most
\[
    t+u\le\eta+\frac{\zeta}{2}+2.
\]
Taking the larger of the two case bounds proves the claim.
\end{proof}

\subsection{Condorcet}

\begin{lemma}[Condorcet certificate]
\label{lem:margin-condorcet}
Suppose $w\in\cC$ is the Condorcet winner on profile $\bpi$.  Then
\[
    \mov(\bpi,r)
    \le
    1+\min_{c\in\cC\setminus\{w\}}D_{\bpi}(w,c).
\]
Alternatively, suppose the Condorcet rule outputs $\bot$ on $\bpi$.  Then,
for every candidate $c\in\cC$,
\[
    \mov(\bpi,r)
    \le
    1+3\log_2m
    +3\max_{a\in\cC\setminus\{c\}}D_{\bpi}(a,c).
\]
\end{lemma}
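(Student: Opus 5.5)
The plan is to prove the two bounds separately, each by building an explicit nearby profile on which the Condorcet output changes.

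\emph{First bound.} Let $w$ be the Condorcet winner and fix $c$ attaining $d:=\min_{c'\ne w}D_{\bpi}(w,c')\ge1$. Take $k=\lfloor d/2\rfloor+1$ voters who rank $w$ above $c$, and in each of their rankings swap $w$ and $c$. There are $(n+d)/2\ge k$ such voters, since $d\ge1$. Each swap lowers $D(w,c)$ by exactly $2$, so afterwards $D(w,c)\le d-2k<0$. Then $w$ is no longer the Condorcet winner, so the output is $c$ or $\bot$ and differs from $w$. Since $k\le d/2+1\le d+1$, this gives $\mov(\bpi,r)\le1+d$.

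\emph{Second bound.} Now the output is $\bot$. Fix any $c$ and let $M=\max_{a\ne c}D_{\bpi}(a,c)$; note $M\ge0$, because otherwise $c$ would already be a Condorcet winner. The goal is to make $c$ the Condorcet winner using only moves of the form ``take a voter and move $c$ to the top.'' Such a move leaves all pairs not involving $c$ unchanged. For every candidate $a$ that this voter ranked above $c$, it increases $D(c,a)$ by $2$. Candidate $a$ is defeated once it has received $\lfloor D_{\bpi}(a,c)/2\rfloor+1\le M/2+1$ such lifts.

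Let $B$ be the current set of undefeated candidates. The key averaging fact is that every $a\in B$ still satisfies $D(a,c)\ge0$. Since the modified voters all rank $c$ first, at least half of the \emph{unmodified} voters must therefore rank $a$ above $c$. Summing over $a\in B$, some unmodified voter ranks at least $|B|/2$ members of $B$ above $c$. The same bound holds on average over any large pool of unmodified voters.

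I would then argue in phases. In each phase, I select about $M/2+1$ unmodified voters, chosen by a fractional/averaging argument, so that at least half of the current $B$ each receive enough lifts to be defeated. Since $|B|\le m-1$ at the start and $|B|$ halves each phase, there are at most $\log_2 m$ phases. To get the stated constant, I would use a potential argument: let $\Phi=\sum_{a\in B}\mathrm{need}(a)$, where $\mathrm{need}(a)$ is the number of lifts $a$ still requires. I would show that, among any $3(\lfloor\max_{a\in B}\mathrm{need}(a)\rfloor)+O(1)$ selected voters, half of $B$ is cleared. Summing over phases, with the $+O(1)$ rounding losses accumulating to $3\log_2 m$, would give $1+3\log_2m+3M$.

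\emph{Expected obstacle.} The crux is this per-phase step. The plain averaging bound only says that a random unmodified voter lifts half of $B$ \emph{on average}. What I need is that a moderate number of voters gives \emph{each} of half of $B$ about $M/2+1$ lifts simultaneously.

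My first attempt would be a Markov-type argument. Consider the best $s$ unmodified voters ranked by how many members of $B$ they place above $c$. The total number of lifts they deliver is at least $s|B|/2$. Each candidate can absorb at most $s$ of these lifts. Hence more than a third of $B$ receives at least $s/4$ lifts. Choosing $s\approx 3M$ makes those candidates defeated, and iterating with ratio $2/3$ instead of $1/2$ still yields $O(\log m)$ phases. If the constants do not land exactly on $3$, I would tune the threshold in this Markov step. I would also need to handle the degenerate case in which too few unmodified voters remain, which I would rule out using $N(a\succ c)\ge n/2$.
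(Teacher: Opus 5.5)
\textbf{The first bound is fine; the second bound has a genuine gap.} Your first bound is correct and is exactly the paper's argument.

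In the second bound, your phase-based plan makes the costs in $M$ and in $\log m$ multiply rather than add. In your Markov step a candidate is cleared only if $s/4\ge\lfloor D_{\bpi}(a,c)/2\rfloor+1$. So each phase needs $s\ge 2M+4$ voters. Only a third of $B$ is guaranteed to be cleared per phase, so you need about $\lceil\log_{3/2}m\rceil$ phases. Your analysis therefore bounds the total only by about $(2M+4)\lceil\log_{3/2}m\rceil=\Theta(M\log m)$. Tuning the threshold cannot change this.

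Your claim that summing $3\max_{a\in B}\mathrm{need}(a)+O(1)$ over phases yields $3M+3\log_2 m$ has a hidden requirement. The maximum remaining need would have to drop by roughly the number of voters spent in each phase. Nothing in the Markov step controls how many lifts the uncleared two-thirds of $B$ received. So each later phase may again cost order $M$.

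Your linear potential $\Phi=\sum_{a\in B}\mathrm{need}(a)$ has the same defect. One voter reduces it on average by at least $|B|/2\ge\Phi/(2\max_a\mathrm{need}(a))$, a factor of only $1-1/(2\max_a\mathrm{need}(a))$. That again gives $O(M\log(mM))$ steps.

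\textbf{The missing idea is to exponentiate your potential.} This lets you add voters one at a time, with no phases. Let $X$ be the set of modified voters and let $r_a=\max\{0,\lfloor D(a,c)/2\rfloor+1\}$ be the number of further lifts $a$ needs in the current profile. The paper uses $f(X)=\sum_{a\ne c}(2^{r_a}-1)$.

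By exactly your averaging fact, every $a$ with $r_a>0$ is ranked above $c$ by at least half of the unmodified voters. So a uniformly random unmodified voter lowers the term $2^{r_a}-1$ by $2^{r_a-1}\ge(2^{r_a}-1)/2$ with probability at least $1/2$. Hence $\mathbb{E}[f(X\cup\{x\})]\le\frac34 f(X)$, and some single voter achieves this.

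Initially $r_a\le\max\{1,M\}$ by parity, so $f(\emptyset)\le m2^M$. Since $f$ is integer-valued, the greedy process reaches $f=0$ after at most $1+\log_{4/3}(m2^M)\le 1+3\log_2 m+3M$ voters. At that point $c$ beats every opponent, so $c$ is the Condorcet winner and the output has left $\bot$.

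The exponential weighting is what turns ``$m$ opponents, each needing up to $M/2+1$ lifts'' into the additive quantity $\log_2 m+M$ in the logarithm of the potential. That additive structure is precisely what the lemma asserts.
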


\begin{proof}
First consider the case where the Condorcet rule outputs a candidate $w$.  Let
\begin{align*}
    \Delta
    &\coloneq \min_{c\in\cC\setminus\{w\}}D_{\bpi}(w,c), \\
    c^\star
    &\coloneq \arg\min_{c\in\cC\setminus\{w\}}D_{\bpi}(w,c).
\end{align*}
Changing the relative order of $w$ and $c^\star$ in
$\lfloor\Delta/2\rfloor+1$ voters who place $w$ above $c^\star$ makes
$D_{\bpi'}(w,c^\star)<0$ in the resulting profile $\bpi'$.  Thus $w$ is no
longer a Condorcet winner, and
\[
    \mov(\bpi,r)
    \le\lfloor\Delta/2\rfloor+1
    \le1+\Delta.
\]

Now suppose the rule outputs $\bot$.  Fix a candidate $c\in\cC$ and define
\[
    \Delta_c
    \coloneq
    \max_{a\in\cC\setminus\{c\}}D_{\bpi}(a,c).
\]
For each $a\in\cC\setminus\{c\}$, let $V_a\subseteq[n]$ be the set of voters
who rank $a$ above $c$, and write
$\overline{V_a}=[n]\setminus V_a$.  By definition,
$|V_a|\le(n+\Delta_c)/2$.  We will find a set $X$ of at most
$1+3\log_2m+3\Delta_c$ voters such that, after moving $c$ to the top of those
voters' rankings, candidate $c$ defeats every $a\ne c$ head to head.

Let
\[
    f(X)
    \coloneq
    \sum_{a\in\cC\setminus\{c\}}(2^{r_a}-1),
\]
where
\[
    r_a
    \coloneq
    \max\left\{
        0,
        1+\left\lfloor\frac{n}{2}\right\rfloor
        -|\overline{V_a}\cup X|
    \right\}.
\]
We claim that whenever $f(X)>0$, there exists $x\in[n]\setminus X$ such that
\[
    f(X\cup\{x\})\le\frac34f(X).
\]
If $r_a>0$, then
$|\overline{V_a}\cup X|\le\lfloor n/2\rfloor$, so at least half of the
remaining voters $x\in[n]\setminus X$ rank $a$ above $c$.  Adding any such
voter to $X$ decreases $a$'s contribution from $2^{r_a}-1$ to
$2^{r_a-1}-1$.  Therefore, for a uniformly random $x\in[n]\setminus X$,
\begin{align*}
    f(X)-\mathbb{E}_x[f(X\cup\{x\})]
    &\ge\frac12
        \sum_{\substack{a\in\cC\setminus\{c\}\\r_a>0}}
        2^{r_a-1} \\
    &\ge\frac14f(X).
\end{align*}
Hence $\mathbb{E}_x[f(X\cup\{x\})]\le\frac34f(X)$, so some
$x\in[n]\setminus X$ satisfies the claimed inequality.

At $X=\emptyset$, parity of the pairwise margins gives
$r_a\le\max\{1,\Delta_c\}$, and hence
$f(\emptyset)\le m2^{\Delta_c}$.  Since $f$ is integer-valued, the process
stops with $f(X)=0$ once $f(X)<1$.  A greedy algorithm that repeatedly chooses
the best voter returns a set $X$ of size at most
\[
    1+\log_{4/3}(m2^{\Delta_c})
    \le1+3\log_2m+3\Delta_c.
\]
Let $\bpi'$ be obtained from $\bpi$ by moving $c$ to the top of every
ranking indexed by $X$.  Since $f(X)=0$, we have $r_a=0$ for every
$a\ne c$.  By the definition of $r_a$, this implies
\[
    |\overline{V_a}\cup X|
    \ge 1+\left\lfloor\frac n2\right\rfloor.
\]
Every voter in $\overline{V_a}$ already ranks $c$ above $a$, and every voter
in $X$ does so after the modification.  Hence, in $\bpi'$, strictly more than
half of the voters rank $c$ above $a$.  This holds for every $a\ne c$, so
$c$ is the Condorcet winner on $\bpi'$.  The original outcome was $\bot$;
therefore changing at most $|X|$ rankings changes the outcome, and
\[
    \mov(\bpi,r)
    \le |X|
    \le1+3\log_2m+3\Delta_c,
\]
as claimed.
\end{proof}

\subsection{Maximin}

The relationship between the Maximin score gap and the margin of victory is
due to \citet[Theorem~12]{Xia12}; see also the related score-gap certificate
of \citet[Lemma~5]{BHATTACHARYYA2021103476}.  We record the resulting
consequence for noisy pairwise margins.

\begin{lemma}[Maximin certificate]
\label{lem:margin-maximin}
Let $w$ and $w'$ have the largest and second-largest maximin scores,
respectively, and define
\[
    \Delta_{\mathrm{mm}}
    =\scmaximin(w)-\scmaximin(w').
\]
Then
\[
    \frac{\mov(\bpi,r)-1}{2}
    \le\frac{\Delta_{\mathrm{mm}}}{4}
    \le\mov(\bpi,r).
\]
In particular,
\[
    \mov(\bpi,r)\le\frac{\Delta_{\mathrm{mm}}}{2}+1.
\]
Moreover, if all pairwise margins are estimated to error at most $\eta$, a
candidate maximizing the noisy maximin score can differ from $w$ only if
\[
    \mov(\bpi,r)\le\eta+1.
\]
\end{lemma}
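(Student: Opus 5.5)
The plan is to prove the two-sided inequality directly by elementary vote-changing arguments, which also covers the tie-breaking slack. I would then derive the ``in particular'' clause and the noisy consequence as short corollaries. Throughout, I use one basic fact: changing a single ranking changes each pairwise margin $D_{\bpi}(a,b)$ by at most $2$. Since $\scmaximin$ is a minimum of pairwise margins, it therefore changes each maximin score by at most $2$.

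\emph{Lower bound $\Delta_{\mathrm{mm}}/4\le\mov(\bpi,r)$.} Suppose $k$ rankings are changed with $4k<\Delta_{\mathrm{mm}}$. Then $w$'s score drops by at most $2k$, and every other candidate's score rises by at most $2k$. Hence $w$ still strictly exceeds every other candidate, and the winner is unchanged regardless of tie-breaking. So $\mov(\bpi,r)\ge\lceil\Delta_{\mathrm{mm}}/4\rceil$.

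\emph{Upper bound $\mov(\bpi,r)\le\Delta_{\mathrm{mm}}/2+1$, which is the first inequality rearranged.} Let $b\in\arg\min_{c\ne w}D_{\bpi}(w,c)$. Take voters who rank $w\succ b$ and move $w$ to the bottom of their rankings. This changes only pairs involving $w$. Each such change decreases $D(w,b)$ by exactly $2$, and it never decreases any $D(c,w)$. Consequently $\scmaximin(w)$ drops by at least $2$ per changed voter, while $\scmaximin(w')$ cannot decrease; indeed every non-$w$ score is weakly increased. After $\lfloor\Delta_{\mathrm{mm}}/2\rfloor+1$ changes, $w$ is strictly below $w'$, so some candidate other than $w$ wins. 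The extra $+1$ is exactly what removes any dependence on the fixed tie-breaking order.

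The step I expect to be the main obstacle is checking that enough voters with $w\succ b$ exist. There are $(n+D(w,b))/2=(n+\scmaximin(w))/2$ of them. We need at least $(\scmaximin(w)-\scmaximin(w'))/2+1$, which holds whenever $\scmaximin(w')\ge 2-n$. In the degenerate case where $w'$ (hence every non-winner) has score near $-n$, I would handle it separately. Such a case forces near-unanimity structure, and one can either use the fact that $w$ must be the unanimous top, so $\Delta_{\mathrm{mm}}$ is of order $2n$ and $\mov\le n$ trivially satisfies the bound, or change voters ranking $b\succ w$ instead. Dividing the resulting inequality $\mov(\bpi,r)\le\Delta_{\mathrm{mm}}/2+1$ by $2$ gives $(\mov(\bpi,r)-1)/2\le\Delta_{\mathrm{mm}}/4$, and it is also the ``in particular'' statement.

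\emph{Noisy consequence.} Suppose every estimated margin satisfies $|\widetilde D(a,b)-D_{\bpi}(a,b)|\le\eta$. Then each noisy maximin score is within $\eta$ of the true one, since a minimum of $\eta$-perturbed values moves by at most $\eta$. If the noisy maximizer is $c\ne w$, then $\scmaximin(c)+\eta\ge\widetilde{\score}_{\mathrm{maximin}}(c)\ge\widetilde{\score}_{\mathrm{maximin}}(w)\ge\scmaximin(w)-\eta$. Together with $\scmaximin(w')\ge\scmaximin(c)$, this gives $\Delta_{\mathrm{mm}}\le 2\eta$. Plugging this into $\mov(\bpi,r)\le\Delta_{\mathrm{mm}}/2+1$ yields $\mov(\bpi,r)\le\eta+1$.
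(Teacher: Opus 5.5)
Your proposal is correct. Your lower bound and your noisy consequence match the paper's: each maximin score moves by at most $2$ per changed ranking, and a noisy maximizer $c\ne w$ forces $\Delta_{\mathrm{mm}}\le 2\eta$. The difference is in the upper bound $\mov(\bpi,r)\le\Delta_{\mathrm{mm}}/2+1$. The paper does not prove it directly. It cites the construction behind Theorem~12 of Xia, which uses about $\Delta_{\mathrm{mm}}/2$ changes to make $w$ only tie for the top, and then spends one extra vote to beat the fixed tie-breaking order. You instead give an explicit construction: push $w$ to the bottom in rankings with $w\succ b$, where $b$ attains $w$'s maximin score. Each change lowers $\scmaximin(w)$ by at least $2$ and weakly raises every other score, so after $\lfloor\Delta_{\mathrm{mm}}/2\rfloor+1$ changes $w$ is strictly behind $w'$. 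This makes the lemma self-contained, and the $+1$ comes from strictness rather than a separate tie-breaking step. The price is that you must check that enough rankings with $w\succ b$ exist, a point the citation hides.

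That feasibility check is the one place to tighten. Every pairwise margin has the parity of $n$, so the condition $\scmaximin(w')\ge 2-n$ fails only when $\scmaximin(w')=-n$. In that case every $c\ne w$ is ranked below some candidate by all voters. ``Ranked above by every voter'' is a strict partial order on a finite set, so every element lies below some maximal element. Only $w$ can be maximal, so every voter ranks $w$ above every other candidate. Hence $\scmaximin(w)=n$, $\Delta_{\mathrm{mm}}=2n$, and $\mov(\bpi,r)\le n<\Delta_{\mathrm{mm}}/2+1$. This justifies your first option. Drop the vague alternative of changing rankings with $b\succ w$, since such changes cannot lower $D_{\bpi}(w,b)$.
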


\begin{proof}
The first two inequalities are the two-approximation guarantee of
\citet[Theorem~12]{Xia12}, with the additive one accounting for the fixed
tie-breaking rule.  Rearranging the first inequality gives the displayed upper
bound on $\mov$.

Suppose every pairwise margin has error at most $\eta$, and let
$\tscmaximin(a)$ denote the resulting noisy maximin score of candidate $a$.
Every pairwise margin entering the minimum that defines a candidate's
maximin score changes by at most $\eta$.  Therefore that minimum also changes
by at most $\eta$.  Thus, for every $a\in\cC$,
\[
    |\tscmaximin(a)-\scmaximin(a)|\le\eta.
\]
Let $\widehat w\ne w$ maximize the noisy maximin score.  The optimality of
$\widehat w$ for the noisy scores gives
\begin{align*}
    \scmaximin(w)
    &\le \tscmaximin(w)+\eta \\
    &\le \tscmaximin(\widehat w)+\eta \\
    &\le \scmaximin(\widehat w)+2\eta.
\end{align*}
Because $w'$ has the second-largest true maximin score,
$\scmaximin(w')\ge\scmaximin(\widehat w)$.  Therefore
\[
    \Delta_{\mathrm{mm}}
    =\scmaximin(w)-\scmaximin(w')
    \le2\eta.
\]
The preceding bound now gives
$\mov(\bpi,r)\le\Delta_{\mathrm{mm}}/2+1\le\eta+1$.
\end{proof}

\subsection{Single Transferable Vote}

For STV, we convert a near-minimum elimination path into an upper bound on the
margin of victory and then derive the corresponding error certificate.  A
closely related elimination-path bound appears in
\citet[Lemma~9]{BHATTACHARYYA2021103476}.  Our formulation includes an
additive $m^2$ term to enforce the desired path under fixed tie-breaking.

\begin{lemma}[STV upper certificate]
\label{lem:margin-stv}
Suppose $w$ is the STV winner on $\bpi$, and consider distinct candidates
$c_1,\ldots,c_m$ with $c_m\ne w$.  Define
\[
    \cC_{-i}=\cC\setminus\{c_1,\ldots,c_i\},
    \qquad
    s_i^\star=\min_{c\in\cC_{-i}}
    \scoreplu(\bpi|_{\cC_{-i}},c).
\]
Then
\[
    \mov(\bpi,r)
    \le m^2+\sum_{i=0}^{m-1}
    \left(
    \scoreplu(\bpi|_{\cC_{-i}},c_{i+1})-s_i^\star
    \right).
\]
\end{lemma}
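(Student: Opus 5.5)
The plan is to construct an explicit modified profile $\bpi'$ in which STV eliminates $c_1,c_2,\ldots,c_{m-1}$ in that order, so that $c_m\ne w$ wins. I would count the changed rankings and show there are at most the stated sum plus $m^2$. The $i=m-1$ term is $0$, since $\cC_{-(m-1)}=\{c_m\}$. So the budget is $e_i:=\scoreplu(\bpi|_{\cC_{-i}},c_{i+1})-s_i^\star$ for rounds $i=0,\ldots,m-2$, plus an additive $O(m)$ per round for strictness and tie-breaking.

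\textbf{Basic move in round $i$.} Take a voter whose top candidate in $\cC_{-i}$ is $c_{i+1}$, and swap the positions of $c_{i+1}$ and $c_m$ in that voter's ranking. Since $c_m$ is active in every round, this swap never gives a vote to any $c_{j+1}$ with $j<i$. Consider an earlier round $j\le i$ for this voter. If the voter's top in $\cC_{-j}$ is some $c_k$ with $k\le i$, that top is unchanged. If the top is $c_{i+1}$, it becomes $c_m$. So in rounds $j\le i$ the only score changes are $c_{i+1}$ losing and $c_m$ gaining. Performing $\lceil e_i/2\rceil+O(m)$ such swaps in round $i$, sending the vote to $c_m$, or where needed directly to the round minimizer when that is a later-eliminated $c_k$, pushes $c_{i+1}$ strictly below every other active candidate. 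The $O(m)$ slack beats the fixed tie-breaking order. Summing over rounds gives at most $\sum_i e_i + O(m^2)$ changed rankings. Keeping the constant in the slack at most $m$ per round yields the stated $m^2$.

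\textbf{Interaction between rounds (the main obstacle).} The round-$i$ swaps also lower $c_{i+1}$'s score in earlier rounds $j<i$. Then $c_{i+1}$ could drop below $c_{j+1}$ and be eliminated too early. My first attempt is to process the rounds in order $i=0,1,\ldots$ and, when fixing round $j$, keep its eliminated candidate $c_{j+1}$ below all others by a margin that absorbs later decrements. Those decrements hit round $j$ only through voters whose top in $\cC_{-j}$ is some later $c_{i+1}$. So in round $i$ I would choose the swapped voters preferentially among those whose top in earlier rounds is already some $c_k$ with $k\le i$; such voters exist whenever $c_{i+1}$'s first-place support was inherited by transfer. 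When forced to use a voter whose earlier top is $c_{i+1}$ itself, I would charge the damage to round $i$'s own budget $e_i$. Then a potential-function check, showing each round's cumulative margin stays positive, closes the induction. This bookkeeping is where I expect the real difficulty to lie. If it fails, the fallback is to compensate the earlier round $j$ with one extra swap per unit of damage. Even then I would need to show the total stays within the $m^2$ additive slack, which I have not verified.
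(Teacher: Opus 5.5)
Your proposal has a genuine gap, and it is exactly at the step you flagged. The basic move also needs to change.

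\textbf{The basic move perturbs later rounds.} Swapping $c_{i+1}$ with $c_m$ does not leave rounds $j>i$ untouched. Any active candidate lying between $c_{i+1}$ and $c_m$ in that ranking loses this voter to $c_m$ in later rounds. So later round minima can drop, and the later excesses are no longer the $e_j$ computed from $\bpi$; you only checked $j\le i$. The paper instead swaps $c_{i+1}$ with $d$, the next candidate of $\cC_{-i}$ below it. Only candidates from $\{c_1,\ldots,c_i\}$ lie between them, so the restrictions to $\cC_{-j}$ for $j>i$ are unchanged. In rounds $j<i$, the only effect is that $c_{i+1}$ loses a vote to $d\in\cC_{-i}$, which is never $c_{j+1}$.

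\textbf{The cross-round damage needs two facts you are missing, not a charging or potential argument.} First, the round minima are nondecreasing: $s_j^\star\le s_i^\star$ for $j<i$. Second, order the round-$i$ supporters of $c_{i+1}$ by the first round in which they vote for it; then its round-$j$ supporters form an initial segment of this order. Now modify exactly $e_i$ of the latest-arriving supporters. This is your ``prefer inherited votes'' heuristic made precise. It brings $c_{i+1}$ to exactly $s_i^\star$ in round $i$. In every earlier round $j$ it still has at least $\min\{u_j,s_i^\star\}\ge s_j^\star$ votes, where $u_j\ge s_j^\star$ is its current round-$j$ score by induction. So no earlier round is ever broken, and this phase costs exactly $\sum_i e_i$.

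\textbf{Your strictness step fails as stated.} You cannot push $c_{i+1}$ below $s_i^\star$ when $s_i^\star=0$. Pushing it below $s_i^\star$ at all also destroys the bound above whenever $s_j^\star=s_i^\star$ for some $j<i$. Separately, the count $\lceil e_i/2\rceil$ is unjustified: when the vote goes to $c_m$, or when several candidates sit at the minimum, you need at least $e_i$ swaps.

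The paper leaves $c_{i+1}$ tied at the minimum and breaks ties with a staircase instead. For each $c_j\ne w$, it moves $c_j$ directly above $w$ in a disjoint set of $j$ rankings that rank $w$ first. This costs at most $m(m+1)/2\le m^2$ changes. Until $w$ is eliminated, the surviving candidates other than $w$ receive bonuses $i+1,\ldots,m$. Combined with the exact ties from the first phase, each round therefore eliminates $c_{i+1}$ or $w$. Since $c_m\ne w$, the candidate $w$ is eventually eliminated. If too few rankings place $w$ first for this, giving $w$ zero first-round score with fewer than $m^2$ changes suffices instead.

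Note that this only enforces the order until $w$ leaves. That is weaker than your goal of eliminating exactly $c_1,\ldots,c_{m-1}$, and it is what makes taking votes away from $w$ harmless.
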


\begin{proof}
For $i=0,\ldots,m-1$, let
\[
    \Delta_i
    \coloneq
    \scoreplu(\bpi|_{\cC_{-i}},c_{i+1})-s_i^\star
    \ge0.
\]
We will show that by changing
$m^2+\sum_{i=0}^{m-1}\Delta_i$ rankings, we can make $c_m$ win.

First observe that the sequence $(s_i^\star)$ is nondecreasing, since
remaining candidates can only gain votes from one round to the next.

We construct by induction profiles
$\bpi^{(0)},\bpi^{(1)},\ldots,\bpi^{(m)}$, with
$\bpi^{(0)}=\bpi$.  For each $i\in\{0,\ldots,m-1\}$, we modify exactly
$\Delta_i$ rankings in $\bpi^{(i)}$ to construct $\bpi^{(i+1)}$.  We
maintain the following properties:
\begin{itemize}
    \item for each $j\in\{0,\ldots,i-1\}$, candidate $c_{j+1}$ has
    plurality score exactly $s_j^\star$ in
    $\bpi^{(i)}|_{\cC_{-j}}$;
    \item for each $j\in\{0,\ldots,m-1\}$,
    \[
        \min_{c\in\cC_{-j}}
        \scoreplu(\bpi^{(i)}|_{\cC_{-j}},c)=s_j^\star;
    \]
    \item for each $j\in\{i,\ldots,m-1\}$ and every
    $c\in\cC_{-j}$,
    \[
        \scoreplu(\bpi^{(i)}|_{\cC_{-j}},c)
        =\scoreplu(\bpi|_{\cC_{-j}},c).
    \]
\end{itemize}

The base case $i=0$ follows from $\bpi^{(0)}=\bpi$.  Fix
$i\in\{0,\ldots,m-1\}$ and assume the properties hold for $\bpi^{(i)}$.
Consider the rankings whose top candidate in $\cC_{-i}$ is $c_{i+1}$.
Choose $\Delta_i$ of them that are last to give their vote to $c_{i+1}$,
meaning that the first round in which they contribute to $c_{i+1}$ is as late
as possible.  Such a ranking has the form
\[
    \cdots\succ c_{i+1}\succ\cdots\succ d\succ\cdots,
\]
where $d$ is the second preferred candidate below $c_{i+1}$ in $\cC_{-i}$.
Swap $c_{i+1}$ and $d$, leaving the relative order of every other candidate
unchanged.

Each change decreases the score of $c_{i+1}$ in $\cC_{-i}$ by one and
transfers that vote to $d$.  Hence
\[
    \scoreplu(\bpi^{(i+1)}|_{\cC_{-i}},c_{i+1})=s_i^\star.
\]
For $j>i$, the restrictions to $\cC_{-j}$ are unchanged, because once
$c_{i+1}$ is removed the relative order of the remaining candidates is the
same.

It remains to verify the earlier rounds $j<i$.  Such a modification can only
lower the score of $c_{i+1}$ and raise the score of $d$.  Fix $j<i$ and let
\[
    u_j=
    \scoreplu(\bpi^{(i)}|_{\cC_{-j}},c_{i+1}).
\]
The rankings contributing to $c_{i+1}$ in round $j$ are precisely the
earlier-contributing rankings among those that contribute to it in round $i$.
Because we modify the $\Delta_i$ latest-contributing rankings, at least
$\min\{u_j,s_i^\star\}$ of its round-$j$ votes remain.  The induction
hypothesis gives $u_j\ge s_j^\star$, while the monotonicity of the round
minima gives $s_i^\star\ge s_j^\star$.  Hence the new round-$j$ score of
$c_{i+1}$ is at least $s_j^\star$.  No other score decreases, so no score
falls below the original round minimum, and the induction properties hold for
$\bpi^{(i+1)}$.

Consequently, in $\bpi^{(m)}$, candidate $c_{i+1}$ is among the minimum-score
candidates in round $i+1$ for every $i$.  The number of changed rankings so
far is at most $\sum_i\Delta_i$.

The intended candidate may still be tied for the minimum.  To make the
elimination order independent of tie-breaking, we add a staircase.  Let
$J=\{j\in[m]:c_j\ne w\}$ and
\[
    L=\sum_{j\in J}j\le\frac{m(m+1)}{2}\le m^2.
\]
Suppose first that at least $L$ rankings in $\bpi^{(m)}$ place $w$ first.
Choose pairwise disjoint sets $X_j$ of such rankings, with $|X_j|=j$ for
every $j\in J$.  In each ranking in $X_j$, move $c_j$ immediately above $w$,
preserving the relative order of all other candidates.

Before $w$ is eliminated, candidate $c_j\ne w$ receives an additional $j$
votes in every round in which it remains.  Thus, in round $i+1$, the
staircase bonuses of the surviving candidates
$c_{i+1},\ldots,c_m$ other than $w$ are
\[
    i+1,i+2,\ldots,m.
\]
Since every surviving candidate previously had score at least $s_i^\star$,
while $c_{i+1}$ had score exactly $s_i^\star$, either $w$ or $c_{i+1}$ is
eliminated in that round.  Therefore the elimination order follows
$c_1,c_2,\ldots,c_{m-1}$ until $w$ is eliminated.  Since $c_m\ne w$, the
resulting winner differs from $w$.

It remains to handle the case in which fewer than $L$ rankings in
$\bpi^{(m)}$ place $w$ first.  Change all of these rankings by moving $c_m$
to the top and $w$ to the bottom.  Candidate $w$ then has zero score in the
first STV round.  A zero-score candidate cannot win STV: while another
zero-score candidate is eliminated, no votes transfer, and once $w$ is the
only zero-score candidate, it is the unique minimum and is eliminated.  Thus
these fewer than $L\le m^2$ additional changes also produce a winner other
than $w$.

The total number of changes is at most
\[
    \frac{m(m+1)}{2}+\sum_{i=0}^{m-1}\Delta_i
    \le m^2+\sum_{i=0}^{m-1}\Delta_i,
\]
proving the claim.
\end{proof}

\begin{corollary}[STV error certificate]
\label{cor:margin-stv-error}
Suppose an approximate STV computation starts with $S_0=\cC$.  In each round
$i\in\{0,\ldots,m-2\}$, it computes estimates
$(\widetilde{s}_i(c))_{c\in S_i}$ of the round scores, eliminates a candidate
\[
    c_{i+1}\in\arg\min_{c\in S_i}\widetilde{s}_i(c),
    \qquad
    S_{i+1}=S_i\setminus\{c_{i+1}\},
\]
and finally returns the unique candidate $c_m\in S_{m-1}$.  Suppose
$c_m$ differs from the true STV winner and, for every realized round $i$ and
every active candidate $c\in S_i$,
\[
    |\widetilde{s}_i(c)-\scoreplu(\bpi|_{S_i},c)|\le\eta.
\]
Then
\[
    \mov(\bpi,r)\le m^2+2(m-1)\eta.
\]
\end{corollary}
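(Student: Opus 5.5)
The plan is to apply Lemma~\ref{lem:margin-stv} to the realized elimination sequence $c_1,\ldots,c_m$. Since the run ends with $S_{m-1}=\{c_m\}$, these candidates are distinct. By hypothesis $c_m$ differs from the true STV winner $w$, so the lemma's assumptions hold. Moreover, $S_i=\cC\setminus\{c_1,\ldots,c_i\}=\cC_{-i}$, so the lemma gives
\[
    \mov(\bpi,r)\le m^2+\sum_{i=0}^{m-1}\Delta_i,
    \qquad
    \Delta_i\coloneq\scoreplu(\bpi|_{S_i},c_{i+1})-s_i^\star .
\]
It then suffices to bound each $\Delta_i$.

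For $i\le m-2$, let $c^\star\in S_i$ be a candidate attaining $s_i^\star$. Because $c_{i+1}$ minimizes the estimated scores and every estimate is within $\eta$ of its true score,
\[
    \scoreplu(\bpi|_{S_i},c_{i+1})
    \le\widetilde s_i(c_{i+1})+\eta
    \le\widetilde s_i(c^\star)+\eta
    \le s_i^\star+2\eta .
\]
Hence $\Delta_i\le 2\eta$ for each of the $m-1$ realized rounds $i=0,\ldots,m-2$.

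The only delicate point is the final index $i=m-1$, where $S_{m-1}=\{c_m\}$ is a singleton and no estimate is made. There the minimum over $\cC_{-(m-1)}=\{c_m\}$ is attained by $c_m$ itself, so $\Delta_{m-1}=0$ trivially. Summing the bounds gives
\[
    \mov(\bpi,r)\le m^2+2(m-1)\eta,
\]
as claimed. No step is a real obstacle; the only care needed is matching the corollary's indexing of rounds and sets to that of Lemma~\ref{lem:margin-stv}.
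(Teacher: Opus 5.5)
Your proposal is correct and matches the paper's proof essentially step for step. Like the paper, you apply Lemma~\ref{lem:margin-stv} to the realized elimination order, bound each of the $m-1$ realized score excesses by $2\eta$ through a candidate attaining the true round minimum, and observe that the final singleton round contributes zero (your explicit check that the realized $c_1,\ldots,c_m$ are distinct, as the lemma requires, is left implicit in the paper).
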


\begin{proof}
Let $w$ be the true STV winner.  By assumption, $c_m\ne w$.  For
$i\in\{0,\ldots,m-2\}$, the active set before eliminating $c_{i+1}$ is
\[
    S_i=\cC_{-i}=\cC\setminus\{c_1,\ldots,c_i\}.
\]
Choose a candidate $d_i\in\cC_{-i}$ having minimum true score in this round,
so $\scoreplu(\bpi|_{\cC_{-i}},d_i)=s_i^\star$.  Since the approximate
computation eliminates a candidate of minimum estimated score,
\begin{align*}
    \scoreplu(\bpi|_{\cC_{-i}},c_{i+1})
    &\le \widetilde{s}_i(c_{i+1})+\eta \\
    &\le \widetilde{s}_i(d_i)+\eta \\
    &\le \scoreplu(\bpi|_{\cC_{-i}},d_i)+2\eta \\
    &=s_i^\star+2\eta.
\end{align*}
Thus the score excess appearing in Lemma~\ref{lem:margin-stv} is at most
$2\eta$ in each of the $m-1$ elimination rounds.  When $i=m-1$, only $c_m$
remains, so its score equals the minimum and the final excess is zero.
Summing these bounds and applying Lemma~\ref{lem:margin-stv} gives
\[
    \mov(\bpi,r)
    \le m^2+2(m-1)\eta.
\]
\end{proof}

\subsubsection{Necessity of the quadratic term.}

Lemma~\ref{lem:margin-stv} bounds the STV margin of victory by $m^2$ plus, for
each round, the difference between the Plurality score of the candidate
proposed for elimination and the minimum Plurality score among the remaining
candidates.  The
following remark shows that the $m^2$ term cannot be removed in general: even
when every such difference is zero, changing the STV winner may
require $\Omega(m^2)$ ranking changes.

\begin{remark}
\label{rem:stv-quadratic}
For every $m\ge4$, there is a profile $\bpi$ on $m$ candidates
with STV winner $w$ and distinct candidates $c_1,\ldots,c_m$, where
$c_m\ne w$.  For $i\in\{0,\ldots,m-1\}$, define
\[
    \cC_{-i}=\cC\setminus\{c_1,\ldots,c_i\},
    \qquad
    s_i^\star=\min_{c\in\cC_{-i}}
    \scoreplu(\bpi|_{\cC_{-i}},c).
\]
Then, for every $i\in\{0,\ldots,m-1\}$,
\[
    \scoreplu(\bpi|_{\cC_{-i}},c_{i+1})-s_i^\star = 0,
\]
but
\[
    \mov(\bpi,r)\ge\frac{(m-3)^2}{64}.
\]
\end{remark}

\begin{proof}
Let $k=m-2$, let
$\cC=\{a,b,x_1,\ldots,x_k\}$, and fix the tie-breaking order
\[
    a\prec x_1\prec\cdots\prec x_k\prec b,
\]
where $y\prec z$ means that $y$ is eliminated before $z$ when their scores
are equal.  Set $R=k^2$, and consider the profile consisting of
\begin{itemize}
    \item $R$ rankings of the form
    $a\succ b\succ x_1\succ\cdots\succ x_k$;
    \item $(m-1)R$ rankings of the form
    $b\succ a\succ x_1\succ\cdots\succ x_k$;
    \item for each $i\in[k]$, $R$ rankings of the form
    \[
        x_i\succ x_k\succ\cdots\succ x_{i+1}
        \succ a\succ b\succ x_{i-1}\succ\cdots\succ x_1.
    \]
\end{itemize}

Candidate $b$ wins: candidate $a$ is eliminated in the first round by the
tie-breaking rule, after which $b$ has strictly more than half of the votes.
On the other hand, along the proposed order
$x_k,\ldots,x_1,b,a$, the next candidate is tied for the minimum in each
restricted election.  Thus the displayed score excess is zero.

It remains to show that the margin of victory of $b$ is at least
$(m-3)^2/64$.  Set
\[
    \ell_0=\frac{(k-1)^2}{64}=\frac{(m-3)^2}{64},
\]
and let $\bpi'$ differ from $\bpi$ on $T<\ell_0$ rankings.  Since
$T<R/4$, candidate $b$ cannot be eliminated while at least two other
candidates remain: its score is at least $(m-1)R-T$, whereas the total score
of all other candidates is at most $(m-1)R+T$, so one of them
has score at most
\[
    \frac{(m-1)R+T}{2}<(m-1)R-T.
\]

If $b$ faces some $x_i$ in the final round, then after $a$ is eliminated,
$b$ has at least $mR-T$ votes, while $x_i$ has at most
$kR+T<mR-T$ votes.  Thus $b$ still wins.  It remains to consider a final round
between $a$ and $b$.  This requires the elimination order to begin with
$x_k,x_{k-1},\ldots,x_q$, where $q=\lceil k/2\rceil+1$.  Otherwise, suppose
the order first breaks when $x_i$ is eliminated while $x_j$ is the
highest-index remaining candidate with $j>i$ and $j\ge q$.  The votes of
$x_{j+1},\ldots,x_k$ have transferred to $a$, while eliminating $x_i$
transfers its votes to $x_j$, which then has at least $2R-T>R+T$ votes.  All
other remaining $x$-candidates are eliminated before $x_j$ and transfer to
it.  When only $a,b,x_j$ remain, candidate $a$ has at most
\[
    (k-j)R+T\le(\lfloor k/2\rfloor-1)R+T
\]
votes, while $x_j$ has at least $jR-T\ge qR-T$, so $a$ cannot reach the
final round.

We now show that this required prefix itself needs at least $\ell_0$ changes.
Write
\[
    y_1,\ldots,y_p=x_k,\ldots,x_q,
    \qquad p=\lfloor k/2\rfloor.
\]
If $p=1$, the original first-round scores of $a$ and $y_1=x_k$ are both $R$,
and the tie-breaking rule eliminates $a$ before $y_1$.  Reversing this order
requires at least one change, which is more than $\ell_0$.  Hence suppose
$p\ge2$.

For each $i\in[p]$, let $a_i$ be the score of $y_i$ when it is eliminated,
and let $B_i$ be the block of $R$ original rankings that place $y_i$ first.
The sequence $(a_i)$ is strictly increasing.  Indeed, when $y_i$ is
eliminated, $y_{i+1}$ remains and precedes $y_i$ in the tie-breaking order, so
$y_i$ must have strictly smaller score; the score of $y_{i+1}$ cannot decrease
before its own elimination.  Let $r$ be the largest index with $a_r<R$, and
set $r=0$ if no such index exists.

Suppose first that $r\ge p/2$.  Since the $a_i$ are strictly increasing
integers and $a_r\le R-1$,
\[
    R-a_i\ge r-i+1
    \qquad\text{for every }i\le r.
\]
At least $R-a_i$ rankings in $B_i$ must have been changed; otherwise more
than $a_i$ unchanged rankings would still give their vote to $y_i$ when it is
eliminated.  The blocks $B_1,\ldots,B_r$ are disjoint, and hence
\[
    T\ge\sum_{i=1}^r(R-a_i)
    \ge\frac{r(r+1)}2
    \ge\frac{p^2}{8}
    \ge\frac{(k-1)^2}{32}
    \ge \ell_0.
\]

Now suppose $r<p/2$.  Put $h=p-r$ and choose the integer round
\[
    s=r+\left\lfloor\frac h2\right\rfloor.
\]
Then $r<s<p$.  Strict increase gives
$a_s\ge R+s-r-1$.  When $y_s$ is eliminated, each of the
$p-s$ remaining candidates $y_{s+1},\ldots,y_p$ must therefore have score at
least $R+s-r$.  Along this elimination prefix, an unchanged ranking outside
the candidate's own block does not contribute to any of these remaining
candidates.  Thus their total score excess over the original value $R$ must
come from changed rankings, and one changed ranking contributes at most one
such vote in this round.  Consequently,
\[
    T\ge(p-s)(s-r)
    =\left\lceil\frac h2\right\rceil
     \left\lfloor\frac h2\right\rfloor
    \ge\frac{p^2}{16}
    \ge\frac{(k-1)^2}{64}=\ell_0.
\]
Here the penultimate inequality uses $h=p-r>p/2$, and the last uses
$2p\ge k-1$.
Both cases contradict $T<\ell_0$.  Therefore $b$ remains the winner after
fewer than $\ell_0$ ranking changes, proving
$\mov(\bpi,r)\ge(m-3)^2/64$.
\end{proof}

\section{Hardness of Approximation for STV}
\label{sec:non-private-STV}

We give a non-private hardness result for STV.  This result is used twice: first
to show strong inapproximability of MoV-related optimization problems, and then
to rule out efficient centrally private STV algorithms with strong usefulness
guarantees.

For this appendix, we use a directed variant of MoV.  For each
$a \in \cC \cup \{ \bot \}$, let $\move(\bpi, r, a)$ denote the smallest
integer $k$ such that there exists a $k$-neighbor $\bpi'$ of $\bpi$ with
$r(\bpi') = a$.  The problem \bribery{}~\cite{FaliszewskiHH09} asks us to
approximately compute $\move(\bpi,r,a)$ from $(\bpi,r)$ and a desired outcome
$a\ne r(\bpi)$.  This problem is sometimes called \emph{constructive
bribery}, while the problem of computing $\mov$ is also known as \emph{destructive bribery}.  We have
$\move(\bpi, r, r(\bpi))=0$ and
\[
\mov(\bpi, r)=\min_{a \neq r(\bpi)} \move(\bpi, r, a).
\]

\begin{theorem}[Used in Theorem~\ref{thm:stv-no-efficient-dp}]
\label{thm:stv}
Let $\zeta > 0$ be an arbitrary constant.  Given an STV instance $(\bpi,r)$
with winner $w=r(\bpi)$ and a candidate $a\ne w$, it is NP-hard to distinguish
between the following two cases:
\begin{itemize}
\item (YES) For every positive integer $k \leq n^{1 - \zeta}$, there exists a $k$-neighbor $\bpi'$ of $\bpi$ such that $r(\bpi') = a$ and $\mov(\bpi', r) \geq \left\lceil \frac{k}{2} \right\rceil$. In particular, this implies $\mov(\bpi, r) \leq \move(\bpi, r, a)  \leq 1$.
\item (NO) $\move(\bpi,r,a) = \mov(\bpi, r) \geq n^{1 - \zeta}$.
\end{itemize}
Furthermore, this holds even when $m \leq n^{\zeta}$.
\end{theorem}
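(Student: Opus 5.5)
We plan a gap-amplifying reduction from an NP-hard problem whose YES instances admit a single-ranking change that routes STV to candidate $a$, and whose NO instances need a polynomial number of changes. The natural starting point is the reduction of \citet{BO91}, which shows that deciding whether an STV winner can be changed is hard. The source we would use is Exact Cover by 3-Sets (or 3-SAT). We would first build a ``gadget'' election $E_0$ on $m_0=\mathrm{poly}(N)$ candidates, where $N$ is the source instance size. In $E_0$ each candidate's first-round score is a multiple of a large block size $B$, except for tiny offsets. The elimination order is chosen so that candidate $a$ survives to the end exactly when the early eliminations correspond to a valid cover.

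The amplification comes from scaling every vote block by $B$. In the unscaled construction a cover is enforced by changing one ranking, which uses the tie-breaking to pick the desired elimination order. In the scaled version we keep the ties at exact equality, so one changed vote still flips them. Any deviation that does not correspond to a cover, however, requires overturning a score gap of order $B$. We then pick $B=n^{1-\zeta}$ up to constants. The number of voters is $n=\Theta(B\cdot \mathrm{poly}(N))$, so we need $B\ge \mathrm{poly}(N)^{(1-\zeta)/\zeta}$, which is still polynomial. This choice also gives $m=\mathrm{poly}(N)\le n^{\zeta}$, which settles the ``furthermore'' clause.

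The proof then has the following steps.
\begin{enumerate}
\item \textbf{Completeness.} From a cover, exhibit one ranking change $\pi\mapsto\pi'$ making $a$ win. This gives $\move(\bpi,r,a)\le 1$ and hence $\mov(\bpi,r)\le 1$.
\item \textbf{Strengthened YES condition.} For each $k\le n^{1-\zeta}$ we need a $k$-neighbor with winner $a$ whose own margin of victory is at least $\lceil k/2\rceil$. We take the single cover-enforcing change and then spend the remaining $k-1$ changes moving $a$ to the top of rankings of candidates that are eliminated early, or of rankings whose votes flow to rivals. The aim is that every tie along the new path, and $a$'s final-round lead, becomes a gap of about $k$. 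A destructive adversary must then close gaps of order $k$, and each changed vote moves a pairwise score difference by at most $2$. Lemma~\ref{lem:margin-stv} applied in reverse suggests this is the right quantity, so the $\lceil k/2\rceil$ bound should follow by counting.
\item \textbf{Soundness.} Suppose no cover exists and fewer than $n^{1-\zeta}\approx B$ rankings are changed. Every round-score comparison that was separated by $\Theta(B)$ keeps its sign. Hence the elimination sequence must be one of the tie-resolved paths of the unscaled gadget. On each such path, the gadget's correctness forces the original winner $w$ to survive. The same argument shows that no other candidate can win either, so $\move(\bpi,r,a)=\mov(\bpi,r)\ge n^{1-\zeta}$.
\end{enumerate}

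The main obstacle is soundness. STV transfers cascade, so a few votes placed in low-order positions can reroute large blocks once an earlier elimination changes. The argument must therefore show, round by round, that each elimination is determined up to the intended ties. We would do this by induction on rounds, proving that every non-tied score gap in every reachable restricted profile stays at least $B-2T$. Here $T<B/4$ is the number of changed rankings, which mirrors the counting in Remark~\ref{rem:stv-quadratic}.

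A second delicacy is that the single change in the YES case must break all intended ties simultaneously. We would handle this by fixing the tie-breaking order so that the default resolution already follows the cover path except at one decisive tie. The gadget must also avoid ties in the wrong places, or soundness breaks. We would address this by adding distinct small offsets of size $\le m$ to the scores, which stays well below $B$.
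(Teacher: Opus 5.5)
\textbf{Genuine gap: the strengthened YES condition.}

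Your overall architecture matches the paper's. The paper also uses a tie gadget inspired by \citet{BO91}, built from 3SAT with literal degree $2$. All blocks are scaled by $t\approx n^{1-\zeta}$, so every non-tie comparison is separated by a whole block while the decision points are exact ties. Soundness comes from stability of the forced elimination skeleton under fewer than $t/2$ changes (Proposition~\ref{prop:elimination-order}). A bad path costs $a$ a whole block, because a violated clause's $t$ voters transfer to $w$.

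The gap is in your Step~2 and in your fix for the ``second delicacy''. Spending the remaining $k-1$ changes on moving $a$ to the top of individual rankings does not widen every tie to order $k$. Such a change only withdraws that ranking's vote from the candidates it would otherwise support. In a gadget like the paper's, the natural supporters of a to-be-eliminated tie candidate never reach another decision point, so each change widens at most one tie. With $N$ decision points, some tie keeps width $O(1+k/N)$. An adversary can flip it with $O(1+k/N)$ changes. Whenever that reroutes to a non-witness path (e.g., when the witness is unique), $w$ gains a full block and overwhelms $a$'s $O(k)$ lead, so $\mov(\bpi',r)\ge\lceil k/2\rceil$ fails.

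Appealing to Lemma~\ref{lem:margin-stv} ``in reverse'' cannot repair this, since that lemma only upper-bounds the margin of victory. Your fix for the delicacy is also unavailable. The tie-breaking order is part of the instance produced by a polynomial-time reduction that does not know any cover. A path with a single decisive tie carries only one binary choice, which cannot encode a witness.

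\textbf{The missing idea: a threading ranking, replicated $k$ times.}

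You need a ranking whose single vote visits every decision point in order, and the gadget must be designed so this works. In the paper it is $z_{\oell_1}\succ\cdots\succ z_{\oell_N}\succ a\succ w\succ\cdots$, with $\ell_i$ the false literals. Here $z_{\oell_i}$ wins the tie in round $3i-2$ but is itself eliminated in round $3i$. Its vote then moves on to $z_{\oell_{i+1}}$, and after the last selection candidate it reaches $a$.

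The strengthened YES condition then follows by replacing $k$ Winner Voters with $k$ identical copies of this ranking, not one copy plus other edits. Every tie now has gap exactly $k$, and in round $3N+1$ candidate $a$ leads $w$ by $2k-1$. Any $\bpi''$ within $\lceil k/2\rceil-1$ changes of $\bpi'$ is within $k+\lceil k/2\rceil-1<t/4$ changes of the original $\bpi$. So the stability proposition, which is stated relative to $\bpi$ (hence the restriction $k\le\lfloor t/6\rfloor$), still forces the skeleton. Both gaps $k$ and $2k-1$ exceed $2(\lceil k/2\rceil-1)$, so $a$ still wins.

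Your remark that ``one changed vote still flips them'' is exactly this threading property. It must be built into the gadget and verified; once it is, Step~2 is just $k$-fold replication of that vote.
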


Before proving the theorem, we record two immediate approximation-hardness
corollaries.  The additional YES-case guarantee
$\mov(\bpi',r)\ge\lceil k/2\rceil$ is not needed for these corollaries; it is
included for the later DP lower bound.

\begin{corollary} \label{cor:inapprox-commov}
For any constant $\zeta > 0$, it is NP-hard to approximate $\mov$ for STV to within a factor of $n^{1 - \zeta}$.
\end{corollary}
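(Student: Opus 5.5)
This corollary follows directly from Theorem~\ref{thm:stv}, via the standard gap argument. Fix a constant $\zeta>0$. Suppose, for contradiction, that some polynomial-time algorithm $\mathcal{B}$ approximates $\mov$ for STV within a factor $n^{1-\zeta}$. Concretely, on input $(\bpi,r)$ it returns a value $\widehat{M}$ with $\mov(\bpi,r)\le \widehat{M}\le n^{1-\zeta}\mov(\bpi,r)$. The other direction of the approximation convention is handled the same way after rescaling.

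The plan is to use $\mathcal{B}$ to decide the gap problem of Theorem~\ref{thm:stv} with a constant $\zeta'<\zeta$ in place of $\zeta$; taking $\zeta'=\zeta/2$ suffices.

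\textbf{YES case.} Here $\mov(\bpi,r)\le 1$, so $\widehat{M}\le n^{1-\zeta}$.

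\textbf{NO case.} Here $\mov(\bpi,r)\ge n^{1-\zeta'}$, so $\widehat{M}\ge n^{1-\zeta'}$.

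For $n$ large enough, $n^{1-\zeta'}>n^{1-\zeta}$. So the rule ``answer YES iff $\widehat{M}\le n^{1-\zeta}$'' separates the two cases. Instances with $n$ below a constant threshold can be solved by brute force in constant time, because $m\le n^{\zeta'}$ is then also bounded. This gives a polynomial-time decision procedure for an NP-hard problem, so the approximation is NP-hard.

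Two edge points need checking. First, in the YES case $\mov(\bpi,r)$ could be $0$ only if the winner were not unique. By definition $\mov\ge1$, since no $0$-neighbor changes the outcome, so the YES case has $\mov=1$ exactly. The multiplicative approximation is therefore meaningful, with no division by zero. Second, the factor is stated in terms of $n$, the number of voters of the produced instance, which is exactly the parameter in Theorem~\ref{thm:stv}. The main step, which is minor, is choosing $\zeta'<\zeta$ so that the gap strictly exceeds the approximation factor.
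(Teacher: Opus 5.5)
Your proposal is correct and is essentially the paper's argument. The paper records this corollary as an immediate consequence of the YES/NO gap in Theorem~\ref{thm:stv}, namely $\mov\le 1$ versus $\mov\ge n^{1-\zeta}$. Your choice of $\zeta'=\zeta/2$ simply makes explicit the strict separation that the paper leaves implicit.
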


\begin{corollary} \label{cor:inapprox-bribery}
For any constant $\zeta > 0$, it is NP-hard to approximate $\bribery$ for STV to within a factor of $n^{1 - \zeta}$.
\end{corollary}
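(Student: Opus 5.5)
We derive the corollary directly from Theorem~\ref{thm:stv} by a gap argument, applying that theorem with a smaller constant $\zeta'=\zeta/2$. Suppose, for contradiction, that some polynomial-time algorithm $\mathcal{B}$ approximates \bribery{} for STV within a factor $\rho(n)=n^{1-\zeta}$. We fix the convention that on input $(\bpi,r,a)$ it returns a value $v$ with $\move(\bpi,r,a)\le v\le \rho(n)\cdot\move(\bpi,r,a)$. If the convention is instead $\move/\rho\le v\le\move$, we rescale $v$ by $\rho$, and the argument below is unchanged.

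Given an instance $(\bpi,r,a)$ produced by the reduction of Theorem~\ref{thm:stv} with parameter $\zeta'$, we compare the two cases.
\begin{itemize}
\item \textbf{YES case.} We have $\move(\bpi,r,a)\le 1$. Since $a\ne w=r(\bpi)$, also $\move(\bpi,r,a)\ge 1$, so $\move(\bpi,r,a)=1$ exactly. Hence $\mathcal{B}$ outputs $v\le n^{1-\zeta}$.
\item \textbf{NO case.} We have $\move(\bpi,r,a)\ge n^{1-\zeta/2}$, so $v\ge n^{1-\zeta/2}$.
\end{itemize}
Since $n^{1-\zeta/2}>n^{1-\zeta}$ whenever $n>1$, the test ``$v\le n^{1-\zeta}$'' separates the two cases. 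This would decide an NP-hard promise problem in polynomial time. The same argument with $\mov$ in place of $\move$ proves Corollary~\ref{cor:inapprox-commov}: in the YES case $\mov(\bpi,r)=1$ because $\mov\ge1$ always holds, and in the NO case $\mov\ge n^{1-\zeta/2}$.

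The only delicate point is the bookkeeping of $n$. The gap must be stated in terms of the size parameter $n$ of the instance actually produced by the reduction, not the size of the original NP instance. Theorem~\ref{thm:stv} already phrases both thresholds in terms of that same $n$, so no further adjustment is needed beyond choosing $\zeta'<\zeta$ to obtain a strict gap.

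Finite exceptions are harmless. For $n$ below any fixed constant, the instance can be solved by brute force over all $k$-neighbors, since both $m$ and $n$ are then constant. Alternatively, the reduction produces instances with $n$ polynomially large in the source instance, so only finitely many source instances are affected.
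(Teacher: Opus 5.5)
Your proposal is correct and follows the paper's route: the paper presents this corollary as an immediate consequence of the gap in Theorem~\ref{thm:stv}, namely $\move(\bpi,r,a)\le 1$ in the YES case versus $\move(\bpi,r,a)\ge n^{1-\zeta}$ in the NO case. Invoking the theorem with $\zeta'=\zeta/2$ adds care the paper omits: it makes the separation strict, so an approximation within factor exactly $n^{1-\zeta}$ genuinely distinguishes the two cases.
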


Both results are nearly tight: changing all votes gives a trivial
$n$-approximation.  They also strengthen the earlier factor-$2$ hardness from
\cite{Xia12}, which builds on the NP-hardness of distinguishing
$\mov(\bpi,r)=1$ from $\mov(\bpi,r)=2$~\cite{BO91}.

The reduction is from 3SAT.  The input is a 3CNF formula $\Phi$ with $M$ clauses
over $N$ variables; each clause is a disjunction of three literals.  Let $\cX$
be the variable set.  An assignment is a function $\psi:\cX\to\{0,1\}$, and it
satisfies $\Phi$ if every clause is satisfied.

We say that $\Phi$ has \emph{bounded literal degree $d$} if every literal appears
in at most $d$ clauses.  We use the following result of
\cite{BKS03}.\footnote{See also \cite{DarmannD21}, which gives a simplified and
generalized reduction.}
%\footnote{Strictly speaking, \cite{BKS03} proves hardness for the case where each \emph{variable} appears in at most \emph{four} clauses. However, we may assume that every literal appears in at most \emph{three} clauses in their instances, since, otherwise, we can simply assign the variable according to the literal and remove all the associated clauses.}:

\begin{theorem}[\citealt{BKS03}]
It is NP-hard to decide whether a given 3CNF formula $\Phi$ with bounded literal degree 2 is satisfiable.
\end{theorem}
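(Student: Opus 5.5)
The plan is a chain of polynomial-time reductions from unrestricted 3SAT. Each step preserves satisfiability, and each step gets closer to the bounded-literal-degree-$2$ form. The skeleton is the classical occurrence-splitting trick, followed by a clean-up step that turns shortened clauses back into exact 3-clauses.

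First, occurrence splitting. For a variable $x$ occurring $k$ times in $\Phi$, introduce fresh copies $x_1,\ldots,x_k$. Replace the $j$-th occurrence of $x$ by $x_j$, preserving its sign. Then add the cyclic implication clauses $(\neg x_j\vee x_{j+1})$ for $j\in[k]$, with indices taken mod $k$. The cycle forces all copies to be equal in any satisfying assignment, so satisfiability is preserved. Now count literal occurrences. Each literal $x_j$ appears once in the cycle and at most once in an original clause. Likewise $\neg x_j$ appears once in the cycle and at most once in an original clause. So every literal has degree at most $2$.

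Second, restoring clause width. The cycle clauses have only two literals, and some original clauses may also be short. To make every clause contain exactly three literals, I would pad each short clause $C$ with a fresh literal $y_C$ that is forced false. This must be done by a gadget that uses $\neg y_C$ at most once outside $C$, and that keeps all of its own literals at degree at most $2$. The natural attempt is to let $\neg y_C$ imply, through a binary tree of degree-$2$ clauses, a small constant-size formula $G$ whose only satisfying behaviour sets $y_C=0$. Then $G$ must itself be a bounded-literal-degree-$2$ gadget that is unsatisfiable when $y_C=1$. The same gadget can also be shared across all paddings by linking copies through implication cycles as in the first step.

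The main obstacle is constructing this constant-size degree-$2$ forcing gadget. Forcing a literal false requires an unsatisfiable core, and every such core tends to reuse literals many times. If a direct construction proves messy, I would fall back on the known intermediate result of Berman--Karpinski--Scott, in which every variable occurs at most four times with exactly three literals per clause. I would then finish by a pure-literal clean-up. If some literal $\ell$ of a variable with at most four occurrences appears at least three times, then $\neg\ell$ appears at most once. In that case, setting $\ell$ true and deleting every clause containing $\ell$ preserves satisfiability and strictly decreases the instance size. A clause containing $\neg\ell$ would shrink, so I would first argue, or arrange in the intermediate instance, that $\neg\ell$ does not occur there. Iterating this step yields an equisatisfiable 3CNF in which every literal has degree at most $2$, which is the claimed NP-hardness.
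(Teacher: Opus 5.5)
The paper does not prove this statement. It imports it from Berman--Karpinski--Scott, whose theorem is already the literal-degree-two statement (every clause has exactly three literals and every literal occurs exactly twice), so no clean-up is needed there. Your self-contained attempt has a genuine gap, and the fallback does not close it. The pure-literal step is sound only when $\neg\ell$ does not occur at all. Suppose a variable occurs three times as $\ell$ and once as $\neg\ell$. Setting $\ell$ true and deleting $\neg\ell$ from its clause preserves only the direction ``reduced formula satisfiable $\Rightarrow$ original satisfiable''. If the other two literals of the clause containing $\neg\ell$ are false in every satisfying assignment, the original formula is satisfiable (with $\ell$ false) while the reduced one is not. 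Hence, starting from instances in which every variable occurs at most four times (that intermediate class goes back to Tovey), pure-literal elimination only reaches literal degree~$3$. Ruling out the $3{+}1$ occurrence pattern, which you propose to ``argue or arrange'', is exactly the content of the theorem, so the fallback is circular.

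The direct route is unfinished at its only nontrivial point. Step~1 is the standard splitting and is fine (skip variables with $k=1$). Step~2, however, relies on a forcing gadget you never construct. Reaching it through a tree of implications, or sharing one gadget through implication cycles, reintroduces two-literal clauses that themselves need padding. What you actually need is one unsatisfiable CNF $U$ with exactly three literals on distinct variables per clause and every literal occurring at most twice. Given such a $U$, take a minimally unsatisfiable subformula and delete one clause $D$. The remainder is satisfiable, every satisfying assignment of it falsifies all three literals of $D$, and each of these literals occurs at most once in the remainder. Fresh copies of this gadget then pad every two-literal clause produced by Step~1 while keeping all literal degrees at most $2$, and equisatisfiability is immediate. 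Exhibiting such a $U$ is a finite but nontrivial task; it is essentially what the cited construction supplies. Without it, your argument does not establish the theorem.
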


%We will use the following celebrated result of Hastad as a starting point of our reduction.

%\begin{theorem}[\cite{Hastad01}]
%For any $\gamma > 7/8$, it is NP-hard to, given a 3CNF formula $\Phi$, distinguish between the following cases:
%\begin{itemize}
%\item (YES) There exists a satisfying assignment of $\Phi$.
%\item (NO) Any assignment of $\Phi$ has value at most $\gamma$.
%\end{itemize}
%\end{theorem}

The reduction below is inspired by the set-cover-based reduction of
\cite{BO91}.  Reducing from 3SAT gives a smaller construction and makes the
amplification needed for approximation hardness more transparent.

\subsection{Reduction and Proof of Theorem~\ref{thm:stv}}

Consider a 3CNF formula $\Phi$ with bounded literal degree $2$.  Let
$x_1,\ldots,x_N$ be its variables and $\mathsf{c}_1,\ldots,\mathsf{c}_M$ its
clauses.  Since each clause has three literals and each literal appears at most
twice,
$\frac{N}{3} \le M \le \left\lceil \frac{4N}{3} \right\rceil.$
The candidate set $\cC$ contains the following $4N+2$ candidates:
\begin{itemize}
\item \emph{Winner Candidates:} two candidates $w$ and $a$, where $w$ is the
winner in the constructed profile and $a$ is the alternate winner.
\item \emph{Literal Candidates:} candidates $y_{x_i}$ and $y_{\ox_i}$ for each
variable $x_i$.
\item \emph{Selection Candidates:} candidates $z_{x_i}$ and $z_{\ox_i}$ for each
variable $x_i$.
%\item \emph{Clause Candidates: } We create $M$ candidates $c_1, \dots, c_M$ corresponding to the clauses of $\Phi$.
\end{itemize}
Let
$t=\left\lceil1000\cdot(2000N^2)^{1/\zeta}\right\rceil$ and $D=10N$.  The
profile $\bpi$ consists of the following voters; unspecified suffixes may be
ordered arbitrarily.
\begin{itemize}
\item \emph{Winner voters:} $4Dt$ voters with preferences $w \succ a \succ \cdots$.
\item \emph{Alternate Voters:} $4Dt-1$ voters with preferences $a \succ w \succ \cdots$.
\item For each literal $\ell$ of variable $x_i$, with negation $\oell$, create:
\begin{itemize}
\item \emph{Assignment Voters for $\ell$:} $(D+5i-4)t$ voters ranking
$z_{\ell} \succ y_{\oell} \succ y_{\ell} \succ w \succ a \succ \cdots$.
\item \emph{Selection Voters for $\ell$:} $4t$ voters ranking
$z_{\ell} \succ z_{\oell} \succ y_{\ell} \succ y_{\oell} \succ w \succ \cdots$.
\item \emph{Literal Voters for $\ell$:} $(D+5i+1)t$ voters ranking
$y_{\ell} \succ y_{\oell} \succ w \succ \cdots$.
\end{itemize}
\item \emph{Clause Voters:} for every clause $\mathsf{c}_i$ with literals
$\ell_{i,1},\ell_{i,2},\ell_{i,3}$, create $t$ voters ranking
$y_{\ell_{i,1}} \succ y_{\ell_{i,2}} \succ y_{\ell_{i,3}} \succ w \succ a \succ \cdots$.
%\item \emph{Literal Voters: } For every variable $x_i$, create $2mt$ voters of the form $x_i \succ \ox_i \succ w \succ a \succ \cdots$ and $2mt$ voters of the form  $\ox_i \succ x_i \succ w \succ a \succ \cdots$. We refer to the former (resp. latter) set of voters as \emph{literal voters} for $x_i$ (resp. $\ox_i$).
\end{itemize}
The resulting instance has $m=4N+2$ candidates and $n\le2000N^2t$ voters.  By
the choice of $t$, we have $t\ge1000\cdot n^{1-\zeta}$ and $m\le n^\zeta$.

The key property of the construction is that the elimination order is stable
under fewer than $t/2$ vote changes.

\begin{proposition} \label{prop:elimination-order}
Let $\kappa$ be non-negative integer such that $\kappa < \frac{t}{2}$.
For any $\bpi'$ that is a $\kappa$-neighbor of $\bpi$, the elimination order for STV on $\bpi'$ must be as follows:
\begin{enumerate}
\item In the $j$-th round for $j \in [3N]$, let $i = \lceil j / 3 \rceil$ and $\ell_i \in \{x_i, \bar{x_i}\}$. The eliminations are as follows: \label{item:eliminiation-first-rounds}
\begin{itemize}
\item Case A: $j = 3i - 2$. $z_{\ell_i}$ is eliminated.
\item Case B: $j = 3i - 1$. $y_{\ell_i}$ is eliminated.
\item Case C: $j = 3i$. $z_{\oell_i}$ is eliminated (where $\oell_i$ is the negation of $\ell_i$).
\end{itemize}
\item In the $(3N + 1)$-th round, one of $w$ or $a$ is eliminated.
\item In the $(3N + 1 + i)$-th round for $i \in [N]$, $y_{\oell_i}$ is eliminated.
\item Finally, the remaining one among $w$ or $a$ is the winner.
\end{enumerate}
\end{proposition}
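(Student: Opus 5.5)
The plan is to track the exact first-place tallies in every round of STV on the unperturbed profile $\bpi$, in units of $t$. I will show that in each round of the claimed order the eliminated candidate's tally is at least $t$ below every other surviving candidate, except for the deliberately tied pairs described below. A $\kappa$-neighbor with $\kappa<t/2$ changes any restricted tally by at most $\kappa$. So a strict gap of at least $t$ survives perturbation: any two tallies move relative to each other by at most $2\kappa<t$. Tie-breaking therefore never matters, except where the statement leaves the choice open, namely the choice of $\ell_i$ and the choice between $w$ and $a$. Throughout I use bounded literal degree $2$: each $y_\ell$ receives at most $2t$ clause votes at any time, from the clauses containing $\ell$.

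\textbf{Step 1: initial tallies.} The initial first-place tallies are as follows.
\begin{itemize}
\item $w$ has $4Dt$ and $a$ has $4Dt-1$.
\item $z_\ell$ has $(D+5i)t$, counting its assignment and selection voters.
\item $y_\ell$ has between $(D+5i+1)t$ and $(D+5i+3)t$, counting its literal voters and at most two blocks of clause voters.
\end{itemize}

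\textbf{Step 2: induction over variables.} I prove by induction on $i$ that at the start of round $3i-2$ the surviving candidates are exactly the following:
\begin{itemize}
\item $w$ and $a$, with roughly $4Dt$ votes each, where $w$ may have gained transferred clause votes;
\item $y_{\oell_k}$ for $k<i$, each with tally at least $2Dt$;
\item all $z$- and $y$-candidates of variables $\ge i$, with tallies unchanged except for clause-vote gains.
\end{itemize}
The three rounds for variable $i$ then proceed as follows.

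\emph{Case A.} Both $z_{x_i}$ and $z_{\ox_i}$ sit at $(D+5i)t$. Every other candidate has at least $(D+5i+1)t$. So one of the two $z$'s is eliminated, and that one defines $\ell_i$. Its assignment voters move to $y_{\oell_i}$, and its selection voters move to $z_{\oell_i}$, which rises to $(D+5i+4)t$.

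\emph{Case B.} Now $y_{\ell_i}$ has at most $(D+5i+3)t$. The competitors all have at least $(D+5i+4)t$: $z_{\oell_i}$ has $(D+5i+4)t$, $y_{\oell_i}$ has at least $(2D+10i-3)t$, and later $z$'s have at least $(D+5i+5)t$. So $y_{\ell_i}$ goes. Its literal voters move to $y_{\oell_i}$. Its clause voters move to the next surviving literal in their clause, or to $w$; such transfers only increase tallies.

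\emph{Case C.} $z_{\oell_i}$, at $(D+5i+4)t$, is now at least $t$ below everything else, since the next smallest is $z$ of variable $i+1$ at $(D+5i+5)t$. All of its voters, and the selection voters of $\ell_i$, land on $y_{\oell_i}$.

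\textbf{Step 3: after $3N$ rounds.} Only $w$, $a$ and the $y_{\oell_i}$ remain. Each $y_{\oell_i}$ holds both literal blocks, both assignment blocks and both selection blocks of variable $i$, for at least $(4D+20i+2)t\ge(4D+22)t$. Candidate $a$ never receives transfers, because winner voters rank $w$ first. So $a\le 4Dt-1+\kappa$, which is far below every $y_{\oell_i}$. Hence round $3N+1$ eliminates $a$, or in principle $w$; the statement allows either. The survivor among $w,a$ then absorbs the other's votes and reaches at least $8Dt-2\kappa=80Nt-2\kappa$. Meanwhile each $y_{\oell_i}\le(4D+20i+4)t\le(60N+4)t$. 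The $y_{\oell_i}$'s differ pairwise by at least $20t-4t$. Therefore they are eliminated in increasing order of $i$, and the remaining one of $w,a$ wins.

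\textbf{Main obstacle.} The delicate part is the bookkeeping in Step 2: showing at each round that no transferred vote lowers a competitor and that the chosen candidate is strictly the minimum by a full $t$. The tight spots are Case C, where the gap is exactly $(D+5i+5)t-(D+5i+4)t=t$, and the $w$-versus-$y$ comparison, where $w$ may accumulate up to $Mt$ clause votes. The latter is why I would argue through $a$, which never gains votes, rather than through $w$.
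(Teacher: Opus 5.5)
Your proof is correct and follows essentially the same argument as the paper. Both compare restricted tallies in $\bpi$ against the slack $2\kappa<t$ round by round, use the same three cases per variable, and use $a$ (which gets no transfers before round $3N+1$) to show that no literal candidate can be eliminated in that round.

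One caveat applies equally to the paper's proof. Your final-phase bound $y_{\oell_i}\le(4D+20i+4)t$ assumes that the literal and selection voters of an eliminated $y_{\oell_k}$ pass to the survivor among $w,a$. This is automatic when $w$ survives. When $w$ was eliminated in round $3N+1$, it requires $a$ to come before every other literal candidate in those voters' unspecified suffix after $w$, as the construction states explicitly only for assignment and clause voters. Otherwise these votes can pile onto another literal candidate and break the increasing order claimed in item~3.
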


\begin{proof}
We start by proving item~\ref{item:eliminiation-first-rounds}. We prove this by strong induction. Assume that this is true for all rounds up to $j - 1$. Then, the three cases proceed as follows:
\begin{itemize}
\item Case A: $j = 3i - 2$. For both $\ell_i \in \{x_i, \ox_i\}$, the vote count of $z_{\ell_i}$ in $\bpi$ in this round is $(D + 5i)t$ (from Assignment Voters for $\ell_i$ and Selection Voters for $\ell_i$) whereas the vote counts for all other candidates are $\geq (D + 5i + 1)t$. Thus, the gap is $t > 2\kappa$. Since the vote count of each candidate w.r.t. $\bpi'$ differs that w.r.t. $\bpi$ by at most $\kappa < \frac{t}{2}$, we can conclude that one of them must be eliminated in this round.
\item Case B: $j = 3i - 1$. The vote count of $y_{\ell_i}$ is now at most $(D + 5i + 3)t$ (from Literal Voters for $\ell_i$, and possibly $2t$ Clause Voters) in $\bpi$. Meanwhile, the vote counts of all other candidates are at least $(D + 5i + 4)t$. (In particular, $z_{\oell_i}$ receives $4t$ additional votes from Selection Voters for $\ell_i$ in the previous round.) As a result, the gap is at least $t > \kappa / 2$ again, and we can conclude that $y_{\ell_i}$ must be eliminated in $\bpi'$.
\item Case C: $j = 3i$. In this round, $z_{\oell_i}$ has a vote count of $(D + 5i + 4)t$ while all other candidates have vote counts at least $(D + 5(i + 1))t$. Thus, we can again conclude that $z_{\oell_i}$ is eliminated.
\end{itemize}

Now, in the $(3N + 1)$-th round, we can see that all the remaining Literal Candidates have vote counts at least $(4D + 20i + 2)t$ (from all Assignment Voters, Selection Voters and Literal Voters of itself and its negation) in $\bpi$. Meanwhile, $a$ has at most $4Dt$ votes in $\bpi$. Thus, since the gap is more than $2\kappa$, the Literal Candidates cannot be eliminated in this round. Thus, either $a$ or $w$ must be eliminated.

Finally, once $a$ or $w$ is eliminated, the other candidate $o$ will have at least $8Dt$ votes in $\bpi$. Meanwhile, all the other Literal Candidates have only at most $(4D + 20i + 4)t$ vote counts in $\bpi$ before $o$ is eliminated. Since the gap between these two are more than $2\kappa$, we can once again conclude that all Literal Candidates are eliminated before $o$ and, thus, $o$ remains the winner.
\end{proof}

Given Proposition~\ref{prop:elimination-order}, it is simple to check that $w$ wins the election if the list of preferences is $\bpi$. Namely, after the first $3N$ rounds, $a$ will have at most $4Dt - 1$ votes while $w$ has at least $4Dt$ votes, so the former gets eliminated. Thus, $w$ is left as the winner. We will next prove the completeness and soundness of the reduction.

\paragraph{Completeness.} Suppose that there exists a satisfying assignment $\psi$ of $\Phi$. Let $\ell_1 , \dots, \ell_N$ denote the literals that are set to false by $\psi$. Fix $k \le \lfloor t/6 \rfloor$. (Recall that $\lfloor t/6 \rfloor \geq n^{1 - \zeta}$.)

We construct $\bpi'$ from $\bpi$ by keeping all the voters' preference except for $k$ of the Winner Voters where we change them to $z_{\oell_1} \succ \cdots \succ z_{\oell_N} \succ a \succ w \succ \cdots$. We claim that $a$ is the winner of the election and that it has MoV at least $\left\lceil \frac{k}{2} \right\rceil$. To see that this is the case, consider any $\bpi''$ that is a $\left(\left\lceil \frac{k}{2}\right\rceil - 1\right)$-neighbor of $\bpi'$. Note that this means that $\bpi''$ is a $\left(k + \left\lceil \frac{k}{2}\right\rceil - 1\right)$-neighbor of $\bpi$. Since $\left(k + \left\lceil \frac{k}{2}\right\rceil - 1\right) < \frac{t}{4}$, we may apply Proposition~\ref{prop:elimination-order}. %Thus, we know that, in the first $N$ rounds exactly one literal from each variable is eliminated. We claim that, for $i \in [N]$, $\ell_i$ is kept while its negation $\oell_i$ is eliminated in the first $N$ rounds. To see that this is the case, observe that

We claim that $z_{\ell_i}$ is eliminated in round $3i - 2$ for all $i \in [N]$. Again, this can be seen via induction: Suppose this holds in previous rounds. In the round $3i - 2$, due to the voters we changed, $z_{\oell_i}$ will have a vote count of $(D + 5i)t + k$ in $\bpi'$ while the vote count of $z_{\ell_i}$ remains at $(D + 5i)t$ in $\bpi'$. Since $\bpi''$ is a $\left(\left\lceil \frac{k}{2}\right\rceil - 1\right)$-neighbor of $\bpi'$, $z_{\ell_i}$ must also be eliminated in $\bpi''$. It also follows (from Proposition~\ref{prop:elimination-order}) that the corresponding $y_{\ell_i}$ is eliminated in round $3i - 1$.

Now, consider the $(3N + 1)$-th round. Since $\psi$ is a satisfying assignment, none of the Clause Voters votes for $w$ in this round. Thus, in $\bpi'$, $w$ only receives the vote counts from the $(4Dt - k)$ Winner Voters that were not changed. Meanwhile, $a$ receives votes from both the Alternate Voters and the $k$ Winner Voters that were adjusted, for a total count of $(4Dt - 1 + k)$ in $\bpi'$. Since this gap $(2k - 1)$ is greater than $2\left(\left\lceil \frac{k}{2}\right\rceil - 1\right)$, we can conclude that $w$ is eliminated in $\bpi''$. Thus, by Proposition~\ref{prop:elimination-order}, $a$ must be a winner in $\bpi''$.

\paragraph{Soundness.} Suppose that $\Phi$ admits no satisfying assignment. Consider any $\bpi'$ that is a $\kappa$-neighbor of $\bpi$ for any $\kappa < t/2$. From Proposition~\ref{prop:elimination-order},  the candidates eliminated in the first $3N$ rounds (for $\bpi'$) consists of $N$ literal variables, one from each variable $x_i$. Let $\psi$ denote the assignment that sets all of these literals to false. By our assumption, at least one clause is violated in $\Phi$. At round $(3N + 1)$, the corresponding Clause Voters will have $w$ as their most preferred candidate. Thus, $w$ has vote count of at least $4Dt + t$ in $\bpi$. Meanwhile, $a$ has vote count at most $4Dt$ in $\bpi$. Since $\kappa < t/2$ (and from Proposition~\ref{prop:elimination-order}), we can conclude that $a$ is eliminated in this round for $\bpi'$. Thus, by Proposition~\ref{prop:elimination-order}, we can conclude that $w$ remains the winner of the election. In other words, $\mov(\bpi, r) \geq \lceil t/2 \rceil \geq n^{1 - \zeta}$ as desired.

\section{Additional Details for Central Differential Privacy}
\label{app:missing-proof-section-3}

This appendix supplies the proofs omitted from
Section~\ref{sec:central-dp}.  We first analyze the utility of the generic
exponential-mechanism algorithm and its efficient Maximin variant.  We then
prove the computational barrier for private STV and conclude with the packing
lemmas used for the central-DP lower bounds.

\subsection{Upper Bounds}

We begin with the information-theoretic algorithm, whose score is the negative
margin of victory for every outcome other than the true winner.

\begin{proof}[Proof of \Cref{thm:central-em}]
For the score function in~\eqref{eqn:score}, the optimal utility is attained by
$r(\bpi)$ and equals
$\OPT(\bpi)=q(\bpi,r(\bpi))=0$,
and $|\cO|=m+1$.  Set
\[
L=\frac{2}{\eps}\log\!\left(\frac{m+1}{\gamma}\right)
\quad\text{and}\quad \tau=L+1.
\]
If $\mov(\bpi,r)\ge\tau$, then every $a\ne r(\bpi)$ satisfies
$q(\bpi,a)\le-\tau<-L$.  By Theorem~\ref{thm:expmech-utility}, with probability
at least $1-\gamma$ the exponential mechanism returns an outcome with utility
at least $-L$.  The only such outcome is $r(\bpi)$, including when
$r(\bpi)=\bot$.
\end{proof}

For Maximin, the efficiently computable score gap in~\eqref{eqn:2approx}
approximates the margin of victory within a factor of two, up to the one-vote
adjustment needed for fixed tie-breaking.  Indeed, changing one vote changes
each maximin score by at most $2$, so fewer than
$\Delta_{\mathrm{mm}}/4$ changes cannot alter the winner.  Conversely, the
construction in~\cite[Theorem~12]{Xia12} changes
$\Delta_{\mathrm{mm}}/2$ votes to make the original winner cease to be the
unique maximin-score maximizer.  From such a tie, one additional vote suffices
to change the winner selected by any fixed tie-breaking rule.  This proves
\eqref{eqn:2approx}.

\begin{proof}[Proof of \Cref{thm:cdp-upper-maximin}]
For $q'$, we have $\OPT(\bpi)=0$ and $|\cO|=m$.  Set
\[
L=\frac{2}{\eps}\log\!\left(\frac{m}{\gamma}\right)
\quad\text{and}\quad \tau=2L+2.
\]
By~\eqref{eqn:2approx}, if $\mov(\bpi,r)\ge\tau$, then every
$a\ne r(\bpi)$ satisfies
\[
q'(\bpi,a)=-\frac{\Delta_{\mathrm{mm}}}{4}
\le-\frac{\mov(\bpi,r)-1}{2}
\le-L-\frac12.
\]
Theorem~\ref{thm:expmech-utility} returns an outcome with score at least $-L$
with probability at least $1-\gamma$.  Hence it must return $r(\bpi)$, proving
the stated threshold.
\end{proof}

\subsection{Computational Barrier for STV}

The generic exponential mechanism is not efficient for STV because its score
requires computing the STV margin of victory.  The next proof strengthens this
observation: any efficient private algorithm satisfying the two usefulness
guarantees in Theorem~\ref{thm:stv-no-efficient-dp} would solve the hard gap
instance constructed in Theorem~\ref{thm:stv}.

\begin{proof}[Proof of Theorem~\ref{thm:stv-no-efficient-dp}]
Suppose for contradiction that such an algorithm $\alg$ exists.  We use it to
solve the distinguishing problem in Theorem~\ref{thm:stv} with
$\zeta=1/(C_2+1)$:
\begin{itemize}
\item Independently run $\alg(\bpi, r)$ for
$R=\lfloor m^{C_3}/3\rfloor$ repetitions.
\item If all the runs return $w$, then output NO.
\item Otherwise, output YES.
\end{itemize}

The running time is polynomial.  We analyze the two cases.
\begin{itemize}
\item (YES) Let
$k=\left\lceil4C_1\log(m)/\eps\right\rceil$.  For all sufficiently large
instances, $k\le n^{1-\zeta}$, so Theorem~\ref{thm:stv} gives a $k$-neighbor
$\bpi'$ of $\bpi$ such that $r(\bpi')=a$ and
\[
\mov(\bpi',r)\ge\left\lceil\frac{k}{2}\right\rceil
\ge\frac{C_1\log m}{\eps}.
\]
Usefulness therefore gives
$\Pr[\alg(\bpi',r)=a]\ge1-\gamma$.  By group privacy,
\begin{align*}
p:=\Pr[\alg(\bpi,r)=a]
&\ge e^{-\eps k}\Pr[\alg(\bpi',r)=a] \\
&\ge (1-\gamma)e^{-\eps}m^{-4C_1}.
\end{align*}
Since $C_3>4C_1$, for all sufficiently large $m$ this is at least
$5m^{-C_3}$.  Also, for sufficiently large $m$,
$R\ge m^{C_3}/4$.  Consequently, the probability of outputting YES is at least
\[
1-(1-p)^R\ge1-e^{-pR}\ge1-e^{-5/4}>\frac23.
\]
\item (NO) Since $m\le n^\zeta$ and $\zeta=1/(C_2+1)$, we have
$n^{1-\zeta}\ge m^{C_2}$.  Thus
$\mov(\bpi,r)\ge n^{1-\zeta}\ge m^{C_2}$.  Since $\alg$ is
$\left(m^{C_2},1/m^{C_3}\right)$-useful, each run outputs $w$ with probability
at least $1-1/m^{C_3}$.  By the union bound, the probability that any of the
$R\le m^{C_3}/3$ runs does not output $w$ is at most $1/3$.  Hence our
algorithm outputs NO with probability at least $2/3$.
\end{itemize}
Thus, we solve the distinguishing problem in Theorem~\ref{thm:stv} with
probability at least $2/3$ in polynomial time, implying NP $\subseteq$ BPP.
\end{proof}

\subsection{Packing Lower Bounds}

We next prove the generic packing lemma from
Section~\ref{sec:central-dp}.  The proof compares the probability of a fixed
candidate under two nearby profiles with different robust winners.

\begin{proof}[Proof of Lemma~\ref{lem:packing-generic}]
Let $\alg$ be an $\eps$-DP algorithm that is $(\tau,
\gamma)$-useful.  Suppose
for contradiction that
\[
\tau < T := \left\lfloor \frac{\ln(0.4m/\gamma)}{\kappa \eps} \right\rfloor.
\]
If $T=0$, the claimed lower bound is immediate, so assume $T\ge1$.
Let $\bpi_1, \dots, \bpi_m$ be profiles guaranteed by Assumption~\ref{assumption:packing} with $t = T$.

By the definition of usefulness, we must have
\begin{align*}
\Pr[\alg(\bpi_1) = c_1] \geq 1 - \gamma.
\end{align*}
Thus, there must exist $i \in [m] \setminus \{1\}$ such that
\begin{align*}
\Pr[\alg(\bpi_1) = c_i] \leq \frac{\gamma}{m - 1}.
\end{align*}
Since $\bpi_1, \bpi_i$ are $(\kappa T)$-neighbors, we have
\begin{align*}
\Pr[\alg(\bpi_i) = c_i]
&\leq e^{\eps \cdot (\kappa T)} \cdot \Pr[\alg(\bpi_1) = c_i] \\
&\leq e^{\eps \cdot (\kappa T)} \cdot \frac{\gamma}{m - 1}.
\end{align*}
However, usefulness on $\bpi_i$ requires this probability to be at least
$1-\gamma$.  Therefore
\begin{align*}
T \geq \frac{1}{\kappa \eps} \cdot \ln\left(\frac{(m - 1)(1 - \gamma)}{\gamma}\right) > \frac{\ln(0.4m/\gamma)}{\kappa \eps},
\end{align*}
where the last inequality uses $\gamma<0.2$.  This contradicts the definition of
$T$, so $\tau\ge T$.
\end{proof}

It remains to verify the packing assumption for the voting rules considered in
the paper.  We first handle every rule satisfying the majority criterion, then
give a separate symmetric construction for arbitrary non-trivial scoring
rules.

\subsubsection{Rules satisfying the majority criterion.} 

\begin{proof}[Proof of \Cref{lem:packing-majority}]
For each $i\in[m]$, let $\bpi_i$ be a profile with $2t$ voters, all of whom rank
$c_i$ first.  The majority criterion gives $r(\bpi_i)=c_i$; changing fewer than
$t$ voters cannot remove $c_i$'s strict majority; and any two such profiles are
$2t$-neighbors.  Thus Assumption~\ref{assumption:packing} holds with
$\kappa=2$.
\end{proof}

\subsubsection{Scoring rules.}

For scoring rules that need not satisfy the majority criterion, we start from a
cyclic profile in which every candidate has the same score and perturb it to
favor a prescribed candidate.

\begin{proof}[Proof of Lemma~\ref{lem:packing-scoring-based}]
Construct a symmetric base profile $\bpi_{\mathrm{cycle}}$ with exactly $m$
rankings.  For $1\le i\le m$, the $i$-th ranking is the cyclic shift
$c_i\succ c_{i+1}\succ\cdots\succ c_m\succ c_1\succ\cdots\succ c_{i-1}$.

By symmetry, every candidate appears once in every rank position.  Hence all
candidates have the same score in $\bpi_{\mathrm{cycle}}$:
\begin{align*}
\score_{\bs}(c; \bpi_{\mathrm{cycle}}) = s_1 + \cdots + s_m & & \forall c \in \cC.
\end{align*}

Let $\Delta = s_1 - s_m > 0$.

Let $\bpi_{\mathrm{base}}$ consist of $2t$ copies of
$\bpi_{\mathrm{cycle}}$.  This profile has $2mt$ voters and all candidates are
tied:
\begin{align} \label{eq:tie-base}
\score_{\bs}(c; \bpi_{\mathrm{base}}) = 2t\left(s_1 + \cdots + s_m\right) & & \forall c \in \cC.
\end{align}

For each candidate $c_i$, construct $\bpi_i$ from $\bpi_{\mathrm{base}}$ as
follows.  Identify the $2t$ rankings in which $c_i$ is last,
\[
x_1 \succ x_2 \succ \dots \succ x_{m-1} \succ c_i.
\]
and move $c_i$ to the first position in those rankings, shifting all other
candidates down by one position:
\[
c_i \succ x_1 \succ x_2 \succ \dots \succ x_{m-1}.
\]

In the modified profile $\bpi_i$, candidate $c_i$ gains $\Delta$ points in each
of the $2t$ changed rankings:
\[
\score_{\bs}(c_i; \bpi_i) = \score_{\bs}(c_i; \bpi_{\mathrm{base}}) + 2t\Delta.
\]

Every other candidate is shifted down in the modified rankings, so monotonicity
of the scoring vector implies that its score cannot increase:
\begin{align*}
\score_{\bs}(c_j; \bpi_i)
&\leq \score_{\bs}(c_j; \bpi_{\mathrm{base}}) \\
&\overset{\eqref{eq:tie-base}}{=} \score_{\bs}(c_i; \bpi_{\mathrm{base}})
&&\forall c_j \ne c_i.
\end{align*}

Thus $c_i$ wins in $\bpi_i$.  Moreover, its score advantage over every other
candidate is at least $2t\Delta$.  Changing one preference can decrease this gap
by at most $2\Delta$, so $\mov(\bpi_i,r)\ge t$.

Finally, $\bpi_i$ and $\bpi_j$ each differ from $\bpi_{\mathrm{base}}$ in $2t$
preferences, so they are $4t$-neighbors.
\end{proof}

\section{Additional Details for Non-Interactive LDP: Plurality}
\label{app:plurality-details}

This appendix proves the non-interactive LDP upper and lower bounds for
Plurality from Section~\ref{sec:noninteractive-ldp}.  For the upper bound, we
first establish simultaneous accuracy of the DJW-privatized score vector and
then translate score accuracy into winner recovery.  For the lower bound, we
reduce locally private two-point mean testing to winner publication in a
two-candidate Plurality election.

\subsection{Upper Bound}

\subsubsection{Private score release.}

Each voter represents their top-ranked candidate by a one-hot vector.  The DJW
randomizer privately releases this vector, and the aggregator sums the reports
to estimate every plurality score simultaneously.

\begin{algorithm}[H]
\caption{Local publication of DJW-privatized vote vectors \(\tilde{\bs}\) for Plurality}
\label{alg:publish-plurality-scores}
\begin{algorithmic}[1]
\REQUIRE Privacy parameter \(\varepsilon > 0\), candidate set \(C=\{c_1,\dots,c_m\}\)
\FORALL{voters \(i \in [n]\)}
    \STATE Let \(w_i\) be voter \(i\)'s top-ranked candidate
    \STATE Initialize the one-hot vector
    \(
        v_{i,j} \gets
        \begin{cases}
            1 & \text{if } c_j = w_i,\\
            0 & \text{otherwise}
        \end{cases}
    \)
    \STATE Release
    \(\tilde{\bv}_i\gets
    \mathcal{R}_{\mathrm{DJW}}(\bv_i;1,\varepsilon)\)
\ENDFOR
\STATE Aggregate the scores \(\tilde{\bs} \gets \sum_{i=1}^n \tilde{\bv}_i\)
\RETURN \(\tilde{\bs}\)
\end{algorithmic}
\end{algorithm}

\begin{corollary}
    \label{cor:acc-algo-plu}
    For $\eps>0$, $\gamma\in(0,1/2)$,
    \[
        \lambda=C_{\mathrm{DJW}}\kappa_\varepsilon
        \sqrt{2n\log(2m/\gamma)},
    \]
    and $\tilde{\bs}$ the output of
    Algorithm~\ref{alg:publish-plurality-scores},
    \[
        \Pr\!\left[\max_{j\in[m]} |\tilde{s}_j - \scoreplu(c_j)| > \lambda\right] \leq \gamma.
    \]
\end{corollary}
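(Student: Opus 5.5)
The plan is to prove the bound coordinatewise using the sub-Gaussian guarantee of Lemma~\ref{lem:djw-randomizer}, and then take a union bound over the $m$ coordinates.

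Fix a coordinate $j\in[m]$ and let $e_j$ be the corresponding standard basis vector, which is a unit vector. Since $\bv_i$ is the one-hot vector of voter $i$'s top choice, $\sum_{i}v_{i,j}=\scoreplu(c_j)$. Hence
\[
\tilde{s}_j-\scoreplu(c_j)=\sum_{i=1}^n \left\langle \tilde{\bv}_i-\bv_i, e_j\right\rangle .
\]
Each $\bv_i$ has $\|\bv_i\|_2=1$, so Lemma~\ref{lem:djw-randomizer} applies with $r=1$. It shows that each summand is a centered sub-Gaussian variable with parameter at most $C_{\mathrm{DJW}}\kappa_\varepsilon$. The voters randomize independently, so the summands are independent. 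Sub-Gaussian parameters combine in quadrature, so the sum is sub-Gaussian with variance proxy $nC_{\mathrm{DJW}}^2\kappa_\varepsilon^2$.

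The standard two-sided sub-Gaussian tail bound then gives
\[
\Pr\!\left[|\tilde{s}_j-\scoreplu(c_j)|>\lambda\right]\le 2\exp\!\left(-\frac{\lambda^2}{2nC_{\mathrm{DJW}}^2\kappa_\varepsilon^2}\right).
\]
This is the same tail bound the paper uses for the pairwise statistics $\widetilde D(a,b)$. Substituting $\lambda=C_{\mathrm{DJW}}\kappa_\varepsilon\sqrt{2n\log(2m/\gamma)}$, the exponent becomes $-\log(2m/\gamma)$, so the right-hand side equals exactly $\gamma/m$. A union bound over the $m$ coordinates gives total failure probability at most $\gamma$, which is the claim.

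The only step that needs care is the convention for ``sub-Gaussian with parameter $\sigma$''. The plan reads it as $\mathbb{E}e^{sX}\le e^{s^2\sigma^2/2}$, which gives the tail $2e^{-t^2/(2\sigma^2)}$. This choice matches the stated $\lambda$, so the constants work out exactly and no slack is needed. No other obstacles are expected.
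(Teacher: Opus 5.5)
Your proposal is correct and follows the paper's own proof: the paper also writes each coordinate error as a sum of independent centered DJW projections, uses sub-Gaussian parameter $C_{\mathrm{DJW}}\kappa_\varepsilon\sqrt{n}$ to get a per-candidate tail of exactly $\gamma/m$, and then takes a union bound over the $m$ candidates. Your remark on the sub-Gaussian convention is the one the paper implicitly adopts, since that convention is what makes the stated $\lambda$ give exactly $\gamma/m$.
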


\begin{proof}
    Since $\|\bv_i\|_2=1$, Lemma~\ref{lem:djw-randomizer} implies that each
    release is $\varepsilon$-LDP and unbiased.  For each $c_j\in\cC$,
    the centered error
    \[
        \tilde{s}_j-\scoreplu(c_j)
        =\sum_{i=1}^n(\tilde{v}_{i,j}-v_{i,j})
    \]
    is a sum of independent sub-Gaussian random variables and is therefore
    sub-Gaussian with parameter at most
    $C_{\mathrm{DJW}}\kappa_\varepsilon\sqrt{n}$.  Thus
    \[
        \Pr\!\left[
        |\tilde{s}_j-\scoreplu(c_j)|>\lambda
        \right]
        \le 2\exp\!\left(
        -\frac{\lambda^2}
        {2C_{\mathrm{DJW}}^2\kappa_\varepsilon^2n}
        \right)
        =\frac{\gamma}{m}.
    \]
    A union bound over the $m$ candidates proves the claim.
\end{proof}

\subsubsection{Winner recovery.}

The preceding uniform error bound implies that an incorrect noisy winner can
occur only when the true plurality score gap, and hence the margin of victory,
is small.

\begin{proof}[Proof of Theorem~\ref{thm:ldp-plurality-upper}]
Let
$\lambda=C_{\mathrm{DJW}}\kappa_\varepsilon
\sqrt{2n\log(2m/\gamma)}$, and condition on the event from
Corollary~\ref{cor:acc-algo-plu}.  If the algorithm outputs $c\ne w$, then
\[
    \scoreplu(w)-\scoreplu(c)\le2\lambda.
\]
Lemma~\ref{lem:margin-plurality} gives
$\mov(\bpi,r)\le\lambda+1$.  Therefore every profile with
$\mov(\bpi,r)\ge\lambda+2$ is recovered correctly on the conditioned event,
which happens with probability at least $1-\gamma$.
\end{proof}

\subsection{Lower Bound}

The proof directly reduces two-point mean testing to publishing the winner of a
two-candidate Plurality election.

\begin{proof}[Proof of Theorem~\ref{thm:ldp-plurality-lower}]
Suppose such an algorithm $\alg$ exists.  Let $a>0$ be a sufficiently small
universal constant, and set
\[
    \theta=\frac{a}{\varepsilon}
    \sqrt{\frac{\log(1/\gamma)}{n}},
    \qquad
    \tau=\frac{n\theta}{4}
    =\frac{a}{4\varepsilon}\sqrt{n\log(1/\gamma)}.
\]
By choosing the constant $c$ in the theorem small enough, the assumption
$\gamma\ge e^{-cn\varepsilon^2}$ ensures that $\theta\le1/2$.

Let $P_+$ and $P_-$ be the distributions on $\{-1,1\}$ given by
\begin{align*}
    P_+(1)&=\frac{1+\theta}{2}, &
    P_+(-1)&=\frac{1-\theta}{2},\\
    P_-(1)&=\frac{1-\theta}{2}, &
    P_-(-1)&=\frac{1+\theta}{2}.
\end{align*}
Interpreting a sign $1$ as the ranking
$c_1\succ c_2$ and a sign $-1$ as $c_2\succ c_1$, $P_+$ has mean $\theta$ and favors candidate $c_1$, whereas $P_-$ has
mean $-\theta$ and favors candidate $c_2$.  We consider the following two hypotheses:
\[
    H_+:(X_1,\ldots,X_n)\sim P_+^{\otimes n},
    \qquad
    H_-:(X_1,\ldots,X_n)\sim P_-^{\otimes n}.
\]
We draw the votes according to one of these hypotheses, run $\alg$, and declare
$H_+$ if it returns $c_1$ and $H_-$ if it returns $c_2$.

Let $S$ be the sum of the signs.  Hoeffding's inequality gives
\begin{align*}
    \Pr_{P_+^{\otimes n}}\!\left[S<\frac{n\theta}{2}\right]
    &\le \exp\!\left(-\frac{n\theta^2}{8}\right),\\
    \Pr_{P_-^{\otimes n}}\!\left[S>-\frac{n\theta}{2}\right]
    &\le \exp\!\left(-\frac{n\theta^2}{8}\right)
    =\gamma^{a^2/(8\varepsilon^2)}.
\end{align*}
Set
\[
    \varepsilon_0=\min\left\{1,\frac{a}{\sqrt{8}}\right\}.
\]
Then $0<\varepsilon\le\varepsilon_0$ implies
$a^2/(8\varepsilon^2)\ge1$, and hence
$\gamma^{a^2/(8\varepsilon^2)}\le\gamma$.  On the complementary event the
Plurality winner agrees with the
hypothesis and the score gap is at least $n\theta/2$.  Since one vote change
alters this gap by at most $2$, the margin of victory is at least
$n\theta/4=\tau$.  Usefulness therefore makes the testing error under each
hypothesis at most $2\gamma$.

It remains to show that no locally private transcript permits such a test.
Let $Q_+$ and $Q_-$ be the transcript distributions under the two hypotheses.
The strong data-processing inequality for non-interactive local
privacy~\cite{DuchiJW18} gives
\[
    D_{\mathrm{KL}}(Q_+\|Q_-)
    \le Cn\varepsilon^2\theta^2
    =Ca^2\log(1/\gamma)
\]
for a universal constant $C$.  
Taking $a$ sufficiently small makes this less
than $\log(1/(8\gamma))$.

To spell out the testing step, let $A$ be the event that the test declares
$H_+$ (equivalently, that $\alg$ returns $c_1$), and let $A^c$ denote its
complement, the event that the test declares $H_-$ (equivalently, that $\alg$
returns $c_2$).  Taking $H_+$ as the null hypothesis, the type-I error is
declaring $H_-$ when $H_+$ is true, while the type-II error is declaring $H_+$
when $H_-$ is true.  Their probabilities are
\[
    e_{\mathrm{I}}=Q_+(A^c),
    \qquad
    e_{\mathrm{II}}=Q_-(A).
\]
Consequently,
\[
    e_{\mathrm{I}}+e_{\mathrm{II}}
    =1-\bigl(Q_+(A)-Q_-(A)\bigr)
    \ge 1-d_{\mathrm{TV}}(Q_+,Q_-).
\]
The Bretagnolle--Huber inequality gives
\[
    d_{\mathrm{TV}}(Q_+,Q_-)
    \le \sqrt{1-e^{-D_{\mathrm{KL}}(Q_+\|Q_-)}}.
\]
Since $1-\sqrt{1-x}\ge x/2$ for $x\in[0,1]$, these two displays imply
\[
    e_{\mathrm{I}}+e_{\mathrm{II}}
    \ge \frac12e^{-D_{\mathrm{KL}}(Q_+\|Q_-)}
    >4\gamma.
\]
But the preceding usefulness argument gave
$e_{\mathrm{I}},e_{\mathrm{II}}\le2\gamma$, and hence
$e_{\mathrm{I}}+e_{\mathrm{II}}\le4\gamma$, a contradiction.  Taking
$C'=a/4$ proves the theorem.
\end{proof}

\section{Additional Details for Non-Interactive LDP: Condorcet, Maximin, and Runoff}
\label{app:cmr-details}

This appendix fills in the non-interactive LDP details for Condorcet,
Maximin, and Plurality with Runoff from
Section~\ref{sec:noninteractive-ldp}.  We first derive all three upper bounds
from a common DJW release of pairwise margins.  We then develop a generic
marginal-release framework and instantiate it with a two-way profile generator
to obtain the lower bounds for Condorcet, Maximin, and Runoff.

\subsection{Upper Bounds}

\subsubsection{Pairwise-margin accuracy.}

All three algorithms use the same private pairwise statistics release, as described in \Cref{subsec:main-nonint-pairwise}.  The following
notation records the simultaneous coordinate error of the DJW release.

Let $q=\binom{m}{2}$ and, for privacy budget $\rho>0$ and failure probability
$\beta\in(0,1)$, define
\[
    \Lambda_{\mathrm{pair}}(\rho,\beta)
    :=C_{\mathrm{DJW}}\kappa_\rho
    \sqrt{2qn\ln(2q/\beta)}.
\]
By Lemma~\ref{lem:djw-randomizer} and a union bound over the $q$ pairwise
coordinates, every estimated pairwise margin is simultaneously accurate to
within $\Lambda_{\mathrm{pair}}(\rho,\beta)$ with probability at least
$1-\beta$.

\subsubsection{Condorcet.}

For Condorcet, an error either hides a true winner's weakest victory or creates
an apparent winner on a profile whose non-private outcome is $\bot$.

\begin{theorem}[Details for Theorem~\ref{thm:non-int-ldp-algo-pairwise}]
\label{thm:ldp-condorcet-upper}
    There exists a non-interactive $\varepsilon$-LDP algorithm that
    is $(\tau,\gamma)$-useful for Condorcet voting, where
    \[
        \tau=1+3\log m
        +3\Lambda_{\mathrm{pair}}(\varepsilon,\gamma).
    \]
    In particular, for $0<\varepsilon\le1$,
    \[
        \tau = O\left(\frac{m\sqrt{n\ln(m/\gamma)}}{\varepsilon}
        \right).
    \]
\end{theorem}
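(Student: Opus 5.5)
The algorithm is the non-interactive Condorcet protocol of \Cref{subsec:main-nonint-pairwise}. Each voter releases $Z_i=\mathcal{R}_{\mathrm{DJW}}(P(\pi_i);\sqrt q,\varepsilon)$. The aggregator forms $\widetilde D$ and outputs the unique $c$ with $\widetilde D(c,a)>0$ for all $a\ne c$, or $\bot$ if no such $c$ exists. Privacy is immediate from Lemma~\ref{lem:djw-randomizer}, since $\|P(\pi)\|_2=\sqrt q$.

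For utility, set $\Lambda=\Lambda_{\mathrm{pair}}(\varepsilon,\gamma)$ and condition on the event $E$ that $|\widetilde D(a,b)-D_{\bpi}(a,b)|\le\Lambda$ for every pair. By the sub-Gaussian tail bound and the union bound over the $q$ coordinates, $E$ has probability at least $1-\gamma$. On $E$ we show that any error forces $\mov(\bpi,r)\le 1+3\log m+3\Lambda$. Hence every profile with margin at least $\tau$ as stated (plus one for the integer boundary, absorbed if necessary) is answered correctly.

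We split into two cases according to the true outcome, using Lemma~\ref{lem:margin-condorcet}.

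\emph{True outcome is a Condorcet winner $w$, and the algorithm errs.} Then either the algorithm outputs $\bot$ or it outputs some $c\ne w$. In either case there is some $a$ with $\widetilde D(w,a)\le 0$: if the output is $c\ne w$, then $\widetilde D(c,w)>0$, so $\widetilde D(w,c)<0$. On $E$ this gives $D_{\bpi}(w,a)\le\Lambda$, so $\min_{a}D_{\bpi}(w,a)\le\Lambda$. The first part of Lemma~\ref{lem:margin-condorcet} then yields $\mov\le 1+\Lambda$.

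\emph{True outcome is $\bot$, but the algorithm outputs some $c$.} Then $\widetilde D(c,a)>0$ for all $a\ne c$. On $E$ this gives $D_{\bpi}(a,c)=-D_{\bpi}(c,a)<\Lambda$ for all $a\ne c$, i.e.\ $\max_{a\ne c}D_{\bpi}(a,c)\le\Lambda$. The second part of Lemma~\ref{lem:margin-condorcet}, applied with this $c$, gives $\mov\le 1+3\log_2 m+3\Lambda$.

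Both cases fall under the stated threshold. The only care needed is matching the $\log$ base and the strict versus nonstrict threshold convention; taking $\tau$ one larger if required handles this. The antisymmetry $\widetilde D(a,b)=-\widetilde D(b,a)$ holds by construction, since the aggregator reads one signed coordinate per unordered pair, and this is what the case analysis relies on.

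For the asymptotic form, take $\varepsilon\le1$, so $\kappa_\varepsilon=O(1/\varepsilon)$, and use $q\le m^2/2$. Then
\[
\Lambda=O\Paren{\tfrac{m}{\varepsilon}\sqrt{n\log(m/\gamma)}},
\]
and the additive $3\log m$ term is dominated. The main step is the $\bot$ case, but it is entirely handled by the already-proved greedy certificate in Lemma~\ref{lem:margin-condorcet}. So the proof reduces to the error-to-certificate translation above.
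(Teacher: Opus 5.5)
Your proposal is correct and matches the paper's proof: condition on the event that all noisy pairwise margins are accurate to within $\Lambda_{\mathrm{pair}}(\varepsilon,\gamma)$, then apply the first part of Lemma~\ref{lem:margin-condorcet} when a true winner loses a noisy comparison and the second part when a $\bot$ profile yields a noisy winner. One small point: you do not need to enlarge $\tau$ by one, because in the $\bot$ case you already have the strict inequality $D_{\bpi}(a,c)<\Lambda$ for all $a\ne c$, which gives $\mov(\bpi,r)<\tau$ directly, as the paper does (reading $\log$ as $\log_2$).
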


\begin{proof}
The algorithm is the pairwise DJW release and Condorcet aggregator described in
Section~\ref{subsec:main-nonint-pairwise}, so it is non-interactive and
$\varepsilon$-LDP.  Let
$\lambda=\Lambda_{\mathrm{pair}}(\varepsilon,\gamma)$ and condition on the event
that every estimated pairwise margin is within $\lambda$ of its true value.
The preceding concentration bound shows that this event has probability at
least $1-\gamma$.

If the true Condorcet outcome is a candidate $w$ and the algorithm does not
output $w$, then $w$ fails some noisy head-to-head comparison, so
$D_{\bpi}(w,d)\le \lambda$ for some $d\ne w$.  Lemma~\ref{lem:margin-condorcet}
then gives $\mov(\bpi,r)\le1+\lambda<\tau$.

If the true Condorcet outcome is $\bot$ and the algorithm outputs a candidate
$c$, then $\widetilde{D}(c,a)>0$ for every $a\ne c$, so
$D_{\bpi}(a,c)<\lambda$ for every $a\ne c$.  Hence
$\max_{a\ne c}D_{\bpi}(a,c)<\lambda$, and Lemma~\ref{lem:margin-condorcet} gives
$\mov(\bpi,r)<1+3\log m+3\lambda=\tau$.  Therefore, whenever
$\mov(\bpi,r)\ge\tau$, the algorithm returns the true Condorcet outcome with
probability at least $1-\gamma$.
\end{proof}

\subsubsection{Maximin.}

For Maximin, uniform pairwise accuracy directly controls every candidate's
noisy maximin score.

\begin{theorem}[Details for Theorem~\ref{thm:non-int-ldp-algo-pairwise}]
\label{thm:maximin_ldp}
    There exists a non-interactive $\varepsilon$-LDP algorithm that is
    $(\tau, \gamma)$-useful for the Maximin voting rule, where
    $\tau=\Lambda_{\mathrm{pair}}(\varepsilon,\gamma)+2$.
    In particular, for $0<\varepsilon\le1$,
    \[
        \tau = O\left(
        \frac{m\sqrt{n\ln(m/\gamma)}}{\varepsilon}
        \right).
    \]
\end{theorem}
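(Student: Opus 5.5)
The plan is to run the pairwise DJW release with the Maximin aggregator from Section~\ref{subsec:main-nonint-pairwise}, condition on uniform pairwise accuracy, and then apply the Maximin certificate (Lemma~\ref{lem:margin-maximin}). Privacy is immediate. Each voter applies $\mathcal{R}_{\mathrm{DJW}}(P(\pi_i);\sqrt q,\varepsilon)$ once, with $\|P(\pi_i)\|_2=\sqrt q$. By Lemma~\ref{lem:djw-randomizer} this is an $\varepsilon$-LDP randomizer. The aggregator only post-processes the reports, so the protocol is non-interactive $\varepsilon$-LDP.

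For utility, set $\lambda=\Lambda_{\mathrm{pair}}(\varepsilon,\gamma)$. By the concentration bound recorded just before the Condorcet theorem, every coordinate error $\widetilde D(a,b)-D_{\bpi}(a,b)$ is a sum of $n$ independent sub-Gaussians with parameter $C_{\mathrm{DJW}}\kappa_\varepsilon\sqrt q$. A union bound over the $q$ oriented pairs then gives $|\widetilde D(a,b)-D_{\bpi}(a,b)|\le\lambda$ for all pairs simultaneously, with probability at least $1-\gamma$. One point to check: the opposite orientation $\widetilde D(b,a)=-\widetilde D(a,b)$ is read from the same coordinate, so it has the same error bound.

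On this event, the last part of Lemma~\ref{lem:margin-maximin}, applied with $\eta=\lambda$, says that the noisy maximin maximizer can differ from the true winner $w$ only if $\mov(\bpi,r)\le\lambda+1$. Hence whenever $\mov(\bpi,r)\ge\lambda+2=\tau$, the output equals $r(\bpi)$ with probability at least $1-\gamma$.

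The only delicate point is ties. The noisy aggregator must break ties by the same fixed order, but the certificate argument only uses the fact that the output maximizes the noisy score, so this causes no trouble.

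For the asymptotic form, take $0<\varepsilon\le1$, so that $\kappa_\varepsilon=O(1/\varepsilon)$. Then $\sqrt{q}=O(m)$ and $\ln(2q/\gamma)=O(\ln(m/\gamma))$, which gives $\tau=O(m\sqrt{n\ln(m/\gamma)}/\varepsilon)$.

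There is no real obstacle, since the certificate lemma does the work. The step needing the most care is confirming that the $\eta$-accuracy of pairwise margins transfers to the maximin scores, which holds because a minimum of $\eta$-perturbed values is $\eta$-perturbed; the lemma's proof already handles this.
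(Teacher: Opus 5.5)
Your proposal is correct and follows essentially the same route as the paper's proof. Both run the pairwise DJW release with the Maximin aggregator and condition on all pairwise margins being accurate to within $\Lambda_{\mathrm{pair}}(\varepsilon,\gamma)$, which a union bound over the $q$ coordinates shows happens with probability at least $1-\gamma$. Both then invoke the noisy-error part of Lemma~\ref{lem:margin-maximin} with $\eta=\Lambda_{\mathrm{pair}}(\varepsilon,\gamma)$, which shows that an error forces $\mov(\bpi,r)\le\eta+1<\tau$.
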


\begin{proof}
The algorithm is the pairwise DJW release and Maximin aggregator described in
Section~\ref{subsec:main-nonint-pairwise}, so it is non-interactive and
$\varepsilon$-LDP.  By the concentration bound above, with probability at
least $1-\gamma$,
\[
    \max_{a\ne b}|\widetilde{D}(a,b)-D_{\bpi}(a,b)|
    \le \lambda_{\mathrm{mm}}
    :=\Lambda_{\mathrm{pair}}(\varepsilon,\gamma).
\]
Condition on this event.  Recall that $\scmaximin(a)$ and $\tscmaximin(a)$ denote the true and
noisy maximin scores of $a$, respectively.  Since each pairwise margin is
accurate to within $\lambda_{\mathrm{mm}}$, every maximin score is also accurate
to within $\lambda_{\mathrm{mm}}$.

The noisy-error consequence in Lemma~\ref{lem:margin-maximin} shows that an
erroneous output requires
$\mov(\bpi,r)\le\lambda_{\mathrm{mm}}+1<\tau$.  Hence, whenever
$\mov(\bpi,r)\ge\tau$, the algorithm cannot err on the conditioned event, and
it returns the true winner with probability at least $1-\gamma$.
\end{proof}

\subsubsection{Plurality with Runoff.}

Algorithm \ref{alg:noninteractive-plu-runoff} combines the private plurality release from
Appendix~\ref{app:plurality-details} with the common pairwise release above.
Its aggregator is the same noisy runoff rule as in the interactive protocol;
the difference is that all pairwise margins are released before the finalists
are known.

\begin{theorem}[Details for Theorem~\ref{thm:non-int-ldp-algo-pairwise}]
\label{thm:ldp-runoff-upper}
    There exists a non-interactive $\varepsilon$-LDP algorithm that is
    $(\tau,\gamma)$-useful for Plurality with Runoff, where
    \begin{align*}
        \tau
        &=3+\max\left\{
        2\lambda_{\mathrm{plu}},
        \lambda_{\mathrm{plu}}+\frac{\lambda_{\mathrm{pair}}}{2}
        \right\}, \\
        \lambda_{\mathrm{plu}}
        &= C_{\mathrm{DJW}}\kappa_{\varepsilon/2}
        \sqrt{2n\ln(4m/\gamma)}, \\
        \lambda_{\mathrm{pair}}
        &=\Lambda_{\mathrm{pair}}(\varepsilon/2,\gamma/2).
    \end{align*}
    In particular, for $0<\varepsilon\le1$,
    \[
        \tau = O\left(
        \frac{m\sqrt{n\ln(m/\gamma)}}{\varepsilon}
        \right).
    \]
\end{theorem}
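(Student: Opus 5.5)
The plan has three steps: make the algorithm concrete, establish the privacy claim, and then derive usefulness from the Runoff certificate (Lemma~\ref{lem:margin-runoff}), which already isolates how errors from the two stages translate into margin of victory.

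\emph{Algorithm.} Each voter releases two reports, each with budget $\varepsilon/2$:
\begin{itemize}
\item the DJW-privatized one-hot plurality vector, as in Algorithm~\ref{alg:publish-plurality-scores};
\item the DJW-privatized pairwise vector $P(\pi_i)$ with radius $\sqrt q$.
\end{itemize}
The aggregator takes the two candidates with the largest noisy plurality scores as finalists $a,b$. It outputs $a$ if $\widetilde D(a,b)>0$ and $b$ otherwise.

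\emph{Privacy.} Both releases are fixed before any messages are seen, so the protocol is non-interactive. Each release is $(\varepsilon/2)$-LDP by Lemma~\ref{lem:djw-randomizer}. Basic composition of the two releases then gives $\varepsilon$-LDP.

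\emph{Usefulness.} We condition on two good events, each failing with probability at most $\gamma/2$.
\begin{itemize}
\item \emph{Plurality accuracy.} Apply Corollary~\ref{cor:acc-algo-plu} with privacy budget $\varepsilon/2$ and failure probability $\gamma/2$. This gives $\sqrt{2n\log(4m/\gamma)}$ with constant $C_{\mathrm{DJW}}\kappa_{\varepsilon/2}$, so every noisy plurality score is within $\lambda_{\mathrm{plu}}$ of its true value.
\item \emph{Pairwise accuracy.} Apply the pairwise concentration bound with budget $\varepsilon/2$ and failure probability $\gamma/2$. Every noisy pairwise margin is then within $\lambda_{\mathrm{pair}}=\Lambda_{\mathrm{pair}}(\varepsilon/2,\gamma/2)$ of the truth.
\end{itemize}
By a union bound, both events hold together with probability at least $1-\gamma$.

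On this joint event, the noisy computation is exactly the one described in Lemma~\ref{lem:margin-runoff}, with $\eta=\lambda_{\mathrm{plu}}$ and $\zeta=\lambda_{\mathrm{pair}}$. The only thing to check is that the lemma needs accuracy for the single finalist pair $(a,b)$. Uniform accuracy over all pairs covers this even though $a,b$ are data-dependent, so no extra union bound is needed. The lemma then says that an incorrect output forces
\[
\mov(\bpi,r)\le 2+\max\{2\lambda_{\mathrm{plu}},\lambda_{\mathrm{plu}}+\lambda_{\mathrm{pair}}/2\}<\tau .
\]
Hence any profile with $\mov(\bpi,r)\ge\tau$ is answered correctly with probability at least $1-\gamma$.

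\emph{Asymptotics.} For $\varepsilon\le1$ we have $\kappa_{\varepsilon/2}=O(1/\varepsilon)$. Therefore $\lambda_{\mathrm{plu}}=O(\sqrt{n\log(m/\gamma)}/\varepsilon)$. Using $q=O(m^2)$ and $\ln(4q/\gamma)=O(\ln(m/\gamma))$, we also get $\lambda_{\mathrm{pair}}=O(m\sqrt{n\ln(m/\gamma)}/\varepsilon)$. The pairwise term dominates, which gives the stated bound.

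Essentially all of the combinatorial work is in Lemma~\ref{lem:margin-runoff}, which we take as given, so there is no real obstacle here. The one point needing care is the data-dependence of the finalists, and uniform pairwise accuracy resolves it.
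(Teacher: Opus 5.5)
Your proposal is correct and matches the paper's proof essentially step for step. The paper also gets privacy by composing the two $\varepsilon/2$-LDP DJW releases, takes a union bound over the plurality-accuracy and pairwise-accuracy events (each with failure probability $\gamma/2$), and then applies Lemma~\ref{lem:margin-runoff} with $\eta=\lambda_{\mathrm{plu}}$ and $\zeta=\lambda_{\mathrm{pair}}$ to get $\mov(\bpi,r)\le 2+\max\{2\lambda_{\mathrm{plu}},\lambda_{\mathrm{plu}}+\lambda_{\mathrm{pair}}/2\}<\tau$ on any error. Your remarks that uniform pairwise accuracy covers the data-dependent finalists, and your explicit asymptotic calculation, are correct details the paper leaves implicit.
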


\begin{algorithm}[t]
    \caption{Non-Interactive Algorithm for Plurality with Runoff}
    \label{alg:noninteractive-plu-runoff}
    \begin{algorithmic}[1]
        \REQUIRE Preference profile $\bpi=(\pi_1,\ldots,\pi_n)$, privacy parameter $\varepsilon$
        \ENSURE Winner $c \in \cC$
        \STATE Each voter releases a DJW-privatized plurality vector using Algorithm~\ref{alg:publish-plurality-scores} with privacy budget $\varepsilon/2$
        \STATE Each voter releases a DJW-randomized pairwise vector using the construction from Section~\ref{subsec:main-nonint-pairwise} with privacy budget $\varepsilon/2$
        \STATE Aggregate the noisy plurality scores $\tilde{v}$ and the noisy pairwise margins $\widetilde{D}$
        \STATE Let $a,b$ be the two candidates with the largest coordinates of $\tilde{v}$
        \IF{$\widetilde{D}(a,b)>0$}
            \RETURN $a$
        \ELSE
            \RETURN $b$
        \ENDIF
    \end{algorithmic}
\end{algorithm}

\begin{proof}
Algorithm~\ref{alg:noninteractive-plu-runoff} is non-interactive because each
voter applies the same combined randomizer once. The noisy plurality release is
$\varepsilon/2$-LDP, and so is the DJW pairwise release. By composition,
the algorithm is $\varepsilon$-LDP.

By Corollary~\ref{cor:acc-algo-plu}, with privacy budget $\varepsilon/2$ and
failure probability $\gamma/2$, all noisy plurality scores are within
$\lambda_{\mathrm{plu}}$ of their true values with probability at least
$1-\gamma/2$. By the DJW concentration bound, with failure probability
$\gamma/2$, every noisy pairwise margin is within
$\lambda_{\mathrm{pair}}$ of its true value. Thus, with probability at least
$1-\gamma$, both events hold. We condition on this event.

Lemma~\ref{lem:margin-runoff}, with
$\eta=\lambda_{\mathrm{plu}}$ and
$\zeta=\lambda_{\mathrm{pair}}$, shows that an erroneous output requires
\[
    \mov(\bpi,r)
    \le2+\max\left\{
    2\lambda_{\mathrm{plu}},
    \lambda_{\mathrm{plu}}+\frac{\lambda_{\mathrm{pair}}}{2}
    \right\}
    <\tau.
\]
Therefore the algorithm outputs the true winner with probability at least
$1-\gamma$ whenever $\mov(\bpi,r)\ge\tau$.
\end{proof}

\subsection{Generic Marginal-Release Framework}
\label{sec:noninteractive-ldp-lower-reductions}

Beginning with this subsection, we develop the non-interactive lower bounds
for Condorcet, Maximin, and Plurality with Runoff.

For a Boolean dataset $X=(x_1,\ldots,x_N)$ with $x_i\in\{0,1\}^d$, recall from Definition~\ref{def:k-way-marginal} the $k$-way marginal vector $Q(X)=(Q_A(X))_{A\in\binom{[d]}{k}}$. We denote by $\bar Q(X)=Q(X)/N$ the vector of normalized marginals.

We start by stating the following lower bounds for releasing normalized $k$-way marginals from~\cite{edmonds2020thepower}.

\begin{theorem}[\citealt{edmonds2020thepower}]
%[Local marginal lower bound~\cite{edmonds2020thepower}]
\label{thm:edmonds-local-marginal-lower}
For
every non-interactive $\varepsilon$-LDP mechanism $\cM$, with
$0<\varepsilon\le1$, releasing all normalized marginals, %from
%Definition~\ref{def:k-way-marginal}, 
the following minimax lower bounds hold,
up to logarithmic factors:
\begin{enumerate}\raggedright
    \item For two-way marginals\footnote{This lower bound follows from two-way parity queries, which can be simulated using two-way marginals by querying both the original and bit-flipped inputs. Combining the two marginals yields the count of inputs with $x_i\oplus x_j=0$, recovering parity with at most a factor-of-two increase in the privacy and error parameters.}, 
    \[
        \begin{aligned}
        \sup_X\mathbb{E}\|\cM(X)-\bar Q(X)\|_\infty
        \ge\widetilde{\Omega}\!\left(
        \min\left\{1,\frac{d}{\varepsilon\sqrt{N}}\right\}\right).
        \end{aligned}
    \]
    \item More generally, for $k$-way marginals with $1\le k\le d/2$,
    \[
        \begin{aligned}
        &\sup_X\mathbb{E}\|\cM(X)-\bar Q(X)\|_\infty\\
        &\qquad\ge\widetilde{\Omega}\!\left(\min\left\{1,
        \frac{(d/k)^{\Omega(\sqrt{k})}}{\varepsilon\sqrt{N}}
        \right\}\right).
        \end{aligned}
    \]
\end{enumerate}
\end{theorem}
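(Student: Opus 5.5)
Since the statement is quoted from \citet{edmonds2020thepower}, the plan is to reconstruct their argument rather than build a new one, in three steps. First, reduce the release of marginals to the release of \emph{parity} queries $x\mapsto\bigoplus_{j\in A}x_j$ (equivalently, Fourier characters $\chi_A$) over $A\in\binom{[d]}{k}$, as in the footnote. Second, invoke the general non-interactive LDP lower bound for linear query workloads, which controls error by a factorization norm of the workload matrix. Third, lower bound that norm for the specific workloads.

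\textbf{Reduction to parities.} A mechanism that answers all normalized $k$-way marginals on $X$, and on the bit-flipped copies of $X$, to $\ell_\infty$ error $\alpha$ determines every normalized parity over $A$ to error $2^{O(k)}\alpha$. For $k=2$ this is immediate: the count of $x_i\oplus x_j=0$ is the sum of the $(i,j)$ marginal on $X$ and the $(i,j)$ marginal on the coordinate-flipped copy. Running the mechanism on the original and flipped copies costs only a constant factor in $\varepsilon$ by composition. For general $k$, the factor $2^{O(k)}$ is absorbed into the $(d/k)^{\Omega(\sqrt k)}$ bound after adjusting the constant in the exponent.

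\textbf{General workload lower bound.} Let $W\in\{\pm1\}^{\binom{d}{k}\times 2^d}$ be the query matrix whose rows are indexed by queries and columns by universe elements. The framework of Edmonds--Nikolov--Ullman shows that any non-interactive $\varepsilon$-LDP mechanism has worst-case expected $\ell_\infty$ error at least, up to logarithmic factors,
\[
\min\Bigl\{1,\ \tfrac{\gamma_2^{\alpha}(W)}{\varepsilon\sqrt N}\Bigr\},
\]
where $\gamma_2^{\alpha}$ is the approximate factorization norm: the minimum $\gamma_2$ norm of any matrix entrywise $\alpha$-close to $W$. The proof goes through the dual of $\gamma_2$. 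A dual witness yields two hard input distributions on the universe that are indistinguishable from the local reports, since the KL divergence per user is $O(\varepsilon^2)$ times a quadratic form controlled by the witness, yet they differ on some query by more than $\alpha$. I would quote this part as a black box.

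\textbf{Bounding the factorization norm (main obstacle).} It remains to show $\gamma_2^{\alpha}(W)\ge\widetilde\Omega(d)$ for $k=2$ and $\ge (d/k)^{\Omega(\sqrt k)}$ in general, for a small constant $\alpha$. This is the hard step. The exact $\gamma_2$ is easy: the rows are orthogonal characters, and $\|W\|_{\mathrm{tr}}/\sqrt{\#\mathrm{rows}\cdot\#\mathrm{cols}}=\sqrt{\binom dk}$. Approximate $\gamma_2$ is harder because $\alpha$-perturbations can shrink it.
\begin{itemize}
\item For $k=2$, I would take the witness $\frac{1}{\binom d2 2^d}\,W$ itself. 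Its entrywise $\ell_1$ mass is $1$, and its inner product with any $\alpha$-approximation of $W$ is at least $1-\alpha$. Its dual norm $\gamma_2^*$ can be bounded by the spectral norm of $W$, using $\|W\|_{\mathrm{op}}=2^{d/2}$ by orthogonality together with a standard row/column rescaling. This gives order $\sqrt{\binom d2}\approx d$.
\item For general $k$, I expect to restrict to a structured sub-workload, such as conjunctions or parities on a block design, and to use the known $(d/k)^{\Omega(\sqrt k)}$ lower bounds on approximate rank and approximate $\gamma_2$ norm of conjunction matrices via approximate-degree and dual-polynomial arguments (Sherstov's pattern-matrix method, Bun--Thaler). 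This is where the $\sqrt{k}$ exponent originates.
\end{itemize}
Plugging these bounds into the general workload lower bound completes both items.
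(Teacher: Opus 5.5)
Your architecture matches what the paper relies on. The paper does not prove this statement. It cites \citet{edmonds2020thepower}, whose content is that non-interactive $\varepsilon$-LDP error for a workload $W$ is governed, up to logarithmic factors, by the entrywise-approximate factorization norm $\gamma_2(W,\alpha)$. For $k=2$ it only adds the footnote's trick of combining marginals on $X$ and on its bit-flipped copy. Your $k=2$ branch is correct, and the rescaling you mention is unnecessary. For an $r\times c$ matrix $Y$ one has $\gamma_2^*(Y)\le\sqrt{rc}\,\|Y\|_{\mathrm{op}}$. With $Y=W/(rc)$, $r=\binom d2$ and $c=2^d$, this gives $\gamma_2^*(Y)\le 1/\sqrt r$. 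Hence $\gamma_2(\widetilde W)\ge(1-\alpha)\sqrt{\binom d2}$ for every entrywise $\alpha$-approximation $\widetilde W$ of $W$. Passing from $\pm1$ characters to $\{0,1\}$ counting queries costs only constants.

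The step that fails is your claim that, for general $k$, the parity reduction loses only $2^{O(k)}$, which is ``absorbed into $(d/k)^{\Omega(\sqrt k)}$.'' Take $k=\lfloor d/2\rfloor$, which is exactly the regime used for Corollary~\ref{cor:main-ldp-stv-kway-edmonds}. There the parity bound $\sqrt{\binom dk}$ is at most $2^{d/2}\approx 2^{k}$. The reduction divides it by $2^{k-1}$, the number of literal conjunctions summed to form one parity. So at most a constant survives, nothing of order $2^{\Omega(\sqrt k)}$. The same division also turns the saturated term $1$ into $2^{-(k-1)}$. Worse, if the literal patterns come from bit-flipped copies as in the footnote, you need at least $2^{k-1}$ distinct flip patterns, so composition multiplies $\varepsilon$ by $2^{k-1}$ as well.

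The general-$k$ item must therefore be proved for the conjunction workload directly, with no parity step (block design or otherwise). That is what your third step actually describes, and it is where the $\sqrt k$ genuinely comes from. Sherstov's pattern-matrix theorem (pattern matrix of $f$ on $d_0$ bits with $t$ blocks) gives approximate trace norm, hence approximate $\gamma_2$, of order $(d_0/t)^{\deg_{1/3}(f)/2}$ for approximation error below $1/3$. For $f=\mathrm{AND}_t$ one has $\deg_{1/3}(\mathrm{AND}_t)=\Theta(\sqrt t)$. The pattern matrix's columns are conjunctions of literals, so to embed them into monotone $k$-way marginals:
\begin{itemize}
\item use the free per-record encoding $x\mapsto(x,\bar x)$ rather than composing over flipped copies;
\item pad each record with constant-one coordinates to fix the arity, taking $t$ a suitable constant fraction of $k$ when $d/k$ is near $2$.
\end{itemize}
This costs only constant factors in $d/k$, which the $\Omega(\cdot)$ in the exponent absorbs.

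Finally, your gloss on the black box as ``two hard input distributions'' cannot be literally right. For two fixed i.i.d.\ hypotheses, the per-user KL divergence under $\varepsilon$-LDP is $O(\varepsilon^2)$ times their squared total-variation distance, regardless of $W$. That yields only error $1/(\varepsilon\sqrt N)$. The $\gamma_2$ dependence needs a many-hypothesis or mixture argument built from the dual certificate.
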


Our lower bounds reduce private marginal release to private winner
publication.  We first isolate the properties required of a rule-specific
profile generator, then analyze a nonadaptive threshold scan that converts
useful winner predictions into accurate marginal estimates.

\subsubsection{Compatible profile generators.}

%The input is a database
%$X=(x_1,\ldots,x_N)$ with $x_i\in\{0,1\}^d$, and the target is the vector of all
%$k$-way conjunction marginals. 
The only rule-specific ingredient in our reduction is a \emph{profile
generator} that encodes each query and threshold as an election.

Fix a voting rule $r$ and a non-interactive LDP algorithm $\alg_r$.  On a
profile $\bpi=(\pi_1,\ldots,\pi_n)$, the algorithm applies its  randomizers\footnote{This notation includes the special
case in which every voter uses the same randomizer.}
$\cR^r_{1,\varepsilon},\ldots,\cR^r_{n,\varepsilon}$ on $\pi_1, \dots, \pi_n$ respectively, and then runs its
aggregator on the privatized messages. 

The profile generator
$\cF$ has four components:
\begin{itemize}\raggedright
    \item $\cF_\mathrm{cand}$ constructs the candidate set $\cC$;
    \item $\cF_\mathrm{data}$ constructs the \emph{data-dependent profile} $\bpi_X$ where each ranking in $\bpi_X$ depends on exactly a single $x_i$, and the per-ranking privacy budget $\varepsilon'$;
    \item $\cF_\mathrm{query}$ constructs a \emph{query profile} $\bpi_A$ and designated winner $w_A$;
    \item $\cF_\mathrm{thres}$ constructs a \emph{threshold profile} $\bpi_A^{(\tau)}$.
\end{itemize}
For query $A$ and threshold $\tau$, the election profile is
$\bpi_X\circ\bpi_A\circ\bpi_A^{(\tau)}$.

The generator is useful when the winner tests the marginal value
$\mu_A=\sum_{i=1}^N\prod_{j\in A}x_{i,j}$.  The following definition records the
four properties needed by the reduction.
\begin{definition}[Compatibility]
    \label{def:compatibility}
    A non-interactive LDP voting algorithm for $r$ and a profile generator
    $\cF$ are \emph{compatible at target budget $\varepsilon$} if, for every
    threshold $\tau\in[N]$, the output
    $(\cC,\bpi^{(\tau)},w_A,\varepsilon')\gets\cF(d,X,A,\tau,
    \varepsilon)$ satisfies:
    \begin{enumerate}\raggedright
        \item[(i)] \label{cond:compat1} $w_A=r(\bpi^{(\tau)})$ if and only if $\sum_{i=1}^N\prod_{j\in A}x_{i,j}\ge\tau$;
        \item[(ii)] \label{cond:compat2} $\mov(\bpi^{(\tau)},r)\ge\left|\sum_{i=1}^N\prod_{j\in A}x_{i,j}-\tau\right|$;
        \item[(iii)] every $\bpi^{(\tau)}$ has the same number of rankings,
        independent of $A$ and $\tau$;
        \item[(iv)] \label{cond:compat4} $s\varepsilon'=\varepsilon$, where $s$
        is the maximum number of data-dependent rankings that depend on any
        one record $x_i$.
    \end{enumerate}
\end{definition}

\subsubsection{Non-adaptive threshold scan.}

Compatibility lets us recover $\mu_A$ by testing all thresholds.  It is
important that these tests are fixed in advance: because they reuse the same
privatized copy of $\bpi_X$, their errors need not be independent, so an
adaptive argument cannot treat them as independent trials.

Algorithm~\ref{alg:k-way_reduction} first privatizes the data-dependent
profile $\bpi_X$ once and reuses those messages in every test.  For each
marginal query $A$ and threshold $\tau$, it appends the public query and
threshold profiles produced by $\cF$, generates their messages, and runs the
voting aggregator.  Compatibility makes the event that the aggregator returns
$w_A$ a test of whether $\mu_A\ge\tau$.  The indicator $T[\tau]$ records the
result of this test, and the largest accepted threshold is used as the estimate
$\hat\mu_A$.  Only the reused messages for $\bpi_X$ depend on the private
dataset; the remaining messages are generated from public rankings and incur
no additional privacy loss. 

\begin{algorithm}[h]
    \caption{Non-Adaptive Reduction to $\varepsilon$-LDP $k$-Way Marginals}
    \label{alg:k-way_reduction}
    {\raggedright
    \textbf{Input:} Data $x_1,\ldots,x_N$ where $x_i\in\{0,1\}^d$ for all $i\in [N]$ \\
    \textbf{LDP Algorithm:} $\alg_r$ with randomizers $(\cR_i^r)_{i\in[n]}$ and aggregator $\mathcal{A}^r$ \\
    \textbf{Profile Generator:} $\cF$ composed of components $\cF_\mathrm{cand},\cF_\mathrm{data},\cF_\mathrm{query},\cF_\mathrm{thres}$.
    \par}

    \begin{algorithmic}[1]
        \STATE $\cC\gets \cF_\mathrm{cand}(d)$.
        \STATE $(\bpi_X,\varepsilon')\gets \cF_\mathrm{data}(X,\varepsilon)$.
        \STATE Let $n_X=|\bpi_X|$ and privatize its rankings once using the corresponding randomizers $\cR_{1,\varepsilon'}^r,\ldots,\cR_{n_X,\varepsilon'}^r$.
        \FORALL{$A\subseteq [d]$ such that $|A|=k$}
            \STATE $(\bpi_{A},w_A)\gets \cF_\mathrm{query}(A)$.
            \STATE Let $T$ be an array of size $N$.
            \FORALL{$\tau\in [N]$}
                \STATE $\bpi^{(\tau)}_A\gets \cF_\mathrm{thres}(A,w_A,\tau)$.
                \STATE Randomize the public rankings in $\bpi_A\circ\bpi_A^{(\tau)}$ using the corresponding remaining randomizers $\cR_{n_X+1,\varepsilon'}^r,\ldots,\cR_{n,\varepsilon'}^r$.
                \STATE Run $\mathcal{A}^r$ on the reused messages for $\bpi_X$ and the public messages for this $(A,\tau)$, obtaining $w$.
                \STATE $T[\tau]\gets \mathbf{1}(w=w_A)$.
            \ENDFOR
            \STATE $\hat\mu_A\gets\max(\{0\}\cup\{\tau\in[N]:T[\tau]=1\})$.
            \STATE Output $\hat{\mu}_A$ for marginal query $A$.
        \ENDFOR
    \end{algorithmic}
\end{algorithm}

The next lemma bounds both the high-probability and expected errors.  The
expected-error statement is useful because the marginal lower bounds invoked
below are formulated in expectation.

\begin{lemma}
\label{lem:k-way-reduction}
Let $X$ be a dataset and let $\alg_r$ be an $\varepsilon'$-LDP algorithm that
is $(\tau,\gamma)$-useful for voting rule $r$.  Let $Q(X)$ be the vector of all
$k$-way marginal answers.  If there is a profile generator $\cF$ compatible
with $\alg_r$ at target budget $\varepsilon$, and each record contributes at
most $s$ data-dependent rankings, then
$\varepsilon'=\varepsilon/s$ by condition~\ref{cond:compat4}, and there is an
$\varepsilon$-LDP algorithm $\cM$ such that
\[
    \Pr\!\left[\|\cM(X)-Q(X)\|_\infty>\tau+1\right]
    \le N\binom{d}{k}\gamma.
\]
Moreover, for the normalized answers $Q(X)/N$,
\[
    \mathbb{E}\!\left[
    \left\|\frac{\cM(X)}{N}-\frac{Q(X)}{N}\right\|_\infty
    \right]
    \le \frac{\tau+1}{N}+N\binom{d}{k}\gamma.
\]
\end{lemma}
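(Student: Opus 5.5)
The mechanism $\cM$ will be Algorithm~\ref{alg:k-way_reduction} itself, run with $\alg_r$ and the compatible generator $\cF$. I would prove privacy first, then the high-probability error bound, then the expectation bound.

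\textbf{Privacy.} Only the messages for $\bpi_X$ depend on the data, and they are produced exactly once. Each record $x_i$ influences at most $s$ rankings of $\bpi_X$, and each of these is privatized by an $\varepsilon'$-LDP randomizer. By basic composition, the joint release depending on $x_i$ is therefore $s\varepsilon'=\varepsilon$-LDP, using condition~(iv). Everything afterwards — the public query and threshold rankings, their randomized messages, the aggregator runs, and the final $\max$ — is post-processing of these reused messages together with public randomness. Hence $\cM$ is non-interactive and $\varepsilon$-LDP.

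\textbf{High-probability error.} Fix a query $A$ and a threshold $\tau'\in[N]$, and write $\mu_A=Q_A(X)$. Condition~(iii) ensures every constructed election has the same number $n$ of voters, so a single $(\tau,\gamma)$-useful algorithm $\alg_r$ applies to all of them. Although the data-dependent messages are shared across tests, each individual test is exactly one run of $\alg_r$ on the profile $\bpi^{(\tau')}$, with the correct marginal distribution of messages. Thus, whenever $\mov(\bpi^{(\tau')},r)\ge\tau$, the test errs with probability at most $\gamma$. No independence between tests is needed, because a union bound over all $N\binom{d}{k}$ pairs $(A,\tau')$ shows that, with probability at least $1-N\binom{d}{k}\gamma$, every test whose profile has MoV at least $\tau$ is correct.

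By condition~(ii), this covers every $\tau'$ with $|\mu_A-\tau'|\ge\tau$. On this event, condition~(i) gives $T[\tau']=1$ for all $\tau'\le\mu_A-\tau$ and $T[\tau']=0$ for all $\tau'\ge\mu_A+\tau$. Consequently $\hat\mu_A$ lies in $[\mu_A-\tau,\mu_A+\tau)$, with the $\{0\}$ default covering the case $\mu_A<\tau+1$. There is some off-by-one care at the integer boundaries, and this is what the slack "$\tau+1$" in the statement absorbs. The result is $|\hat\mu_A-\mu_A|\le\tau+1$ simultaneously for all $A$.

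\textbf{Expectation.} Both $\hat\mu_A$ and $\mu_A$ lie in $[0,N]$, so the normalized error is always at most $1$. Splitting the expectation on the good event gives at most $(\tau+1)/N+1\cdot N\binom{d}{k}\gamma$.

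I expect the main subtlety to be justifying that usefulness applies to each test even though the messages are reused. The fix is that usefulness is a per-run marginal guarantee, combined with the non-adaptivity of the scan, which makes the union bound valid.
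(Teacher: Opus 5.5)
Your proposal is correct and follows essentially the same route as the paper: privacy via composition over the $s$ once-generated data-dependent messages plus post-processing, a union bound (needing no independence) over the $N\binom{d}{k}$ fixed query--threshold tests whose profiles have margin of victory at least $\tau$ by condition (ii), taking $\hat\mu_A$ as the largest accepted threshold, and splitting the expectation on the good event. The only loose spot is your interval $[\mu_A-\tau,\mu_A+\tau)$, which is not exact for non-integral $\tau$. The paper instead notes that $\lfloor\mu_A-\tau\rfloor$ is accepted, giving $\hat\mu_A>\mu_A-\tau-1$; this is precisely the slack you say the $+1$ absorbs.
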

\begin{proof}
The data-dependent messages are generated only once.  Every record contributes
at most $s$ messages, each with privacy budget $\varepsilon'$, so condition
\ref{cond:compat4} (iv) and composition theorem make their joint release
$\varepsilon$-LDP.  All remaining rankings are public, and every subsequent
operation is post-processing.  Thus $\cM$ is $\varepsilon$-LDP.

Fix $A$ and write $\mu_A=Q_A(X)$.  For every fixed threshold $\tau$ satisfying
$|\mu_A-\tau|\ge\alpha$, compatibility and usefulness imply that $T[\tau]$ is
correct with probability at least $1-\gamma$.  There are at most
$N\binom{d}{k}$ fixed query--threshold pairs.  A union bound, which does not
require independence, shows that all such tests are simultaneously correct
with probability at least $1-N\binom{d}{k}\gamma$.

Condition on this event.  No threshold greater than $\mu_A+\alpha$ is accepted,
so $\hat\mu_A\le\mu_A+\tau$.  Conversely, if
$\mu_A>\tau+1$, the integer
$\tau_A=\lfloor\mu_A-\tau\rfloor$ is accepted and satisfies
$\tau_A>\mu_A-\tau-1$; if $\mu_A\le\tau+1$, the default value $0$ already
satisfies the required lower bound.  Thus
$|\hat\mu_A-\mu_A|\le\tau+1$ for every $A$ (the harmless $+1$ also covers
non-integral $\tau$ and boundary ties).

The normalized error is at most $1$ on the complementary event.  Splitting the
expectation according to the good event proves the second inequality.
\end{proof}

\subsection{Two-Way Reduction for Condorcet, Maximin, and Plurality with Runoff}

We now instantiate the generic reduction for two-way marginals.  A single family
of profile generators suffices for all three rules.  The padding forces the two
queried candidates to dominate all outsiders and keeps the election size fixed
for every threshold.

\subsubsection{Profile generator.}

For a dataset $X=(x_1,
\ldots,x_N)$ with $x_i\in\{0,1\}^d$, a query
$A=\{j,k\}$, and privacy budget $\varepsilon$, define the generator
$\cF_\mathrm{2way}^{(\nu)}$ as follows.  Assume, without loss of generality, that
$c_j\succ_\sigma c_k$ in the fixed ranking $\sigma$.
\begin{itemize}\raggedright
    \item $\cF_\mathrm{cand}(d)$ returns the candidate set
    $\cC=\{c_1,\ldots,c_d\}$.  Let $\sigma$ be a fixed ranking of $\cC$, and let
    $\sigma^R$ be its reverse.
    \item $\cF_\mathrm{data}(X,\varepsilon)$ returns
    $(\bpi_X,\varepsilon')$, where
    $\bpi_X=(\pi_1,\ldots,\pi_{2N})$ and $\varepsilon'=\varepsilon/2$.
    For each record $x_i$, let
    $S_i=\{c_j:x_{i,j}=1\}$ and $\bar S_i=\cC\setminus S_i$, and define
    \begin{align*}
        \pi_{2i-1}:&\quad (\sigma\mid_{S_i})\succ(\sigma\mid_{\bar S_i}), \\
        \pi_{2i}:&\quad (\sigma^R\mid_{\bar S_i})\succ(\sigma\mid_{S_i}).
    \end{align*}
    Exactly two rankings are associated with each data point, so
    $\varepsilon'=\varepsilon/2$ satisfies condition~\ref{cond:compat4} (iv).
    \item $\cF_\mathrm{query}(A)$ returns $(\bpi_A,w_A)$, where $w_A=c_j$ and
    $\bpi_A=(\pi'_1,\ldots,
    \pi'_{4N+2\nu})$.  For each $i\in[2N+\nu]$, set
    \begin{align*}
        \pi'_{2i-1}:&\quad c_j\succ c_k\succ
        (\sigma\mid_{\cC\setminus\{c_j,c_k\}}), \\
        \pi'_{2i}:&\quad c_k\succ c_j\succ
        (\sigma\mid_{\cC\setminus\{c_j,c_k\}}).
    \end{align*}
    \item $\cF_\mathrm{thres}(A,w_A,\tau)$ returns the fixed-size profile
    \[
        \bpi_A^{(\tau)}
        =(\pi_A^+)^{N-\tau}\circ(\pi_A^-)^{N+\tau-1},
    \]
    where
    \begin{align*}
        \pi_A^+:&\quad
        c_j\succ c_k\succ(\sigma\mid_{\cC\setminus\{c_j,c_k\}}),\\
        \pi_A^-:&\quad
        c_k\succ c_j\succ(\sigma\mid_{\cC\setminus\{c_j,c_k\}}).
    \end{align*}
    This profile has exactly $2N-1$ rankings for every $\tau\in[N]$.
\end{itemize}

\subsubsection{Pairwise-margin calculation.}

The two rankings associated with each data point are designed so that their
net comparison between the queried candidates records exactly the corresponding
two-way conjunction.

\begin{lemma}
For any $\nu\in\bbZ_{\ge0}$, let
$(\bpi_X,\cdot)=\cF_\mathrm{data}(X,\cdot)$ be the data profile in
$\cF_\mathrm{2way}^{(\nu)}$.  For any $j,k\in[d]$ with
$c_j\succ_\sigma c_k$,
\begin{align*}
    N_{\bpi_X}(c_j\succ c_k)
    &=2\mu_{11}+\mu_{10}+\mu_{01}+\mu_{00}, \\
    N_{\bpi_X}(c_k\succ c_j)
    &=\mu_{10}+\mu_{01}+\mu_{00},
\end{align*}
where $\mu_{ab}=|\{i\in[N]:x_{i,j}=a \wedge x_{i,k}=b\}|$.
\end{lemma}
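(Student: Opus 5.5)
The proof is a direct case analysis over the four possible bit patterns $(x_{i,j},x_{i,k})\in\{0,1\}^2$ of a single record. Each record $x_i$ contributes exactly the two rankings $\pi_{2i-1}$ and $\pi_{2i}$. Both counts $N_{\bpi_X}(c_j\succ c_k)$ and $N_{\bpi_X}(c_k\succ c_j)$ are sums over records of per-record contributions. It therefore suffices to show that a record contributes $(2,0)$ to the pair of counts when both bits are $1$, and $(1,1)$ in each of the other three cases. Summing over records then gives $2\mu_{11}+\mu_{10}+\mu_{01}+\mu_{00}$ and $\mu_{10}+\mu_{01}+\mu_{00}$, since each record contributes two rankings in total.

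Recall that $\pi_{2i-1}$ lists $S_i$ in $\sigma$-order followed by $\bar S_i$ in $\sigma$-order, while $\pi_{2i}$ lists $\bar S_i$ in reverse-$\sigma$ order followed by $S_i$ in $\sigma$-order. We use the standing assumption $c_j\succ_\sigma c_k$.

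\textbf{Case $(1,1)$.} Both candidates lie in $S_i$. In both rankings $S_i$ appears in $\sigma$-order, so $c_j\succ c_k$ twice, giving $(2,0)$.

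\textbf{Case $(0,0)$.} Both candidates lie in $\bar S_i$. In $\pi_{2i-1}$ they appear in $\sigma$-order, so $c_j\succ c_k$. In $\pi_{2i}$ they appear in $\sigma^R$-order, so $c_k\succ c_j$. This gives $(1,1)$. This is the only case that uses the reversal $\sigma^R$, which is exactly why the construction includes it.

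\textbf{Case $(1,0)$.} Here $c_j\in S_i$ and $c_k\in\bar S_i$. In $\pi_{2i-1}$, $S_i$ comes first, so $c_j\succ c_k$. In $\pi_{2i}$, $\bar S_i$ comes first, so $c_k\succ c_j$. This gives $(1,1)$.

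\textbf{Case $(0,1)$.} This is symmetric to the previous case: $\pi_{2i-1}$ gives $c_k\succ c_j$ and $\pi_{2i}$ gives $c_j\succ c_k$, again $(1,1)$.

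Nothing in the argument depends on $\nu$, because only the data profile $\bpi_X$ is involved. There is no real obstacle; the only care needed is the $(0,0)$ case, where one must apply the reversed order within $\bar S_i$ correctly.
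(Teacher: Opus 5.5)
Your proof is correct and follows the paper's own argument. The paper tabulates the same four cases of $(x_{i,j},x_{i,k})$, records how $\pi_{2i-1}$ and $\pi_{2i}$ compare $c_j$ and $c_k$, and sums over records; your per-record contributions, $(2,0)$ for the pattern $(1,1)$ and $(1,1)$ for each of the other three patterns, match that table entry for entry.
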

\begin{proof}
The two rankings associated with record $x_i$ compare $c_j$ and $c_k$ as follows:
\begin{center}
\begin{tabular}{c|c|c|c}
    $x_{i,j}$ & $x_{i,k}$ & $\pi_{2i-1}$ & $\pi_{2i}$  \\
    \hline
    $1$ & $1$ & $\succ$ & $\succ$ \\
    $1$ & $0$ & $\succ$ & $\prec$ \\
    $0$ & $1$ & $\prec$ & $\succ$ \\
    $0$ & $0$ & $\succ$ & $\prec$
\end{tabular}
\end{center}
Here $\succ$ means $c_j\succ c_k$, and $\prec$ means $c_k\succ c_j$.  Counting the
rows of the table gives the two displayed identities.
\end{proof}

The query profile contributes equally to the two pairwise directions, while the
threshold profile gives $c_k$ a net pairwise advantage of $2\tau-1$ over $c_j$.
Therefore we have the following consequence.
\begin{corollary}
\label{cor:final_round}
For any $\nu\in\bbZ_{\ge0}$, let $\bpi_\mathrm{2way}$ be a profile generated by
$\cF_\mathrm{2way}^{(\nu)}$.  For any query $A=\{j,k\}$ with
$c_j\succ_\sigma c_k$,
\[
    N_{\bpi_\mathrm{2way}}(c_j\succ c_k)
    -N_{\bpi_\mathrm{2way}}(c_k\succ c_j)
    =2\mu_{11}-2\tau+1,
\]
where $\mu_{11}=|\{i\in[N]:x_{i,j}=x_{i,k}=1\}|$.
\end{corollary}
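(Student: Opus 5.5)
The plan is to compute the pairwise margin of $c_j$ over $c_k$ in the full election $\bpi_\mathrm{2way}=\bpi_X\circ\bpi_A\circ\bpi_A^{(\tau)}$ by additivity. The quantity $N(c_j\succ c_k)-N(c_k\succ c_j)$ is a sum over rankings of $\pm1$ contributions. So it splits into three separate contributions, one each from $\bpi_X$, from $\bpi_A$, and from $\bpi_A^{(\tau)}$, and I will handle them one at a time.

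\emph{Data profile.} By the preceding lemma, the contribution of $\bpi_X$ is
\[
(2\mu_{11}+\mu_{10}+\mu_{01}+\mu_{00})-(\mu_{10}+\mu_{01}+\mu_{00})=2\mu_{11}.
\]

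\emph{Query profile.} The profile $\bpi_A$ consists of $2N+\nu$ pairs $(\pi'_{2i-1},\pi'_{2i})$. In each pair, one ranking places $c_j\succ c_k$ and the other places $c_k\succ c_j$. Hence its net contribution is $0$, whatever the value of $\nu$.

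\emph{Threshold profile.} The profile $\bpi_A^{(\tau)}$ has $N-\tau$ copies of $\pi_A^+$, each ranking $c_j\succ c_k$, and $N+\tau-1$ copies of $\pi_A^-$, each ranking $c_k\succ c_j$. Its net contribution is
\[
(N-\tau)-(N+\tau-1)=-2\tau+1.
\]

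Summing the three contributions gives $2\mu_{11}-2\tau+1$, which is the claim. There is no real obstacle in this argument. The only point needing care is that the orientation convention $c_j\succ_\sigma c_k$ matches the one used in the preceding lemma, so that its table applies directly. Once that is checked, everything is bookkeeping.
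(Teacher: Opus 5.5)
Your proposal is correct and follows essentially the same argument as the paper. The paper also splits the margin additively over $\bpi_X$, $\bpi_A$, and $\bpi_A^{(\tau)}$, taking $2\mu_{11}$ from the preceding lemma, $0$ from the balanced query pairs, and $(N-\tau)-(N+\tau-1)=1-2\tau$ from the threshold profile.
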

\begin{proof}
The generated profile is the concatenation
\[
    \bpi_\mathrm{2way}
    =\bpi_X\circ\bpi_A\circ\bpi_A^{(\tau)}.
\]
By the preceding lemma, the data profile contributes pairwise margin
\[
    N_{\bpi_X}(c_j\succ c_k)-N_{\bpi_X}(c_k\succ c_j)
    =2\mu_{11}.
\]
The query profile consists of $2N+\nu$ rankings in each pairwise direction,
so its contribution to the margin is zero.  Finally, the threshold profile
contains $N-\tau$ rankings with $c_j\succ c_k$ and $N+\tau-1$ rankings with
$c_k\succ c_j$, and therefore contributes
\[
    (N-\tau)-(N+\tau-1)=1-2\tau.
\]
Adding the three contributions gives
\[
    N_{\bpi_\mathrm{2way}}(c_j\succ c_k)
    -N_{\bpi_\mathrm{2way}}(c_k\succ c_j)
    =2\mu_{11}-2\tau+1,
\]
as claimed.
\end{proof}

Thus, whenever $c_j$ and $c_k$ are the decisive pair for the voting rule, the
designated candidate $c_j$ wins exactly when
$\sum_{i=1}^Nx_{i,j}x_{i,k}\ge\tau$.

\subsubsection{Compatibility with the voting rules.}

We now verify the remaining compatibility conditions.  For Condorcet and Maximin, the query
profile makes the queried pair dominate every outside candidate, so their
head-to-head comparison determines the winner.

\begin{lemma}[Condorcet and Maximin]
The profile generator $\cF_\mathrm{2way}^{(0)}$ is compatible with the
non-interactive Condorcet and Maximin algorithms.
% \phanu{$\cF_\mathrm{2way}^{(0)}$ suffices.}
\end{lemma}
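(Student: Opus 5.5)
The plan is to verify the four conditions of Definition~\ref{def:compatibility} directly for $\cF_\mathrm{2way}^{(0)}$, with Corollary~\ref{cor:final_round} doing most of the work. Conditions (iii) and (iv) are bookkeeping. The generated profile always has $2N+4N+(2N-1)=8N-1$ rankings, independent of $A$ and $\tau$. Each record contributes exactly two rankings with budget $\varepsilon'=\varepsilon/2$. Note that $8N-1$ is odd, so no pairwise margin is ever zero.

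The first substantive step is to show that the queried pair $c_j,c_k$ dominates every outsider $c_l$. In the data profile the net contribution to $D(c_j,c_l)$ is at least $-2N$, since there are only $2N$ rankings. All $4N$ query rankings and all $2N-1$ threshold rankings place both $c_j$ and $c_k$ above every outsider. Hence
\[
    D(c_j,c_l),\ D(c_k,c_l)\ \ge\ -2N+4N+(2N-1)=4N-1 .
\]
Meanwhile, Corollary~\ref{cor:final_round} gives $D(c_j,c_k)=2(\mu_{11}-\tau)+1$, which is odd and has absolute value at most $2N$.

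For Condorcet, the winner is $c_j$ when $\mu_{11}\ge\tau$, since then $D(c_j,c_k)\ge1$. Otherwise $c_k$ beats $c_j$ and every outsider, so $c_k\neq c_j$ wins; this gives (i). For (ii), one ranking change moves any pairwise margin by at most $2$. Changing the outcome requires the current winner to stop strictly beating some opponent.
\begin{itemize}
    \item Against an outsider, this needs at least $(4N-1)/2$ changes, hence at least $2N\ge N\ge|\mu_{11}-\tau|$ changes.
    \item Against the other queried candidate, it needs $k$ changes with $2k\ge|2(\mu_{11}-\tau)+1|$. When $\mu_{11}\ge\tau$ this gives $k\ge\mu_{11}-\tau+\tfrac12$. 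When $\mu_{11}<\tau$ it gives $k\ge\tau-\mu_{11}-\tfrac12$, and integrality yields $k\ge|\mu_{11}-\tau|$.
\end{itemize}

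For Maximin, the domination bound gives $\scmaximin(c_j)=D(c_j,c_k)$ and $\scmaximin(c_k)=-D(c_j,c_k)$. Every outsider has maximin score at most $-(4N-1)$, which is below both. So the winner is $c_j$ exactly when $D(c_j,c_k)>0$, i.e.\ when $\mu_{11}\ge\tau$, which proves (i). The top two candidates are $\{c_j,c_k\}$ and $\Delta_{\mathrm{mm}}=2|D(c_j,c_k)|$. By the lower half of Lemma~\ref{lem:margin-maximin}, $\mov\ge\Delta_{\mathrm{mm}}/4=|2(\mu_{11}-\tau)+1|/2$. Integrality of $\mov$ again upgrades this to $\mov\ge|\mu_{11}-\tau|$, which proves (ii).

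The main obstacle is the outsider-domination bound, since the data rankings place outsiders in arbitrary positions. The fix is the crude worst case that all $2N$ data rankings oppose the pair, which the $6N-1$ public rankings more than outweigh. Beyond that, the only care needed is the half-integer rounding in (ii).
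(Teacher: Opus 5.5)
Your proposal is correct and follows essentially the paper's argument: the queried pair beats every outsider by a wide margin, Corollary~\ref{cor:final_round} makes the odd margin $2(\mu-\tau)+1$ the deciding comparison, and integrality shows that fewer than $|\mu-\tau|$ ranking changes can flip neither that comparison nor any comparison against an outsider. The only difference is in the Maximin step, and it is cosmetic: you compute the maximin scores explicitly and apply the lower half of Lemma~\ref{lem:margin-maximin} to $\Delta_{\mathrm{mm}}=2|D(c_j,c_k)|$, whereas the paper notes that the persisting Condorcet winner is automatically the Maximin winner, so one stability argument covers both rules.
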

\begin{proof}
Fix a query $A=\{j,k\}$ with $c_j\succ_\sigma c_k$, and write
$\mu=\sum_{i=1}^Nx_{i,j}x_{i,k}$.  For every
$\ell\in[d]\setminus A$, the query profile ensures
\begin{align*}
    N_{\bpi_\mathrm{2way}}(c_j\succ c_\ell)
    -N_{\bpi_\mathrm{2way}}(c_\ell\succ c_j)
    &\ge 2N+2\nu, \\
    N_{\bpi_\mathrm{2way}}(c_k\succ c_\ell)
    -N_{\bpi_\mathrm{2way}}(c_\ell\succ c_k)
    &\ge 2N+2\nu.
\end{align*}
Thus both queried candidates beat every outside candidate head-to-head.  By
Corollary~\ref{cor:final_round}, $c_j$ beats $c_k$ exactly when $\mu\ge\tau$;
otherwise $c_k$ beats $c_j$.  Hence one of $c_j,c_k$ is the Condorcet winner,
so both Condorcet and Maximin return $c_j$ exactly when $\mu\ge\tau$.

For completeness, set $r=|\mu-\tau|$.  The decisive pairwise margin has
absolute value $2r+1$ when $\mu\ge\tau$ and $2r-1$ otherwise.  Moreover, the
displayed margins against every outside candidate are at least $2N$ even when
$\nu=0$.  Because $r\le N$, changing fewer than $r$ rankings preserves both
the sign of the decisive comparison and the victories of the queried
candidates over every outsider.  The same queried candidate therefore remains
the Condorcet and Maximin winner.  Thus the margin of victory is at least
$|\mu-\tau|$, and the compatibility conditions hold.
\end{proof}

For Plurality with Runoff, additional padding forces the same queried pair to
be the two first-round finalists.

\begin{lemma}[Plurality with Runoff]
For Plurality with Runoff, $\cF_\mathrm{2way}^{(2N)}$ is compatible with
$\alg_\mathrm{Plu2}$.
\end{lemma}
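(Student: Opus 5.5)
The plan is to repeat the Condorcet/Maximin argument, now with the padding $\nu=2N$. The padding should force $c_j$ and $c_k$ to be the two first-round finalists robustly. Corollary~\ref{cor:final_round} then decides the runoff.

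Fix a query $A=\{j,k\}$ with $c_j\succ_\sigma c_k$, write $\mu=\sum_{i}x_{i,j}x_{i,k}$ and $r=|\mu-\tau|\le N$.

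\emph{Plurality scores.} With $\nu=2N$, the query profile contains $4N$ rankings with $c_j$ first and $4N$ rankings with $c_k$ first. The threshold profile puts $c_j$ or $c_k$ first in all of its $2N-1$ rankings. Only the $2N$ data rankings can put an outside candidate $c_\ell$ first. Hence
\[
\scoreplu(c_j),\ \scoreplu(c_k)\ \ge 4N,
\qquad
\scoreplu(c_\ell)\le 2N \quad\text{for every } \ell\notin A.
\]
So $c_j,c_k$ are the finalists, with a gap of at least $2N$ over every outsider.

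\emph{Condition (i).} The runoff between $c_j$ and $c_k$ is decided by the head-to-head margin $2\mu-2\tau+1$ from Corollary~\ref{cor:final_round}. This margin is positive exactly when $\mu\ge\tau$, so $w_A=c_j$ wins iff $\mu\ge\tau$.

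\emph{Condition (ii), the margin of victory.} If $r=0$ there is nothing to prove. Otherwise consider a profile obtained by changing at most $r-1<N$ rankings.
\begin{itemize}
\item Each changed ranking moves any plurality score by at most one. Afterwards the queried candidates still have at least $4N-(r-1)>3N$ points, while every outsider has at most $2N+(r-1)<3N$. The inequality is strict, so tie-breaking is irrelevant and the finalists remain $c_j,c_k$.
\item The decisive pairwise margin has absolute value $2r+1$ if $\mu\ge\tau$ and $2r-1$ otherwise. Each changed ranking alters it by at most $2$, so $r-1$ changes alter it by at most $2r-2<2r-1$. Its sign, and hence the runoff winner, is preserved.
\end{itemize}
Therefore $\mov\ge|\mu-\tau|$.

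\emph{Conditions (iii) and (iv).} The profile always has $2N+(8N)+(2N-1)$ rankings, independent of $A$ and $\tau$, which gives (iii). Each record contributes exactly two data rankings and $\varepsilon'=\varepsilon/2$, which gives (iv).

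The main point needing care is the finalist stability step: the outsider bound must stay strictly below the queried candidates' scores after $r-1\le N-1$ changes. The choice $\nu=2N$ is made precisely so that this slack suffices.
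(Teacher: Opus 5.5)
Your proof is correct and follows the same route as the paper. The padding $\nu=2N$ gives each queried candidate at least $4N$ first-place votes, against at most $2N$ for any outsider. Hence fewer than $\min\{\nu/2,|\mu-\tau|\}$ ranking changes can neither change the finalists nor flip the sign of the odd margin $2\mu-2\tau+1$ from Corollary~\ref{cor:final_round}. Your explicit bookkeeping (the $3N$ separation and the $2r\pm1$ bound on the decisive margin) spells out what the paper's proof asserts in one line.
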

\begin{proof}
Fix a query $A=\{j,k\}$ with $c_j\succ_\sigma c_k$, and write
$\mu=\sum_{i=1}^Nx_{i,j}x_{i,k}$.  In the generated profile,
\begin{align*}
    \scoreplu(c_a)&\ge 2N+\nu &&\text{for each }a\in A, \\
    \scoreplu(c_\ell)&\le 2N &&\text{for each }\ell\in[d]\setminus A.
\end{align*}
With $\nu=2N>0$, both $c_j$ and $c_k$ are therefore plurality finalists.  Their
runoff is decided by the pairwise margin in Corollary~\ref{cor:final_round}, so
Plurality with Runoff returns $c_j$ exactly when $\mu\ge\tau$.  Changing fewer
than $\min\{\nu/2,|\mu-\tau|\}$ rankings cannot either remove one of the queried
candidates from the runoff or reverse the final pairwise comparison.  With
$\nu=2N$, the margin is at least $|\mu-\tau|$, and compatibility follows.
\end{proof}

\subsubsection{Reduction and lower bound.}

The two compatibility lemmas let us invoke the generic marginal-release
framework.  We first state the resulting two-way reduction and then combine it
with the local marginal lower bound.

\begin{theorem}[Reduction for Corollary~\ref{cor:ldp-cmr-main-lower}]
\label{thm:ldp-two-way-reduction}
Let $X=(x_1,\ldots,x_N)$ with $x_i\in\{0,1\}^d$, and let
$Q(X)=(Q_A(X))_{A\in\binom{[d]}{2}}$, where
$Q_A(X)=\sum_{i=1}^N\prod_{j\in A}x_{i,j}$.  If there exists an
$\varepsilon/2$-LDP algorithm for Condorcet, Maximin, or Plurality with
Runoff---with $d$ candidates and $12N-1$ voters---that is
$(\alpha,\gamma)$-useful, then
there exists an $\varepsilon$-LDP algorithm $\cM$ such that
\[
    \Pr\!\left[\|\cM(X)-Q(X)\|_\infty>\alpha+1\right]
    \le N\binom{d}{2}\gamma,
\]
and
\[
    \mathbb{E}\!\left[
    \left\|\frac{\cM(X)}{N}-\frac{Q(X)}{N}\right\|_\infty
    \right]
    \le \frac{\alpha+1}{N}+N\binom{d}{2}\gamma.
\]
\end{theorem}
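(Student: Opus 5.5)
The plan is to instantiate the generic reduction of Lemma~\ref{lem:k-way-reduction} with $k=2$, using the profile generator $\cF_\mathrm{2way}^{(\nu)}$. We take $\nu=0$ for Condorcet and Maximin, and $\nu=2N$ for Plurality with Runoff. The compatibility lemmas just proved already give conditions (i), (ii) and (iv) of Definition~\ref{def:compatibility}; only the bookkeeping remains.

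\textbf{Step 1: election size and budget.} We count voters to confirm condition (iii) and match the voter count in the statement. The data profile has $2N$ rankings, the query profile has $4N+2\nu$, and the threshold profile has exactly $2N-1$ for every $\tau\in[N]$. The total is therefore independent of $A$ and $\tau$. With $\nu=2N$ it equals $2N+8N+2N-1=12N-1$. For $\nu=0$ the count is smaller, namely $8N-1$. To use a single algorithm with $12N-1$ voters, I would use $\nu=2N$ for all three rules. This is legitimate because the Condorcet/Maximin compatibility argument only used that the margins against outsiders are at least $2N$, and that holds for every $\nu\ge 0$. Each record contributes $s=2$ data-dependent rankings, so the per-ranking budget is $\varepsilon'=\varepsilon/2$, which is exactly the $\varepsilon/2$-LDP assumption on the given algorithm.

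\textbf{Step 2: apply the scan.} Run Algorithm~\ref{alg:k-way_reduction} with the $(\alpha,\gamma)$-useful algorithm. By compatibility (i)--(ii), every test $T[\tau]$ with $|\mu_A-\tau|\ge\alpha$ is correct with probability at least $1-\gamma$. Lemma~\ref{lem:k-way-reduction} then yields an $\varepsilon$-LDP mechanism $\cM$. Privacy holds because the data messages are produced once with total budget $2\cdot\varepsilon/2=\varepsilon$ per record, and everything else is post-processing of public rankings. A union bound over the $N\binom{d}{2}$ fixed query--threshold pairs shows that $\|\cM(X)-Q(X)\|_\infty\le\alpha+1$ except with probability $N\binom d2\gamma$. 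The expectation bound follows by splitting on this event, since the normalized error is at most $1$ otherwise.

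\textbf{Main obstacle.} The only delicate point is ensuring that a \emph{single} fixed-size election format serves every $(A,\tau)$ and all three rules, so that the one useful algorithm (with fixed $n=12N-1$ and $d$ candidates) can be invoked throughout the non-adaptive scan. I would settle this as in Step 1 by fixing $\nu=2N$ uniformly and rechecking that the Condorcet/Maximin compatibility proof is insensitive to $\nu$. Also, the randomizers assigned to the public positions $n_X+1,\dots,n$ must be those of the given algorithm. That is automatic because the positions are fixed.
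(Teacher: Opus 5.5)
Your proposal is correct and follows the paper's proof. Both apply Lemma~\ref{lem:k-way-reduction} with $k=2$ and the single generator $\cF_\mathrm{2way}^{(2N)}$ for all three rules, count $2N+8N+(2N-1)=12N-1$ voters, and use two $\varepsilon/2$-budget rankings per record to obtain $\varepsilon$-LDP. Your explicit check that the Condorcet/Maximin compatibility argument, stated for $\nu=0$, still holds at $\nu=2N$ fills in a step the paper leaves implicit.
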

\begin{proof}
Apply Lemma~\ref{lem:k-way-reduction} with $k=2$ and use
$\cF_\mathrm{2way}^{(2N)}$ for all three rules.  Every generated election has
\[
    2N+(4N+4N)+(2N-1)=12N-1
\]
voters.  Each data point contributes exactly two rankings, each privatized with
budget $\varepsilon/2$, so the induced marginal-release algorithm is
$\varepsilon$-LDP.  Lemma~\ref{lem:k-way-reduction} gives the stated bounds.
\end{proof}

Combining Theorem~\ref{thm:ldp-two-way-reduction} with
Theorem~\ref{thm:edmonds-local-marginal-lower} gives the candidate dependence
used in the summary table.

\begin{proof}[Proof of Corollary~\ref{cor:ldp-cmr-main-lower}]
Set $\varepsilon_0=1/2$ and apply
Theorem~\ref{thm:ldp-two-way-reduction} with its privacy parameter set to
$2\varepsilon$, as well as $d=m$, $N=\Theta(n)$, and $\alpha=\tau$.  The
assumed voting algorithm is then the required $(2\varepsilon)/2$-LDP
algorithm, and the resulting marginal mechanism is $2\varepsilon$-LDP.  Since
$2\varepsilon\le1$, Theorem~\ref{thm:edmonds-local-marginal-lower} applies; the
factor of two is absorbed into the constants.  If
$\gamma\le c/(m^2n^2)$ for a sufficiently small $c$ and
$\tau=\widetilde{o}(\min\{n,m\sqrt{n}/\varepsilon\})$, the expected normalized error in
Theorem~\ref{thm:ldp-two-way-reduction} would be
$\widetilde{o}(\min\{1,m/(\varepsilon\sqrt{N})\})$: the first term has this
order, and $N\binom{m}{2}\gamma=O(1/N)$.  This contradicts the two-way case of
Theorem~\ref{thm:edmonds-local-marginal-lower}.  Hence the displayed lower
bound on $\tau$ follows.
\end{proof}

Corollary~\ref{cor:ldp-cmr-main-lower} explains the
$\widetilde{\Omega}(m\sqrt{n}/\varepsilon)$ non-interactive LDP lower-bound entries for
Condorcet, Maximin, and Plurality with Runoff in
Table~\ref{tab:results-summary}.

\section{Additional Details for Non-Interactive LDP: STV}
\label{app:stv-details}

The non-interactive STV algorithm uses the same noisy elimination rule as the
interactive protocol, but releases all statistics that might be needed before
the elimination process begins.  Recall that
\[
    \mathcal{I}
    =\{(S,c):S\subseteq\cC,\ |S|\ge2,\ c\in S\},
\]
and that $T_{S,c}(\pi)=1$ exactly when $c$ is the highest-ranked candidate in
$S$ under ranking $\pi$.  Consequently,
\[
    \sum_{i=1}^n T_{S,c}(\pi_i)
    =\scoreplu(\bpi|_S,c),
\]
so the coordinates of $\sum_iT(\pi_i)$ contain the round scores for every
possible active set $S$.

Each voter sends one DJW-privatized version of the entire vector $T(\pi_i)$.
After summing the reports, the aggregator has estimates
$\widetilde{s}_{S,c}$ for all $(S,c)\in\mathcal I$.  It starts with
$S=\cC$, eliminates a candidate having minimum estimated score
$\widetilde{s}_{S,c}$, and repeats using the coordinates indexed by the new
active set.  Although the realized active set depends on earlier noisy
eliminations, every score that could be queried was included in the original
release.  Thus the voters send no further messages, and the algorithm remains
non-interactive.  The cost is that $T(\pi_i)$ has coordinates for exponentially
many active sets, which produces the $2^{m/2}$ dependence in the utility bound.

\subsection{Upper Bound}

Algorithm \ref{alg:noninteractive-stv} uses the same noisy elimination aggregator as
the interactive protocol.  Its only difference is that it privately releases
the plurality scores for every possible active set in advance.
%Write $a\prec_{\mathrm{tb}}b$ when the fixed STV tie-breaking rule eliminates
%$a$ rather than $b$ in a tie.

\begin{algorithm}[t]
    \caption{Non-Interactive Algorithm for STV}
    \label{alg:noninteractive-stv}
    \begin{algorithmic}[1]
        \REQUIRE Preference profile $\bpi=(\pi_1,\ldots,\pi_n)$, privacy parameter $\varepsilon$ %, failure probability $\gamma$ %, tie-breaking rule $\prec_{\mathrm{tb}}$
        \ENSURE Winner $c\in\cC$
        \STATE Each voter releases $Z_i=\mathcal{R}_{\mathrm{DJW}}(T(\pi_i);r_T,\varepsilon)$
        \STATE Aggregate $\widetilde{s}_{S,c}\gets\sum_{i=1}^n Z_{i,S,c}$ for every $(S,c)\in\mathcal{I}$
        %\STATE $\lambda_T\gets C_{\mathrm{DJW}}\kappa_\varepsilon r_T
        %\sqrt{2n\ln(2D_T/\gamma)}$
        \STATE $\activecandidates \gets\cC$
        \FOR{$j=1$ to $m-1$}
            % \STATE $\widetilde{s}_{\min}\gets\min_{c\in S}\widetilde{s}_{S,c}$
            \STATE $e_j \gets \argmin_{c\in \activecandidates}\widetilde{s}_{\activecandidates ,c}$ \hfill \COMMENT{Tie broken arbitrarily} 
%            \STATE $L\gets\{c\in S: \widetilde{s}_{S,c}\le\widetilde{s}_{\min}+2\lambda_T\}$
%            \STATE $e\gets$ the $\prec_{\mathrm{tb}}$-minimum candidate in $L$
            \STATE $\activecandidates \gets \activecandidates\setminus\{e_j\}$
        \ENDFOR
        \RETURN the unique candidate in $\activecandidates$
    \end{algorithmic}
\end{algorithm}

The following is a more precise version of \Cref{cor:ldp-stv}.

\begin{theorem}[Details for Theorem~\ref{cor:ldp-stv}]
\label{thm:ldp-stv-upper}
There exists a non-interactive $\varepsilon$-LDP algorithm that is
$(\tau,\gamma)$-useful for STV, where
\begin{align*}
    \tau&=m^2 + 2(m-1)\lambda_T+1, \\
    \lambda_T
    &=C_{\mathrm{DJW}}\kappa_\varepsilon r_T
    \sqrt{2n\ln(2D_T/\gamma)}, \\
    r_T&=\sqrt{2^m-m-1},
    \qquad D_T=m2^{m-1}-m.
\end{align*}
In particular, for $0<\varepsilon\le1$,
\[
    \tau=O\left(
    \frac{m2^{m/2}\sqrt{n\ln(m2^m/\gamma)}}{\varepsilon}
    \right).
\]
\end{theorem}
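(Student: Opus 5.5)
The proof follows the two-step template used throughout this appendix. First we bound the estimation error of the single DJW release uniformly over all coordinates. Then we apply the STV error certificate, Corollary~\ref{cor:margin-stv-error}.

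Privacy is immediate. Each voter sends one report $Z_i=\mathcal{R}_{\mathrm{DJW}}(T(\pi_i);r_T,\varepsilon)$. Since every non-singleton $S$ contributes exactly one nonzero coordinate, $\|T(\pi_i)\|_2=r_T$, so Lemma~\ref{lem:djw-randomizer} gives $\varepsilon$-LDP. Everything the aggregator does in Algorithm~\ref{alg:noninteractive-stv} is post-processing, so the algorithm is non-interactive.

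For accuracy, fix a coordinate $(S,c)\in\mathcal{I}$. We have $|\mathcal{I}|=\sum_{k\ge2}k\binom{m}{k}=m2^{m-1}-m=D_T$. The error $\widetilde{s}_{S,c}-\scoreplu(\bpi|_S,c)=\sum_i\langle Z_i-T(\pi_i),e_{S,c}\rangle$ is a sum of $n$ independent centered sub-Gaussian variables, each with parameter at most $C_{\mathrm{DJW}}\kappa_\varepsilon r_T$, by unbiasedness and the projection bound of Lemma~\ref{lem:djw-randomizer} applied to the unit vector $e_{S,c}$. The tail bound, exactly as in Corollary~\ref{cor:acc-algo-plu}, gives failure probability $\gamma/D_T$ at threshold $\lambda_T$. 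A union bound over all $D_T$ coordinates then shows that, with probability at least $1-\gamma$, every coordinate is within $\lambda_T$ of its true value.

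The key point is that we take this union bound over all of $\mathcal{I}$, not just along the realized elimination path. The realized active sets $S_i$ depend on the noise itself, so a union bound over only the path would not be valid. Covering every possible active set in advance avoids this adaptivity issue, at the cost of the $\ln(2D_T/\gamma)$ factor.

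Now condition on this event. The aggregator eliminates, in each round, a candidate minimizing $\widetilde{s}_{S_i,\cdot}$ over the current active set $S_i$, and every estimate used is within $\eta=\lambda_T$. If the returned candidate differs from the true winner, Corollary~\ref{cor:margin-stv-error} gives $\mov(\bpi,r)\le m^2+2(m-1)\lambda_T<\tau$. Hence every profile with $\mov(\bpi,r)\ge\tau$ is answered correctly with probability at least $1-\gamma$.

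For the asymptotic form, take $\varepsilon\le1$, so $\kappa_\varepsilon=O(1/\varepsilon)$, and use $r_T\le2^{m/2}$ and $D_T\le m2^m$. This gives $2(m-1)\lambda_T=O\bigl(m2^{m/2}\sqrt{n\ln(m2^m/\gamma)}/\varepsilon\bigr)$. The $m^2$ term is dominated by this term (assuming $n\ge1$). The main obstacle is the adaptivity point above, and it is resolved by the full-coordinate union bound.
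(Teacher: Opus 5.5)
Your proposal is correct and follows the paper's proof: privacy from a single DJW release of $T(\pi_i)$, a coordinatewise sub-Gaussian tail bound with a union bound over all $D_T$ coordinates of $\mathcal{I}$, and then Corollary~\ref{cor:margin-stv-error} with $\eta=\lambda_T$. Your point that the union bound must cover every possible active set, not just the noise-dependent realized path, is the same reason the paper's construction includes and bounds every score in $\mathcal{I}$ in advance.
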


\begin{proof}
We use \Cref{alg:noninteractive-stv}.
Since each voter applies one $\varepsilon$-LDP DJW randomizer, the algorithm
is non-interactive and $\varepsilon$-LDP.  For each coordinate
$(S,c)$, Lemma~\ref{lem:djw-randomizer} shows that
$Z_{i,S,c}-T_{S,c}(\pi_i)$ is sub-Gaussian with parameter at most
$C_{\mathrm{DJW}}\kappa_\varepsilon r_T$.  Independence across voters and a
union bound over the $D_T$ coordinates therefore imply that, with probability
at least $1-\gamma$,
\[
    \max_{(S,c)\in\mathcal{I}}
    |\widetilde{s}_{S,c}-\scoreplu(\bpi|_S,c)|\le\lambda_T.
\]
Condition on this event.

Corollary~\ref{cor:margin-stv-error} shows that an erroneous output requires
\[
    \mov(\bpi,r)\le m^2+2(m-1)\lambda_T<\tau.
\]
Therefore every profile with $\mov(\bpi,r)\ge\tau$ is recovered correctly on
the conditioned event.
\end{proof}

\subsection{STV Lower-Bound Reductions}

The reduction below uses the following certificate to show that the winner of
its constructed STV election is stable under nearby profile changes.

Intuitively, a set $S\in\mathfrak S$ is an active candidate set that may be
reached after a sequence of eliminations that we are willing to allow.  We do
not need to certify one fixed elimination order.  Instead, at every such state
$S$, the set $E(S)$ contains the candidates that may safely be eliminated
next, and deleting any one of them must lead to another certified state in
$\mathfrak S$.  The score-gap condition separates these candidates from the
protected candidates in $S\setminus E(S)$ by more than $2t$.  Since changing
$t$ rankings can alter each round score by at most $t$, this separation
survives every allowed perturbation.  Thus $\mathfrak S$ is a branching family
of possible active sets that starts at $\cC$, always contains $w$, and funnels
every possible elimination sequence to the terminal set $\{w\}$.

\begin{lemma}[STV lower certificate]
\label{lem:margin-stv-lower}
Fix a profile $\bpi$, a candidate $w$, and a nonnegative integer $t$.  Suppose
there is a family $\mathfrak S$ of candidate sets such that
$\cC,\{w\}\in\mathfrak S$ and every $S\in\mathfrak S$ contains $w$.  For
every $S\in\mathfrak S\setminus\{\{w\}\}$, suppose there is a nonempty set of
safe eliminations $E(S)\subseteq S\setminus\{w\}$ satisfying
\[
    S\setminus\{e\}\in\mathfrak S
    \qquad\text{for every }e\in E(S),
\]
and
\[
    \min_{c\in S\setminus E(S)}
    \scoreplu(\bpi|_S,c)
    -
    \min_{e\in E(S)}
    \scoreplu(\bpi|_S,e)
    >2t.
\]
Then $w$ remains the STV winner after any $t$ ranking changes, and hence
\[
    \mov(\bpi,r)>t.
\]
\end{lemma}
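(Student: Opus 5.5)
Fix any profile $\bpi'$ that is a $t$-neighbor of $\bpi$; we show that STV on $\bpi'$ returns $w$. The argument is an induction over the elimination rounds, keeping the invariant that the active set $\activecandidates_j$ after $j$ eliminations on $\bpi'$ lies in $\mathfrak S$. The base case is $\activecandidates_0=\cC\in\mathfrak S$. Since each $S\in\mathfrak S$ has $|S\setminus\{e\}|=|S|-1$ and all members contain $w$, after $m-1$ rounds the active set is a singleton in $\mathfrak S$. That singleton must be $\{w\}$, so $w$ wins.

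\textbf{Score perturbation bound.} For any set $S$ and candidate $c\in S$, the restricted score $\scoreplu(\bpi|_S,c)$ is a sum over voters of the indicator that $c$ is the top of $\pi_i|_S$. Changing one ranking changes at most one such indicator for $c$ by at most one. Hence
\[
|\scoreplu(\bpi'|_S,c)-\scoreplu(\bpi|_S,c)|\le t
\]
for every $S$ and $c$.

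\textbf{Inductive step.} Suppose $S=\activecandidates_j\in\mathfrak S$ and $S\ne\{w\}$. Then $|S|\ge2$ and $E(S)$ is nonempty. Let $e^\star\in E(S)$ attain $\min_{e\in E(S)}\scoreplu(\bpi|_S,e)$. For every protected candidate $c\in S\setminus E(S)$, the perturbation bound and the gap hypothesis give
\[
\scoreplu(\bpi'|_S,c)\ge \scoreplu(\bpi|_S,c)-t> \scoreplu(\bpi|_S,e^\star)+t\ge \scoreplu(\bpi'|_S,e^\star).
\]
So every protected candidate has strictly larger score in $\bpi'$ than some safe candidate. Consequently none of them attains the round minimum, whatever the tie-breaking. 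The eliminated candidate $e$ therefore lies in $E(S)$, and by hypothesis $\activecandidates_{j+1}=S\setminus\{e\}\in\mathfrak S$. If $S\setminus E(S)$ is empty there is nothing to check, and $e\in E(S)$ holds automatically.

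The only delicate point is this strictness, which makes the argument independent of the fixed tie-breaking rule; the strict inequality $>2t$ in the hypothesis supplies it. Since $w\notin E(S)$ for every $S$, $w$ is never eliminated, and the induction yields $r(\bpi')=w$ for every $t$-neighbor $\bpi'$. Hence no change of at most $t$ rankings alters the outcome, i.e. $\mov(\bpi,r)>t$.
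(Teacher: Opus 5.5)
Your proof is correct and follows essentially the same route as the paper's. Each restricted score moves by at most $t$, and the strict $>2t$ gap then forces every elimination into $E(S)$ regardless of tie-breaking, so by induction the active set stays in $\mathfrak S$ until it reaches $\{w\}$. One small remark: your case $S\setminus E(S)=\emptyset$ never actually occurs, because $E(S)\subseteq S\setminus\{w\}$ means $w\in S\setminus E(S)$ always.
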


\begin{proof}
Changing $t$ rankings changes every restricted plurality score by at most
$t$.  Thus, for every $S\in\mathfrak S$, some candidate in $E(S)$ still has
strictly fewer votes than every candidate in $S\setminus E(S)$.  The candidate
eliminated from $S$ therefore belongs to $E(S)$, regardless of tie-breaking,
and the next active set remains in $\mathfrak S$.  Induction from $\cC$ shows
that the only possible final active set is $\{w\}$.
\end{proof}

We now build a stronger STV gadget for arbitrary $k$-way marginals.

\begin{theorem}[Reduction for Corollary~\ref{cor:main-ldp-stv-kway-edmonds}]
\label{thm:ldp-stv-kway-reduction}
Let $X=(x_1,\ldots,x_N)$ with $x_i\in\{0,1\}^d$, and let
$Q(X)=(Q_A(X))_{A\in\binom{[d]}{k}}$, where
$Q_A(X)=\sum_{i=1}^N\prod_{j\in A}x_{i,j}$.  If there exists an
$\varepsilon/2$-LDP algorithm for STV---with $d+3$ candidates and
$(6k+18)N-2$ voters---that is $(\tau,\gamma)$-useful for $\gamma\le1/4$, then there
exists an $\varepsilon$-LDP algorithm $\cM$ such that
\[
    \Pr\!\left[\|\cM(X)-Q(X)\|_\infty>\tau+1\right]
    \le N\binom{d}{k}\gamma,
\]
and
\[
    \mathbb{E}\!\left[
    \left\|\frac{\cM(X)}{N}-\frac{Q(X)}{N}\right\|_\infty
    \right]
    \le \frac{\tau+1}{N}+N\binom{d}{k}\gamma.
\]
\end{theorem}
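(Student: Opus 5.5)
The plan is to instantiate the generic marginal-release framework. We construct an STV profile generator $\cF_{\mathrm{STV}}$ on $d+3$ candidates that is compatible (Definition~\ref{def:compatibility}) with the given $\varepsilon/2$-LDP STV algorithm at target budget $\varepsilon$. Lemma~\ref{lem:k-way-reduction} then yields both displayed bounds verbatim, with $\binom{d}{k}$ queries and $N$ thresholds. So all the work is in the generator and in checking conditions (i)--(iv).

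\emph{Candidates and data profile.} Take the coordinate candidates $c_1,\ldots,c_d$ and three auxiliary candidates: a designated winner $c^\star$, a rival $b$, and a sink $z$. Each record $x_i$ contributes exactly two rankings, so $s=2$ and $\varepsilon'=\varepsilon/2$, which is condition~(iv). Writing $S_i=\{c_j:x_{i,j}=1\}$ as in the two-way generator, the intended rankings are
\[
    \pi_{2i-1}:\ (\sigma\mid_{\bar S_i})\succ b\succ c^\star\succ z\succ\cdots,
    \qquad
    \pi_{2i}:\ (\sigma^R\mid_{\bar S_i})\succ c^\star\succ b\succ z\succ\cdots.
\]
After every coordinate candidate has been eliminated, the pair contributes one vote to $b$ and one to $c^\star$ whatever the record is. The record is informative only through the moment its votes leave the coordinate candidates.

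\emph{Query profile.} The query profile $\bpi_A$ is public and of size independent of $A$. It gives first-place padding that forces a fixed elimination order:
\begin{itemize}
    \item first the $d-k$ non-queried coordinate candidates, each with round score at most the data votes and well below everyone else;
    \item then the $k$ queried candidates of $A$, in a fixed order, each receiving padding $\Theta(N)$ that staircases their scores;
    \item then $z$.
\end{itemize}
Consider the moment just before the last candidate of $A$ is eliminated. A record with all ones on $A$ is exactly one whose rankings $\pi_{2i-1},\pi_{2i}$ have already fallen through to $b$ and $c^\star$ by a fixed earlier round. Otherwise some $c_j\in A$ with $x_{i,j}=0$ still holds its vote. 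By adjusting which of $b,c^\star$ each ranking falls to, we make the final head-to-head score difference between $c^\star$ and $b$ an affine function $2\mu_A+\mathrm{const}$. This mirrors Corollary~\ref{cor:final_round}.

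\emph{Threshold profile.} As in $\cF_{\mathrm{2way}}$, the threshold profile is $(\pi^+)^{N-\tau}\circ(\pi^-)^{N+\tau-1}$, with $\pi^\pm$ ranking $c^\star\succ b$ or $b\succ c^\star$ first. This gives a fixed size, so condition~(iii) holds, and the final margin is $2\mu_A-2\tau+1$. Condition~(i) follows. Counting rankings ($2N$ data, $\Theta(k)N$ staircase padding for the $A$-candidates, $\Theta(N)$ for the outsiders, $z$, and the threshold) should produce the stated $(6k+18)N-2$ voters. The padding constants will be tuned to hit this exactly.

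\emph{Margin-of-victory verification.} For condition~(ii) we apply the STV lower certificate, Lemma~\ref{lem:margin-stv-lower}, with $t=|\mu_A-\tau|-1\le N$. We take $\mathfrak S$ to be the chain of active sets along the forced order. We allow $E(S)$ to contain several non-queried outsiders at once, since their relative order is irrelevant, so $\mathfrak S$ branches there. The final step $\{c^\star,b\}$ uses the exact margin $2\mu_A-2\tau+1$, which exceeds $2t$ in absolute value. We must check that every intermediate gap between safe eliminations and protected candidates exceeds $2N\ge 2t$. This is what the $\Theta(N)$ padding per candidate is for.

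The main obstacle is exactly this stability check for the $k$ queried candidates. Their scores depend on the data, through votes transferred from earlier-eliminated $A$-candidates. We need the staircase padding to dominate any data-dependent fluctuation, which is at most $2N$, by a margin over $2N$ at every round, while keeping the voter count linear in $k$. I would first try spacing the padding of consecutive $A$-candidates by $5N$-ish increments, in the spirit of the $5i$ offsets in the STV hardness construction. If that fails, I would lengthen the spacing and adjust the count.
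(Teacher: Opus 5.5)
Your generator fails compatibility condition~(i), because the decisive comparison is placed after the information about $\mu_A$ has been destroyed. The two data rankings of record $i$ share the same leading set $\bar S_i$. So in every round, either both are still held by coordinate candidates, or both have fallen through, one to $b$ and one to $c^\star$. The data therefore add equally to $b$ and $c^\star$ in every round. In your final contest between $c^\star$ and $b$, the score difference is a constant plus $(N-\tau)-(N+\tau-1)$, which does not depend on $\mu_A$. The analogy with Corollary~\ref{cor:final_round} does not transfer, since STV never looks at pairwise margins.

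Changing which auxiliary candidate each ranking falls to cannot repair this. The data rankings are privatized once and reused for every query, so they cannot depend on $A$. Once all coordinate candidates are gone, the auxiliary candidate receiving each data ranking is a function of $x_i$ alone. Hence any comparison made at that stage sees the same data statistic for every $A$.

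Your claim about the round just before the last $A$-candidate $c_{j'}$ is eliminated is also incorrect. By then, records whose zeros in $A$ all lie on already-eliminated candidates have fallen through too. The fall-through set is therefore $\{i: x_{i,j'}=1\}$, a one-way marginal. The conjunction over $A$ is visible only in the round where every candidate of $A$ is still active.

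That round is exactly where the paper places the test. Both data rankings of record $i$ are $(\sigma|_{\bar S_i})\succ c^*\succ c'\succ c''\succ(\sigma|_{S_i})$. When the active set is $A\cup\{c^*,c',c''\}$, candidate $c^*$ has $2N+\nu+2\mu_A$ votes, while each threshold candidate $c',c''$ has $2N+\nu+2\tau-1$. Each $c_j$ with $j\in A$ has at least $4N+\nu$ votes of query padding, so it survives this round. The padding orders among $c^*,c',c''$ then funnel all their votes to $c'$ if $\mu_A<\tau$, or to $c^*$ if $\mu_A\ge\tau$, and that candidate beats every queried candidate.

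No staircase on $A$ is needed. After the decision, all remaining queried candidates can be placed in the safe set $E(S)$ of Lemma~\ref{lem:margin-stv-lower}, in any order. This removes the data-dependent-fluctuation obstacle you anticipated. With $\nu=2N$, the count $2N+3(2N+\nu)+k(4N+\nu)+(4N-2)=(6k+18)N-2$ comes out exactly. In your proposal, by contrast, this voter count is asserted rather than derived.
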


\begin{proof}
We construct a compatible profile generator $\cF$.  Let $\nu$ be a padding
parameter, chosen below.
\begin{itemize}\raggedright
    \item $\cF_\mathrm{cand}(d)$ returns
    $\cC=\{c_1,\ldots,c_d,c^*,c',c''\}$.  Let $\sigma$ be a fixed ranking of
    the coordinate candidates $c_1,\ldots,c_d$.
    \item $\cF_\mathrm{data}(X,\varepsilon)$ returns
    $(\bpi_X,\varepsilon')$ with $\varepsilon'=\varepsilon/2$, where
    \[
        \bpi_X=(\pi_1,\ldots,\pi_{2N})\circ(\pi^*,\pi',\pi'')^{2N+\nu}.
    \]
    For each record $x_i$, let
    $S_i=\{c_j:x_{i,j}=1\}$ and $\bar S_i=\{c_j:x_{i,j}=0\}$.  The two
    data-dependent rankings are
    \[
        \pi_{2i-1},\pi_{2i}:
        (\sigma\mid_{\bar S_i})\succ c^*\succ c'\succ c''\succ
        (\sigma\mid_{S_i}).
    \]
    The padding rankings are
    \begin{align*}
        \pi^*:&\quad c^*\succ c'\succ c''\succ\sigma, \\
        \pi':&\quad c'\succ c^*\succ c''\succ\sigma, \\
        \pi'':&\quad c''\succ c^*\succ c'\succ\sigma.
    \end{align*}
    Thus the padding part contains $2N+\nu$ copies of each of
    $\pi^*,\pi'$, and $\pi''$.
    \item $\cF_\mathrm{query}(A)$ returns $(\bpi_A,w_A)$ with $w_A=c^*$ and
    \[
        \bpi_A=\bigcirc_{j\in A}(\pi^{(j)})^{4N+\nu},
    \]
    where
    \[
        \pi^{(j)}:
        c_j\succ c^*\succ c'\succ c''\succ
        (\sigma\mid_{\cC\setminus\{c_j,c^*,c',c''\}}).
    \]
    \item Fix any $j_A\in A$, and let
    \[
        \pi_A^0:
        c_{j_A}\succ c^*\succ c'\succ c''\succ
        (\sigma\mid_{\cC\setminus\{c_{j_A},c^*,c',c''\}}).
    \]
    Then $\cF_\mathrm{thres}$ returns the fixed-size profile
    \[
        \bpi_A^{(\tau)}
        =(\pi',\pi'')^{2\tau-1}\circ(\pi_A^0)^{4(N-\tau)},
    \]
    using the same $\pi'$ and $\pi''$ as above.  The first term contains
    $2\tau-1$ copies of each of $\pi'$ and $\pi''$, so the threshold profile
    contains exactly $4N-2$ rankings for every $\tau\in[N]$.
\end{itemize}

Let $\mu=\sum_{i=1}^N\prod_{j\in A}x_{i,j}$.  Suppose the last remaining
candidates are $\{c^*,c',c''\}\cup\{c_j:j\in A\}$.  At that point the first-place
counts are
% \begin{center}
% \setlength{\tabcolsep}{2pt}
% \resizebox{\columnwidth}{!}{%
% \begin{tabular}{r|cccc}
%     candidate & $c_j$ for $j\in A$ & $c'$ & $c''$ & $c^*$ \\
%     \hline
%     $\#$votes & $\ge 4N+\nu$ & $2N+\nu+2\tau-1$ & $2N+\nu+2\tau-1$ & $2N+\nu+2\mu$
% \end{tabular}
% }
% \end{center}
\begin{center}
\begin{tabular}{c|c}
    candidate & $\#$votes \\
    \hline
    $c_j$ for $j\in A$ & $\ge 4N+\nu$ \\
    $c'$               & $2N+\nu+2\tau-1$ \\
    $c''$              & $2N+\nu+2\tau-1$ \\
    $c^*$              & $2N+\nu+2\mu$
\end{tabular}
\end{center}
Set $\nu=2N$.  Before all coordinate candidates outside $A$ have been
eliminated, each such candidate can receive first-place votes only from the
$2N$ data-dependent rankings.  In contrast, each of $c^*,c',c''$ has at
least $2N+\nu$ first-place votes from the padding, and each queried coordinate
candidate has at least $4N+\nu$ from the query profile.  Hence all coordinate
candidates outside $A$ are eliminated before the displayed set remains.

We first identify the winner.  If $\mu<\tau$, then $c^*$ is eliminated first
from the displayed set and its votes transfer to $c'$.  Candidate $c''$ is
then eliminated and also transfers to $c'$.  At that point $c'$ has
\[
    6N+3\nu+4\tau+2\mu-2
\]
votes, whereas every queried coordinate candidate has at most
$6N+\nu+4(N-\tau)$ votes.  Thus $c'$ wins.  If $\mu\ge\tau$, both $c'$ and
$c''$ are eliminated before $c^*$ and transfer their votes to it.  Candidate
$c^*$ then has at least
\[
    6N+3\nu+2\mu+4\tau-2
\]
votes and again exceeds the same upper bound for every queried coordinate
candidate.  Thus $c^*$ wins if and only if $\mu\ge\tau$.

It remains to certify the margin.  Let $\delta=|\mu-\tau|$ and fix an integer
$t<\delta$; since $\delta\le N=\nu/2$, we also have $t<\nu/2$.  We describe the safe
elimination sets required by Lemma~\ref{lem:margin-stv-lower}.  While an
outside coordinate candidate remains, all such candidates are safe to
eliminate.  Their score is at most $2N$, whereas every protected candidate has
score at least $2N+\nu$, giving a gap greater than $2t$.

If $\mu<\tau$, then at the displayed set only $c^*$ is initially safe: the
next-smallest scores exceed its score by at least $2\delta-1>2t$.  After $c^*$ is
eliminated, every candidate other than the eventual winner $c'$ is safe.
While $c''$ remains, candidate $c'$ leads it by
$2N+\nu+2\mu>2t$.  Once $c''$ is eliminated, $c'$ leads every remaining
queried coordinate candidate by at least
\[
    8\tau+2\mu-2>2t.
\]
These gaps can only increase as further queried candidates are eliminated.

If $\mu\ge\tau$, then $c'$ and $c''$ are initially safe, and every other
candidate leads them by at least $2\delta+1>2t$.  After one is eliminated, its
transfer only increases the lead over the other.  Once both are eliminated,
every queried coordinate candidate is safe, while $c^*$ leads each of them by
at least
\[
    2\mu+8\tau-2>2t.
\]
Again, subsequent transfers go to $c^*$ and preserve the certificate.

The active sets reachable through these safe eliminations form the family
$\mathfrak S$ in Lemma~\ref{lem:margin-stv-lower}.  Therefore the corresponding
winner, $c'$ when $\mu<\tau$ and $c^*$ when $\mu\ge\tau$, is unchanged after
any $t<\delta$ ranking changes.  Hence
$\mov(\bpi,r)\ge|\mu-\tau|$.

The number of voters is
\[
    2N+3(2N+\nu)+k(4N+\nu)+(4N-2)
    =(6k+18)N-2,
\]
so the generator is compatible with an
$\varepsilon/2$-LDP STV algorithm, and Lemma~\ref{lem:k-way-reduction} gives the
claimed marginal estimator.
\end{proof}

Combining this reduction with the $k$-way local marginal lower bound gives the
following consequence.

\begin{proof}[Proof of Corollary~\ref{cor:main-ldp-stv-kway-edmonds}]
Set $\varepsilon_0=1/2$ and apply
Theorem~\ref{thm:ldp-stv-kway-reduction} with its privacy parameter set to
$2\varepsilon$ and $\alpha=\tau$.  The resulting marginal mechanism is
$2\varepsilon$-LDP, so Theorem~\ref{thm:edmonds-local-marginal-lower} applies.

First take $k=2$ and $d=m-3$.  The generated election has
$n=30N-2$, so $N=\Theta(n)$.  The expected-error bound in
Theorem~\ref{thm:ldp-stv-kway-reduction}, together with the two-way marginal
lower bound, gives
\[
    \tau=\widetilde{\Omega}\!\left(
    \min\left\{n,\frac{m\sqrt n}{\varepsilon}\right\}\right).
\]

Next take $k=\lfloor d/2\rfloor=\Theta(m)$.  Now
$n=(6k+18)N-2$, so $N=\Theta(n/m)$, and
\[
    \left(\frac{d}{k}\right)^{\Omega(\sqrt{k})}
    =2^{\Omega(\sqrt m)}.
\]
The assumed bound on $\gamma$, together with
$\binom{d}{k}\le2^d$, makes the failure term
$N\binom{d}{k}\gamma$ negligible.  The $k$-way marginal lower bound therefore gives
\[
    \tau=\widetilde{\Omega}\!\left(
    \min\left\{\frac{n}{m},
    \frac{\sqrt{n/m}}{\varepsilon}2^{\Omega(\sqrt m)}
    \right\}\right).
\]
The factor $m^{-1/2}$ in the second term is absorbed by decreasing the absolute
constant hidden in $\Omega(\sqrt m)$.  Taking the stronger of the two choices
of $k$ proves the claim.
\end{proof}
\section{Additional Details for Interactive Local Differential Privacy}
\label{app:interactive-ldp-details}

This appendix gives the algorithms and complete proofs for the interactive LDP
results in Section~\ref{sec:interactive-ldp}.  The structural margin bounds
used below are collected in Appendix~\ref{app:structural-mov}.  As in the
non-interactive analyses, each proof separates privacy and concentration from
the deterministic margin certificate; interactivity changes only which
statistics are released and their resulting error.

\subsection{Plurality with Runoff}

Algorithm~\ref{alg:interactive-plu-runoff} is the interactive counterpart of
the non-interactive Algorithm~\ref{alg:noninteractive-plu-runoff}.  Both use
half of the privacy budget to estimate the first-round Plurality scores and
select two noisy finalists.  The difference is how they obtain the runoff
statistic.  Because Algorithm~\ref{alg:noninteractive-plu-runoff} must collect
all reports before the finalists are known, it releases noisy margins for
every pair of candidates and then uses the margin for the selected pair.
Algorithm~\ref{alg:interactive-plu-runoff} instead observes the privatized
first-stage result and uses the remaining privacy budget to estimate only the
head-to-head scores of those two finalists.  This adaptive second query avoids
simultaneously estimating all $\binom{m}{2}$ pairwise margins and thereby
improves the dependence of the accuracy bound on the number of candidates.

\begin{algorithm}[H]
    \caption{Interactive Algorithm for Plurality with Runoff}
    \label{alg:interactive-plu-runoff}
    
    \begin{algorithmic}[1]
        \REQUIRE Preference profile $\bpi=(\pi_1,\ldots,\pi_n)$, privacy budget $\varepsilon$
        \ENSURE Winner $c \in \cC$
        \STATE Estimate the scores \((\tscoreplu(\bpi, c))_{c \in \cC}\) using Algorithm~\ref{alg:publish-plurality-scores} with privacy budget \(\eps / 2\)
        \STATE Let \(a\) and \(b\) be the two candidates with the highest scores in \((\tscoreplu(\bpi, c))_{c \in \cC}\)
        \STATE Estimate the scores \((\tscoreplu(\bpi|_{\{a, b\}}, c))_{c \in \{a, b\}}\) using Algorithm~\ref{alg:publish-plurality-scores} with privacy budget \(\eps / 2\)
        \RETURN \(\arg\max_{c \in \{a, b\}} \tscoreplu(\bpi|_{\{a, b\}}, c)\)
    \end{algorithmic}
\end{algorithm}

\begin{proof}[Proof of Theorem~\ref{thm:interactive-runoff-upper}]
    Algorithm~\ref{alg:interactive-plu-runoff} makes two calls to
    Algorithm~\ref{alg:publish-plurality-scores}, each with privacy budget
    $\eps/2$, so sequential composition gives total privacy budget $\eps$.

    Let
    $\lambda=C_{\mathrm{DJW}}\kappa_{\eps/2}
    \sqrt{2n\log(4m/\gamma)}$.  Apply
    Corollary~\ref{cor:acc-algo-plu} to the first-stage release with failure
    probability $\gamma/2$.  Conditional on its transcript, the selected pair
    $\{a,b\}$ is fixed, and the second-stage release uses fresh randomness.
    The same corollary, now applied conditionally with failure probability
    $\gamma/2$, therefore controls the two second-stage scores.  Since $m\ge2$,
    the displayed value of $\lambda$ is sufficient for both stages.  Thus, by
    a union bound, with probability at least $1-\gamma$ every estimate used by
    the algorithm is within $\lambda$ of its true value.  Condition on this
    event.

    The two second-stage score errors imply that the estimated pairwise margin
    has error at most $2\lambda$.  Lemma~\ref{lem:margin-runoff}, with
    $\eta=\lambda$ and $\zeta=2\lambda$, shows that an erroneous output requires
    $\mov(\bpi,r)\le2\lambda+2$.
\end{proof}

\subsection{Single Transferable Vote}

The interactive algorithm for STV is given in Algorithm~\ref{alg:interactive-stv}.
In an STV round with active candidate set $S$, we call
$\scoreplu(\bpi|_S,c)$ the round score of candidate $c\in S$.

\begin{algorithm}[H]
    \caption{Interactive Algorithm for STV}
    \label{alg:interactive-stv}
    
    \begin{algorithmic}[1]
        \REQUIRE Preference profile $\bpi=(\pi_1,\ldots,\pi_n)$, privacy budget $\varepsilon$
        \ENSURE Winner $c \in \cC$
        \STATE $\activecandidates \gets \cC$
        \FOR{$j=1$ to $m-1$}
            \STATE Estimate the scores \((\tscoreplu(\bpi|_{\activecandidates}, c))_{c \in \activecandidates}\) using Algorithm~\ref{alg:publish-plurality-scores} with privacy budget \(\eps / (m-1)\)
            \STATE $e_j \gets \arg\min_{c \in \activecandidates}
            \tscoreplu(\bpi|_{\activecandidates}, c)$
            \STATE $\activecandidates \gets \activecandidates \setminus \{e_j\}$
        \ENDFOR
        \RETURN the unique candidate in $\activecandidates$
    \end{algorithmic}
\end{algorithm}

The aggregation rule is the same noisy elimination rule analyzed in
Corollary~\ref{cor:margin-stv-error}; only the way the round scores are
obtained differs from the non-interactive algorithm.

\begin{proof}[Proof of Theorem~\ref{thm:interactive-stv-upper}]
    Algorithm~\ref{alg:interactive-stv} makes $m-1$ calls to
    Algorithm~\ref{alg:publish-plurality-scores}, each with privacy budget
    $\eps/(m-1)$, so composition gives total privacy budget $\eps$.

    Let
    $\lambda=C_{\mathrm{DJW}}\kappa_{\eps/(m-1)}
    \sqrt{2n\log(2m^2/\gamma)}$.  In each round, condition on the transcript
    before that round.  The active candidate set is then fixed, and the round
    uses fresh randomizer randomness, so
    Corollary~\ref{cor:acc-algo-plu} applies conditionally with failure
    probability at most $\gamma/(m-1)$.  A union bound over the rounds therefore
    shows that, with probability at least $1-\gamma$, for every
    realized active set $S$ and every $c\in S$,
    \[
        |\tscoreplu(\bpi|_S,c)-\scoreplu(\bpi|_S,c)|\le\lambda.
    \]
    Condition on this event.

    Corollary~\ref{cor:margin-stv-error} shows that an erroneous output requires
    \[
        \mov(\bpi,r)
        \le m^2+2(m-1)\lambda.
    \]
\end{proof}

\subsection{Condorcet Voting}

For a collection $\mathcal{P}$ of $p$ two-candidate contests, let
$B_{\mathcal{P}}(\pi)$ be the concatenation of the $p$ one-hot vectors that
record the top-ranked candidate in each restricted ranking.  Since
$\|B_{\mathcal{P}}(\pi)\|_2=\sqrt p$, one release
\[
    \mathcal{R}_{\mathrm{DJW}}
    (B_{\mathcal{P}}(\pi);\sqrt p,\rho)
\]
simultaneously estimates all $2p$ plurality scores under $\rho$-LDP.

\begin{algorithm}[h]
\caption{Interactive Algorithm for Condorcet Winner}
\label{alg:interactive-condorcet}

\begin{algorithmic}[1]
    \REQUIRE Preference profile $\bpi = (\pi_1,\ldots,\pi_n)$, privacy budget $\varepsilon$
    \ENSURE Winner $c \in \cC \cup \{\bot\}$
    \STATE $C' \gets \cC, Q \gets \lceil\log_2 m\rceil + 1$
    \WHILE{$|C'| > 1$}
        \STATE Partition $C'$ into a collection $\mathcal{P}$ of disjoint pairs and, if $|C'|$ is odd, one unpaired candidate $u$
        \STATE Estimate all scores
        \(
            \left( \tscoreplu(\bpi|_{\{a,b\}},d) \right)_{\substack{\{a,b\}\in\mathcal{P}\\ d\in\{a,b\}}}
        \)
        using one batched DJW release
        with privacy budget $\varepsilon/Q$
        \FOR{each $\{a,b\}\in\mathcal{P}$}
            \STATE $c \gets \arg\min_{d\in\{a,b\}} \tscoreplu (\bpi|_{\{a,b\}},d)$
            \STATE $C' \gets C' \setminus \{c\}$
        \ENDFOR
    \ENDWHILE
    
    \STATE Let $x$ be the unique candidate in $C'$
    \STATE Estimate all scores
        \(
            \left( \tscoreplu(\bpi|_{\{x,c\}},d) \right)_{\substack{c\in \cC\setminus\{x\}\\ d\in\{x,c\}}}
        \)
        using one batched DJW release
        with privacy budget $\varepsilon/Q$
    \FOR{each $c \in \cC \setminus \{x\}$}
        \IF{$\tscoreplu(\bpi|_{\{x,c\}},x)
        \le
        \tscoreplu(\bpi|_{\{x,c\}},c)$}
            \RETURN $\bot$
        \ENDIF
    \ENDFOR
    \RETURN $x$
\end{algorithmic}
\end{algorithm}

We combine the structural bound from Lemma~\ref{lem:margin-condorcet} with the
accuracy of the batched pairwise releases.  An erroneous noisy comparison
yields a small true pairwise margin, which the lemma converts into an upper
bound on the election's margin of victory.

\begin{proof}[Proof of Theorem~\ref{thm:interactive-condorcet-upper}]
    Algorithm~\ref{alg:interactive-condorcet} makes $Q$ batched DJW releases,
    each with privacy budget $\eps/Q$, so composition gives total privacy budget
    $\eps$.

    Every batch contains at most $m-1$ contests, so its input vector has $\ell_2$-norm at most $\sqrt m$.  Condition on the transcript before any batch.  Its
    contests are then fixed, and its fresh reports satisfy
    Lemma~\ref{lem:djw-randomizer} conditionally.  There are at most $Q\le m$
    batches and at most $2m$ released coordinates per batch.  Applying the
    conditional tail bound and a union bound over these at most $2m^2$
    coordinates shows that, with probability at least $1-\gamma$, every
    plurality-score estimate used by the algorithm has error at most
    \[
        \lambda=C_{\mathrm{DJW}}\kappa_{\eps/Q}
        \sqrt{2mn\log(4m^2/\gamma)}.
    \]
    Consequently, with
    \[
        \hat{D}(a,b)
        =\tscoreplu(\bpi|_{\{a,b\}},a)
        -\tscoreplu(\bpi|_{\{a,b\}},b),
    \]
    we have
    \[
        |\hat{D}(a,b) - D_{\bpi}(a,b)| \leq 2\lambda.
    \]
    
    We argue by contraposition.  If the algorithm changes the outcome, then
    either a true Condorcet winner $w$ loses one noisy head-to-head comparison, or
    a profile with no Condorcet winner produces a candidate $c$ that wins all
    noisy head-to-head comparisons.  In the first case,
    $\min_{d\in\cC\setminus\{w\}}D_{\bpi}(w,d)\le2\lambda$.  In the second,
    $\max_{a\in\cC\setminus\{c\}}D_{\bpi}(a,c)\le2\lambda$.

    Lemma~\ref{lem:margin-condorcet} then bounds the margin of victory by
    \[
        1+3\log_2 m
        +6C_{\mathrm{DJW}}\kappa_{\eps/Q}
        \sqrt{2mn\log\!\left(\frac{4m^2}{\gamma}\right)}.
    \]
\end{proof}

\subsection{Interactive lower bound}

\begin{proof}[Proof of Theorem~\ref{thm:interactive-plurality-lower}]
Suppose for contradiction that an interactive protocol is
$\bigl(C\min\{n,\sqrt n/\varepsilon\},\gamma\bigr)$-useful, where the universal
constant $C>0$ will be fixed below.
The fully interactive simple-hypothesis-testing lower bound
of~\citet[Theorem~5.3]{JosephMNR19} implies that there is a universal
$c_0>0$ such that distinguishing two distributions at total-variation
distance $\alpha$ with error at most $1/3$ requires at least
$c_0/(\varepsilon^2\alpha^2)$ users, for $0<\varepsilon\le1$.

First suppose $n\varepsilon^2<c_0/2$.  Take $P_+$ and $P_-$ to be the point
masses on the rankings $c_1\succ c_2$ and $c_2\succ c_1$, respectively.
The resulting profiles are unanimous and have margin at least $n/2$.  A
$\bigl(C\min\{n,\sqrt n/\varepsilon\},\gamma\bigr)$-useful protocol, for
$C\le1/2$ and $\gamma\le1/6$, would
distinguish $P_+^{\otimes n}$ from $P_-^{\otimes n}$ with error at most $1/6$,
contradicting the testing lower bound with $\alpha=1$.

Now suppose $n\varepsilon^2\ge c_0/2$.  Let
\[
    \alpha=\frac{a}{\varepsilon\sqrt n},
\]
where $a>0$ is a sufficiently small universal constant, and let $P_+$ and
$P_-$ be the distributions on the two rankings defined by
\begin{align*}
    P_+(c_1\succ c_2)&=P_-(c_2\succ c_1)=\frac{1+\alpha}{2}, \\
    P_+(c_2\succ c_1)&=P_-(c_1\succ c_2)=\frac{1-\alpha}{2}.
\end{align*}
Their encoded means are $\alpha$ and $-\alpha$.  Choosing
$a\le\sqrt{c_0/8}$ ensures $\alpha\le1/2$ and also makes
$n<c_0/(\varepsilon^2\alpha^2)$, so no fully interactive
$\varepsilon$-LDP test distinguishes the two hypotheses with error at most
$1/3$.

Choose the universal constant $\varepsilon_0>0$ sufficiently small that
\[
    \exp\!\left(-\frac{a^2}{8\varepsilon_0^2}\right)\le\frac1{12}.
\]
For $0<\varepsilon\le\varepsilon_0$, Hoeffding's inequality shows that, under
either hypothesis, with probability at least $11/12$ the Plurality winner
agrees with the sign of the mean and the score gap is at least
$n\alpha/2$.  The margin of victory is then at least
$n\alpha/4=a\sqrt n/(4\varepsilon)$.  Since
$\min\{n,\sqrt n/\varepsilon\}\le\sqrt n/\varepsilon$, taking $C\le a/4$
makes this margin at least the assumed usefulness threshold.  The protocol
would therefore yield a test with error at most $1/12+1/6=1/4$, again a
contradiction.

Taking $C\le\min\{1/2,a/4\}$ proves the two regimes simultaneously.
\end{proof}

\end{document}